\theoremstyle{plain}
\newtheorem{thm}{\protect\theoremname}
\theoremstyle{remark}
\newtheorem*{rem*}{\protect\remarkname}
\theoremstyle{plain}
\newtheorem{lem}{\protect\lemmaname}
\newcolumntype{C}[1]{>{\centering\arraybackslash}p{#1}}
\newcolumntype{J}[1]{>{\justify\arraybackslash}p{#1}}
\newcolumntype{R}[1]{>{\RaggedLeft\arraybackslash}p{#1}}
\newcolumntype{Q}[1]{>{\columncolor{Gray}\RaggedLeft\arraybackslash}p{#1}}
\newcolumntype{L}[1]{>{\RaggedRight\arraybackslash}p{#1}}
\newcolumntype{G}{@{\extracolsep{0.5cm}}l@{\extracolsep{0pt}}}%
\newcolumntype{P}[1]{>{\centering\arraybackslash}p{#1}}
\newcolumntype{Y}{>{\centering\arraybackslash}X}
\newcommand{\nhphantom}[1]{\sbox0{#1}\hspace{-\the\wd0}} 
\renewcommand*{\fps@figure}{htb}
\providecommand{\lemmaname}{Lemma}
\providecommand{\remarkname}{Remark}
\providecommand{\theoremname}{Theorem}
\begin{document}
\title{Dynamic Factor Correlation Model\thanks{The paper is based on results from the paper ``Convolution-$t$ distributions'',
which was presented at the CIREQ-CMP Econometrics Conference in Honor
of Eric Ghysels, Montreal, 2024.}{\normalsize\emph{\medskip{}
}}}
\author{\textbf{Chen Tong}$^{\ddagger}$$\quad$\textbf{ }and$\quad$\textbf{Peter
Reinhard Hansen}$^{\mathsection}$\thanks{Corresponding author: Peter Reinhard Hansen. Email: hansen@unc.edu.
Chen Tong acknowledges financial support from the Youth Fund of the
National Natural Science Foundation of China (72301227), and the Ministry
of Education of China, Humanities and Social Sciences Youth Fund (22YJC790117).}\bigskip{}
 \\
\\
 {\normalsize$^{\ddagger}$}{\normalsize\emph{Department of Finance,
School of Economics, Xiamen University}}\\
$^{\mathsection}${\normalsize\emph{Department of Economics, University
of North Carolina at Chapel Hill\medskip{}
 }}}
\date{{\normalsize\emph{\today}}}
\maketitle
\begin{abstract}
We introduce a new dynamic factor correlation model with a novel variation-free
parametrization of factor loadings. The model is applicable to high
dimensions and can accommodate time-varying correlations, heterogeneous
heavy-tailed distributions, and dependent idiosyncratic shocks, such
as those observed in returns on stocks in the same subindustry. We
apply the model to a ``small universe'' with 12 asset returns and
to a ``large universe'' with 323 asset returns. The former facilitates
a comprehensive empirical analysis and comparisons and the latter
demonstrates the flexibility and scalability of the model.

\bigskip{}
\end{abstract}
{\small\textit{{\noindent}Keywords:}}{\small{} Factor Structure, Multivariate
GARCH, Block Correlation Matrix, Heavy Tailed Distributions}{\small\par}

\noindent{\small\textit{{\noindent}JEL Classification:}}{\small{}
C32, C38, C55, C58}{\small\par}

\clearpage{}

\section{Introduction}

In this paper, we propose a flexible factor-correlation model that
can be scaled to high dimensions. The model relies on univariate volatility
models to define the time series of standardized returns, $Z_{1,t},,\ldots,Z_{n,t}$,
and factor variable, $F_{1,t},\ldots,F_{r,t}$, and a multivariate
GARCH model to capture the correlation structure among the factor
variables, $F_{t}$. Our main contributions are the dynamic modeling
of the factor loadings and the idiosyncratic correlation matrix with
various structures. A key theoretical contribution is a variation-free
parametrization of factor loading, which simplifies aspects of their
modeling. 

Univariate GARCH models have proven highly successful in capturing
heteroscedasticity in individual return series since their introduction
in \citet{Engle:1982} and the refinements in \citet{bollerslev:86}.
However, extending these models to the multivariate setting has been
far less straightforward. A key challenge lies in preserving the positive
definiteness of the covariance matrix in a natural and elegant manner.
The number of covariance/correlation increases with the square of
the dimension, and some structure is needed to make estimation feasible
with high dimensional systems. Many multivariate GARCH formulations
impose structure through parameter restrictions, that can be overly
restrictive in their functional form, limiting their flexibility in
capturing complex dependencies. This has led to a wide range of competing
models, each addressing particular aspects of these challenges without
providing a fully satisfactory generalization. Popular multivariate
GARCH models, include the BEKK model\footnote{BEKK is named after Baba, Engle, Kraft, and Kroner}
by \citet{EngleKroner:1995} and the Dynamic Conditional Correlation
model by \citet{Engle2002}, which both have restrictive dynamic structures,
especially in high dimensional setting.

Our paper relates to a very large body of literature including the
studies on factor models. Factors are widely used in finance, and
some use a \textit{latent} factor structure, such as those by \citet[2018, 2023]{OhPatton2017JBES}\nocite{OhPatton2018}\nocite{OhPatton2023},
\citet{CrealTsay2015}, and \citet{OpschoorLucasBarraVanDick:2021}
that  use copula (latent) factor structure to model the dynamics of
correlations, while assuming idiosyncratic shocks to be uncorrelated.
Another strand of literature leverages \textit{observed} factors,
with some models relying on realized measures from high-frequency
data to capture dynamic factor loadings. For instance, the Realized-Beta
GARCH model of \citet{HansenLundeVoev:2014} is a one-factor model
that utilizes realized correlations (between individual assets and
the market return). They also document a strong dependence between
idiosyncratic shocks, especially for stocks in the same sector. A
related model is the Factor HEAVY model by \citet{SheppardXu:2019},
which also rely on realized measure (realized betas) for updating
factor loadings. Other studies, such as \citet{Engle:2016} and \citet{Darolles:2018},
concentrate on modeling time-varying betas without accounting for
the idiosyncratic correlations.

The block correlation structure introduced by \citet{EngleKelly:2012}
is a valuable contribution to the literature on multivariate volatility
models, because it offers a parsimonious approach to modeling large-dimensional
covariance matrices while preserving interpretability. The structure
in \citet{EngleKelly:2012} does reduce the number of parameters,
but it does not inherently guarantee positive semi-definiteness (PSD)
and is cumbersome in situations with more than two blocks. This issue
was later resolved by \citet{ArchakovHansen:CanonicalBlockMatrix},
who developed a method to ensure positive definite block correlation
matrices for any number of blocks. This laid the foundation for the
multivariate Realized GARCH model proposed by \citet{ArchakovHansenLundeMRG},
which integrates realized measures of volatility within a coherent
multivariate GARCH framework. In related work, \citet{TongHansenArchakov:2024}
introduced a score-driven multivariate GARCH model based on convolution-$t$
distributions of \citet{HansenTong:2024}. This class of distributions
has greater flexibility for capturing complex nonlinear dependencies,
and can accommodate heterogeneous heavy-tailness and cluster structures
in tail dependencies, which were limitations of traditional specifications
with Gaussian and multivariate $t$-distributions. 

In this paper, we will also adopt convolution-$t$ distributions.
However, our dynamic model of the correlation structure is entirely
different from that in \citet{TongHansenArchakov:2024}. For instance,
we include observable factors into the modeling and specify dynamic
models for the corresponding factor loadings. Another key difference
is that \citet{TongHansenArchakov:2024} impose block structures on
the correlation matrix for returns, whereas we impose block structures
on the idiosyncratic correlation matrix, which enable us to adopt
sparse block correlation matrices. The structure of the idiosyncratic
correlation matrix plays a critical role in our modeling, and its
sparse nature facilitates implementation in very high dimensions.
Block correlation structures in idiosyncratic asset returns (related
to industry sectors) is well documented in the empirical studies,
including \citet{FanFurgerXiu:2016}, \citet{YacineXiu2017}, and
\citet{AndreouGagliardiniGhyselsRubin:2024}, \citet{Bodilsen:2024},
and \citet{HansenLundeVoev:2014}. Most of the earlier literature
involving dynamic factor models assume idiosyncratic asset returns
to be uncorrelated.

We largely rely on the existing literature to model the univariate
return series, as well as the dynamic correlation structure of factor
variables. The contributions of this paper is mainly relates to the
part of the model named the \emph{Core Correlation Model}. This part
has dynamic factor loadings and dynamic idiosyncratic correlation
matrix, and it has a structure that makes it scalable to high dimensions.
A key component in this part of the model is a novel variation-free
parametrization of the dynamic factor loadings, which is inspired
by the generalized Fisher transformation (GFT) of correlation matrices
by \citet{ArchakovHansen:Correlation}. We rely heavily on the score
driven framework by \citet{CrealKoopmanLucas:2011} to specify dynamic
models of factor loadings and various correlation matrices, and in
this context we utilizes Tikhonov regularized Moore-Penrose inverse
to ensure stable updating in the score-driven model. We draw heavily
on \citet{ArchakovHansen:CanonicalBlockMatrix} to formulate block
correlation matrices and sparse versions of these. Importantly, the
model is scalable to high dimensions, owing in part to a decoupled
estimation method of the core correlation model.

We apply the new model to 17 years of daily stock returns. A \emph{small
universe} with $n=12$ assets and a \emph{large universe} with $n=323$
assets. The small universe facilitates model comparisons under different
specifications and estimation methods, which is not possible in high
dimension, such as that of the large universe. As factor variables
we adopt the six cross-sectional factors, known as the Fama-French
five factors (FF5) and the momentum factor, and we include sector-specific
factors that are based on exchange traded funds (ETFs) for each of
the sectors. A sample correlation matrix motivates the use of subindustries
to define the block structure in the idiosyncratic correlation matrix. 

The empirical results are very encouraging. Applying the model to
the large universe with $n=323$ stocks and 63 subindustries poses
no obstacles with decoupled estimation. We find strong empirical support
for the specifications with convolution-$t$ distributions that outperform
the conventional multivariate $t$-distribution (and the Gaussian).
This is true for both the small universe and the large universe, and
holds in-sample as well as out-of-sample. The out-of-sample comparisons
favors a sparse block correlation structure, but also shows that aside
from correlations within subindustries there are often non-trivial
correlations between stocks in different subindustries, but predominantly
for stocks in the same sector. 

The paper is organized as follows. In Section \ref{sec:NewPaBlockCorr},
we introduce the new factor correlation model, which features a novel
variation-free parametrization of factor loadings and a (sparse) block
idiosyncratic correlation matrix. In Section \ref{sec:Distributions},
we present details about the convolution-$t$ distributions and the
particular variants we use in the empirical analysis. In Section \ref{sec:Joint-Parameterization},
we develop the score-driven dynamic models for two estimation methods
(joint and decoupled), and provide analytical expressions for the
score and information matrix across a range of convolution-$t$ distributions
and structures for the idiosyncratic correlation matrices. We present
the empirical analysis in Section \ref{sec:EmpiricalAnalysis} and
conclude in Section \ref{sec:Conclusion}. All proofs are provided
in the Appendix.

\section{The Factor Correlation Model\label{sec:NewPaBlockCorr}}

Let $\{\mathcal{F}_{t}\}$ be a filtration and $R_{t}$ is an $n$-dimensional
return vector adapted to $\mathcal{F}_{t}$. We denote the conditional
mean and the conditional covariance matrix by $\mu_{t}=\mathbb{E}(R_{t}|\mathcal{F}_{t-1})$
and $\Sigma_{t}=\mathrm{var}(R_{t}|\mathcal{F}_{t-1})$, respectively.
We use the diagonal elements of the latter, $\sigma_{it}^{2}=\mathrm{var}(R_{it}|\mathcal{F}_{t-1})$,
$i=1,\ldots,n$, to define the diagonal matrix of conditional volatilities,
$\Lambda_{\sigma_{t}}\equiv\mathrm{diag}(\sigma_{1t},\ldots,\sigma_{nt})$,
such that the conditional correlation matrix of $R_{t}$ is given
by $C_{t}=\Lambda_{\sigma_{t}}^{-1}\Sigma_{t}\Lambda_{\sigma_{t}}^{-1}$.

We adopt the spirit of the Dynamic Conditional Correlation (DCC) model
by \citet{Engle2002} that models the conditional variances and conditional
correlations separately. Each of the $n$ univariate return series
is modeled with a univariate GARCH model (we use EGARCH models in
empirical analysis). From the resulting conditional moments, $\mu_{it}$
and $\sigma_{it}$, we define the vector of standardized returns,
$Z_{t}=\Lambda_{\sigma_{t}}^{-1}(R_{t}-\mu_{t})\sim(0,C_{t})$. Similarly,
we define the standardized factor variables, $F_{jt}=\sigma_{f_{j},t}^{-1}(R_{f_{j},t}-\mu_{f_{j}})$
for $j=1,\ldots,r$, such that $F_{t}\sim(0,C_{F,t})$.

We have little to add to the large existing literature on univariate
GARCH models. We will therefore take the univariate GARCH models as
given, and treat $Z_{t}$ and $F_{t}$ as the observed data. Our focus
is on the modeling of the conditional correlation matrix, $C_{t}=\operatorname{var}_{t-1}(Z_{t})$
using the observed factor variables $F_{t}$. 

A central component of our model is a factor structure,
\begin{equation}
Z_{it}=\beta_{it}^{\prime}F_{t}+\omega_{it}e_{it},\quad e_{it}\sim(0,1)\label{eq:factorZonF}
\end{equation}
for $i=1,\ldots,n$ and $t=1,\ldots,T$, where $e_{it}$ is an idiosyncratic
shock and $\omega_{it}^{2}$ is the proportion of variance attributed
to the idiosyncratic component. To simplify the notation, we suppress
the dependence on $t$ in most of Sections \ref{sec:NewPaBlockCorr}
and \ref{sec:Distributions}, such that (\ref{eq:factorZonF}) is
expressed as $Z_{i}=\beta_{i}^{\prime}F+\omega_{i}e_{i}$. We will
reintroduce subscript-$t$ again once the dynamic model is introduced.

From the standardized factor variables, $F\sim\left(0,C_{F}\right)$,
we define the uncorrelated factor variables, $U=C_{F}^{-1/2}F\sim(0,I_{r})$,
where $C_{F}^{1/2}$ denotes the symmetric square-root of $C_{F}$.\footnote{An attractive feature of $U=C_{F}^{-1/2}F$ is that it maximizes the
average correlation between $U_{j}$ and $F_{j}$, $j=1,\ldots,r$,
which helps interpret the results based on $U$. There are other ways
to define uncorrelated factor variables from $F$, such as those based
on Cholesky decompositions. All choices are equivalent in terms of
the implied factor loadings on $F$, $\beta_{i}$, $i=1,\ldots,n$. } This enables us to rewrite (\ref{eq:factorZonF}) as
\begin{equation}
Z_{i}=\rho_{i}^{\prime}U+\omega_{i}e_{i},\quad e_{i}\sim\left(0,1\right)\label{eq:factorZonU}
\end{equation}
where the elements of vector $\rho_{i}=C_{F}^{1/2}\beta_{i}$ are
simply the correlation coefficients, as we have $\rho_{i}=[{\rm corr}(Z_{i},U_{1}),\ldots,{\rm corr}(Z_{i},U_{r})]^{\prime}$.
It now follows that $\omega_{i}=\sqrt{1-\rho_{i}^{\prime}\rho_{i}}$.
In matrix form the model can be expressed as
\begin{equation}
Z=\boldsymbol{\rho}^{\prime}U+\Lambda_{\omega}e,\quad e\sim\left(0,C_{e}\right)\label{eq:MultiFactorModel}
\end{equation}
where $\boldsymbol{\rho}=[\rho_{1},\ldots,\rho_{n}]\in\mathbb{R}^{r\times n}$,
$\Lambda_{\omega}={\rm diag}\left(\omega_{1},\ldots,\omega_{n}\right)\in\mathbb{R}^{n\times n}$,
and $e$ is the vector of idiosyncratic shocks. We will not require
$e$ to be cross-sectionally uncorrelated. However, we will introduce
parsimonious and sparse structures on $C_{e}$. It follows that the
conditional correlation matrix for returns is given by,
\[
C=\boldsymbol{\rho}^{\prime}\boldsymbol{\rho}+\Lambda_{\omega}C_{e}\Lambda_{\omega},
\]
which is a generalization of \citet{HansenLundeVoev:2014} who focus
on a single factor ($r=1$).

The number of factors, $r$, is typically small relative to the number
of assets $n$, which makes it straightforward formulate a dynamic
model for the conditional correlation matrix of the observed factors
$F$, $C_{F}$. We adopt the score-driven multivariate GARCH model
by \citet{TongHansenArchakov:2024} for this purpose. 

The conditional model of $Z$ given $U$ is the central component
of the proposed model, and we will refer to this as the \emph{Core
Correlation Model}. The key parameters in the core correlation model
are the factor loading parameters, $\boldsymbol{\rho}\in\mathbb{R}^{r\times n}$,
and the correlation matrix for the idiosyncratic shocks, $C_{e}$.
(The scaling matrix, $\Lambda_{\omega}$, is a function of $\boldsymbol{\rho}$).
The main obstacle to a dynamic model of factor loadings is the requirement:
$\rho_{i}^{\prime}\rho_{i}<1$ for $i=1,2,\ldots,n$. We resolve this
by introducing a novel and mathematically elegant reparametrization
of $\rho_{i}$, $i=1,\ldots,n$, which is inspired by the generalized
Fisher transformation of correlation matrices, see \citet{ArchakovHansen:Correlation}.
An overview of the model structure is illustrated in Figure \ref{fig:ModelStructure},
which has many additional details related to distributions and estimation
method that will be explained later in this paper.

\subsection{A Novel Parametrization of Factor Loadings\label{sec:Parametrization-of-Correlation}}

The correlation structure in the factor correlation model, (\ref{eq:MultiFactorModel}),
is fully characterized by $\rho_{1},\ldots,\rho_{n}$ and $C_{e}$,
because $\omega_{i}=\sqrt{1-\rho_{i}^{\prime}\rho_{i}}$. This parametrization
must satisfy $\rho_{i}^{\prime}\rho_{i}<1$ for all $i$. An alternative,
variation-free parametrization of vector $\rho_{i}$ is the following.
\begin{thm}
\label{thm:CorrTau}Let the correlation structure of $Z\in\mathbb{R}^{n}$
be given by (\ref{eq:factorZonU}), where $\rho_{i}^{\prime}\rho_{i}<1$
for all $i=1,\ldots,n$. Then 
\begin{equation}
\tau_{i}=\tfrac{\operatorname{artanh}(\sqrt{\rho_{i}^{\prime}\rho_{i}})}{\sqrt{\rho_{i}^{\prime}\rho_{i}}}\times\rho_{i}\in\mathbb{R}^{r},\qquad i=1,\ldots,n,\label{eq:Tau_rho}
\end{equation}
is a variation-free parametrization of $\rho_{i}$ with domain $\tau=(\tau_{1}^{\prime},\ldots,\tau_{n}^{\prime})^{\prime}\in\mathbb{R}^{rn}$. 

The inverse mapping and its Jacobian matrix are given by 
\[
\rho_{i}=\tfrac{\tanh(\sqrt{\tau_{i}^{\prime}\tau_{i}})}{\sqrt{\tau_{i}^{\prime}\tau_{i}}}\times\tau_{i}\qquad i=1,\ldots,n.
\]
and $J(\tau_{i})\equiv\tfrac{\partial\rho_{i}}{\partial\tau_{i}^{\prime}}=\sqrt{\tfrac{\rho_{i}^{\prime}\rho_{i}}{\tau_{i}^{\prime}\tau_{i}}}P_{\tau_{i}}^{\bot}+(1-\rho_{i}^{\prime}\rho_{i})P_{\tau_{i}}$,
respectively, where $P_{\tau_{i}}=\tau_{i}(\tau_{i}^{\prime}\tau_{i})^{-1}\tau_{i}^{\prime}$
and $P_{\tau_{i}}^{\bot}=I_{r}-P_{\tau_{i}}$ are orthogonal projection
matrices. 
\end{thm}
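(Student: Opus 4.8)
The plan is to exploit the fact that the transformation in (\ref{eq:Tau_rho}) is purely \emph{radial}: it leaves the direction of $\rho_i$ unchanged and rescales only its length. Writing $\|\rho_i\|=\sqrt{\rho_i'\rho_i}$ and letting $\hat{\rho}_i=\rho_i/\|\rho_i\|$ be the associated unit vector, (\ref{eq:Tau_rho}) reads $\tau_i=\operatorname{artanh}(\|\rho_i\|)\,\hat{\rho}_i$, so $\tau_i$ is parallel to $\rho_i$ with length $\|\tau_i\|=\operatorname{artanh}(\|\rho_i\|)$. This reduces the whole problem to the scalar map $x\mapsto\operatorname{artanh}(x)$ acting on the magnitude, together with the identity map on the direction.

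First I would settle the variation-free claim and the inverse. Since $\operatorname{artanh}$ is a strictly increasing bijection from $[0,1)$ onto $[0,\infty)$, the length $\|\rho_i\|$ ranging over $[0,1)$ corresponds bijectively to $\|\tau_i\|$ ranging over $[0,\infty)$, while the direction $\hat{\rho}_i$ remains free on the unit sphere; hence each $\tau_i$ ranges over all of $\mathbb{R}^r$ with no constraint, and because the $n$ maps act independently, $\tau\in\mathbb{R}^{rn}$ is unrestricted. Inverting the length relation gives $\|\rho_i\|=\tanh(\|\tau_i\|)$, and restoring the common direction yields $\rho_i=\tanh(\|\tau_i\|)\,\hat{\tau}_i=\tfrac{\tanh(\sqrt{\tau_i'\tau_i})}{\sqrt{\tau_i'\tau_i}}\tau_i$, as stated. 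The origin needs separate care: at $\rho_i=0$ both prefactors are of the indeterminate form $0/0$, but since $\operatorname{artanh}(x)/x\to1$ and $\tanh(x)/x\to1$ as $x\to0$, both maps extend continuously with $\tau_i=0\leftrightarrow\rho_i=0$.

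For the Jacobian I would work from the inverse map. Writing $s=\sqrt{\tau_i'\tau_i}$ and $g(s)=\tanh(s)/s$, we have $\rho_i=g(s)\tau_i$, and differentiating with the chain rule $\partial s/\partial\tau_i'=\tau_i'/s$ gives
\[
J(\tau_i)=g(s)I_r+\tfrac{g'(s)}{s}\tau_i\tau_i'=g(s)I_r+s\,g'(s)P_{\tau_i},
\]
using $\tau_i\tau_i'=s^2P_{\tau_i}$. Splitting $I_r=P_{\tau_i}^{\bot}+P_{\tau_i}$ separates the two projection components: the coefficient on $P_{\tau_i}^{\bot}$ is $g(s)=\tanh(s)/s=\|\rho_i\|/\|\tau_i\|=\sqrt{\tfrac{\rho_i'\rho_i}{\tau_i'\tau_i}}$, and the coefficient on $P_{\tau_i}$ is $g(s)+s\,g'(s)$. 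The latter is the key simplification: recognizing $g(s)+s\,g'(s)=\tfrac{d}{ds}[s\,g(s)]=\tfrac{d}{ds}\tanh(s)=\operatorname{sech}^2(s)=1-\tanh^2(s)=1-\rho_i'\rho_i$ delivers exactly the stated form.

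All computations are elementary, so the only genuine obstacle is bookkeeping: one must (i) treat the removable singularity at the origin carefully to confirm the bijection is well defined and continuous there, and (ii) spot the telescoping identity $g(s)+s\,g'(s)=(s\,g(s))'=(\tanh s)'$ that collapses the radial derivative to $1-\rho_i'\rho_i$. Everything else follows from the radial reduction and the standard properties of $\tanh$ and $\operatorname{artanh}$.
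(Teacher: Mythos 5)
Your proof is correct, and it takes a genuinely more direct route than the paper's. The paper's proof is organized around the matrix logarithm: it expands $\log C_i^{\star}=\log(I+H)$ as a power series, exploits the block structure of $H^{2k}$ and $H^{2k-1}$ to identify the off-diagonal block of $\log C_i^{\star}$ with $\tau_i$ as defined in (\ref{eq:Tau_rho}), and only then extracts the Fisher-transform relation $\sqrt{\tau_i^{\prime}\tau_i}=\operatorname{artanh}(\sqrt{\rho_i^{\prime}\rho_i})$, the inverse map, and the Jacobian. You bypass the matrix-logarithm derivation entirely and work with the radial structure of the map from the outset; your Jacobian computation (product rule on $\rho_i=g(s)\tau_i$ with the telescoping identity $g(s)+s\,g^{\prime}(s)=(\tanh s)^{\prime}=1-\rho_i^{\prime}\rho_i$) is in substance the same calculation the paper performs via $\partial\sqrt{\rho_i^{\prime}\rho_i}/\partial\sqrt{\tau_i^{\prime}\tau_i}=1-\tanh^{2}(\sqrt{\tau_i^{\prime}\tau_i})$, just packaged more cleanly. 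What the paper's route buys is the conceptual content emphasized in the main text: it proves that $\tau_i$ \emph{is} the off-diagonal block of $\log C_i^{\star}$, grounding the parametrization in the generalized Fisher transformation of \citet{ArchakovHansen:Correlation}. What your route buys is brevity and completeness on the actual claims of the theorem: you explicitly establish bijectivity (artanh as a bijection $[0,1)\to[0,\infty)$ on lengths, direction free on the sphere, independence across $i$) and handle the removable singularity at the origin, two points the paper's proof leaves implicit.
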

The unrestricted $\tau$-parametrization in Theorem \ref{thm:CorrTau}
is inspired by the new parametrization of correlation matrices in
\citet{ArchakovHansen:Correlation}. We apply the matrix logarithm
to
\begin{equation}
C_{i}^{\star}={\rm corr}\left(\begin{array}{c}
Z_{i}\\
U
\end{array}\right)=\left[\begin{array}{cc}
1 & \rho_{i}^{\prime}\\
\rho_{i} & I_{r}
\end{array}\right],\label{eq:NewC}
\end{equation}
which is given by 
\begin{align*}
\log C_{i}^{\star} & =\frac{1}{2}\left[\begin{array}{cc}
\log\left(1-\rho_{i}^{\prime}\rho_{i}\right) & \frac{1}{\sqrt{\rho_{i}^{\prime}\rho_{i}}}\ensuremath{\log\left(\frac{1+\sqrt{\rho_{i}^{\prime}\rho_{i}}}{1-\sqrt{\rho_{i}^{\prime}\rho_{i}}}\right)}\rho_{i}^{\prime}\\
\frac{1}{\sqrt{\rho_{i}^{\prime}\rho_{i}}}\ensuremath{\log\left(\tfrac{1+\sqrt{\rho_{i}^{\prime}\rho_{i}}}{1-\sqrt{\rho_{i}^{\prime}\rho_{i}}}\right)}\rho_{i} & \frac{1}{\rho_{i}^{\prime}\rho_{i}}\log\left(1-\rho_{i}^{\prime}\rho_{i}\right)\left(\rho_{i}\rho_{i}^{\prime}\right)
\end{array}\right],
\end{align*}
as shown in Appendix \ref{sec:NewParaCorr}. Here we use the convention
$\log C^{\star}=0$ if $\rho_{i}=0\in\mathbb{R}^{r}$. An interesting
observation is that $\operatorname{arctanh}(\sqrt{\rho_{i}^{\prime}\rho_{i}})=\frac{1}{2}\log\left(\tfrac{1+\sqrt{\rho_{i}^{\prime}\rho_{i}}}{1-\sqrt{\rho_{i}^{\prime}\rho_{i}}}\right)$
is the Fisher transformation of $\sqrt{\rho_{i}^{\prime}\rho_{i}}$
(the root of sum of squares factor loadings).

The $\tau$-parametrization in Theorem \ref{thm:CorrTau} is useful
for several reasons. First, it makes it easy to impose sparsity and
other structure on $C_{i}^{\star}$ because $\rho_{i}$ is proportional
to $\tau_{i}$, such that $\rho_{i,j}=0\Leftrightarrow\tau_{i,j}=0$
and $\rho_{i,j}=\rho_{i,j^{\prime}}\Leftrightarrow\tau_{i,j}=\tau_{i,j^{\prime}}$.
Moreover, we also have $\rho_{i}=\rho_{i^{\prime}}\Leftrightarrow\tau_{i}=\tau_{i^{\prime}}$
which shows that two assets have identical factor loadings if and
only if the corresponding $\tau$-vectors are identical. This makes
it easy to impose group structures on the factor loadings. That the
Jacobian, $J$, is symmetric and easy to compute will also be convenient
in the dynamic score-driven model of $\rho_{i}$. 
\begin{rem*}[Notation of Factor Parametrizations]
We use $\boldsymbol{\rho}=(\rho_{1},\ldots,\rho_{n})$ to denote
the $r\times n$ matrix of factor loadings and let $\rho=\operatorname{vec}(\boldsymbol{\rho})=(\rho_{1}^{\prime},\ldots,\rho_{n}^{\prime})^{\prime}\in\mathbb{R}^{rn}$
be the vector with stacked factor loadings. Similarly, we use $\tau=(\tau_{1}^{\prime},\ldots,\tau_{n}^{\prime})^{\prime}\in\mathbb{R}^{rn}$.
\end{rem*}

\subsection{Idiosyncratic Correlation Matrix\label{sec:ParaBlock}}

The $n\times n$ correlation matrix for the idiosyncratic component,
$C_{e}={\rm corr}(e)$, has $d=n(n-1)/2$ correlations. A dynamic
model for $C_{e}$ will therefore need some structure to be imposed,
unless $n$ is small. In this context, a simple and convenient structure
is to impose block structures on $C_{e}$, that can be easily combined
with sparsity assumptions. Dynamic block correlation matrices were
introduced by \citet{EngleKelly:2012}, and the canonical representation
of block matrices by \citet{ArchakovHansen:CanonicalBlockMatrix}
made it possible to apply this structure to matrices with more than
$2\times2$ blocks, and greatly simplified estimation and guaranteeing
positive definiteness.

Below we adopt notations from \citet{TongHansenArchakov:2024}, but
should emphasize that our model structure is very different from that
in \citet{TongHansenArchakov:2024}, because they do not incorporate
observable factors. They impose the block structures directly on $C={\rm corr}(Z)$,
whereas we impose the block structure on in idiosyncratic correlation
matrix, $C_{e}={\rm corr}(e)$.

\subsubsection{Block Correlation Matrices}

A block correlation matrix is defined by partitioning the variables
into $K$ groups, where the correlation between any two variables
depends solely on their group assignments. Let $n_{k}$ denote the
number of variables in the $k$-th group for $k=1,\ldots,K$, such
that $n=\sum_{k=1}^{K}n_{k}$. Define $\boldsymbol{n}=\left(n_{1},n_{2},\ldots,n_{K}\right)^{\prime}$
as the vector of group sizes. We assume that the variables are sorted
such that the first $n_{1}$ variables belong to the first group,
the next $n_{2}$ variables belong to the second group, and so on.

The block structure on $C_{e}$ can be expressed with
\[
\ensuremath{C_{e}=\left[\begin{array}{cccc}
C_{[1,1]} & C_{[1,2]} & \cdots & C_{[1,K]}\\
C_{[2,1]} & C_{[2,2]}\\
\vdots &  & \ddots\\
C_{[K,1]} &  &  & C_{[K,K]}
\end{array}\right]},
\]
where $C_{[k,l]}$ is an $n_{k}\times n_{l}$ matrix given by
\[
C_{[k,l]}=\left[\begin{array}{ccc}
\varrho_{kl} & \cdots & \varrho_{kl}\\
\vdots & \ddots & \vdots\\
\varrho_{kl} & \cdots & \varrho_{kl}
\end{array}\right],\text{ for }k\neq l\qquad\text{and }C_{[k,k]}=\left[\begin{array}{cccc}
1 & \varrho_{kk} & \cdots & \varrho_{kk}\\
\varrho_{kk} & 1 & \ddots\\
\vdots & \ddots & \ddots\\
\varrho_{kk} &  &  & 1
\end{array}\right].
\]
Here we have omitted the subscript-$e$ on the submatrices to simplify
the expositions. Within each block, there is (at most) a single correlation
coefficient, such that the block structure reduces the number of unique
correlations from $d=n\left(n-1\right)/2$ to at most $K\left(K+1\right)/2$.\footnote{This is based on the general case that the number of assets in each
group is at least two. When there are $\tilde{K}\leq K$ clusters
with only one asset, this number become $K\left(K+1\right)/2-\tilde{K}$.} This number does not increase with $n$, if $K$ is held constant,
and this makes it possible to scale the model to high dimensions.

We refer to this as the the block correlations structure or the Full
Block Correlation (FBC) structure. In our empirical analysis we partition
assets by subindustries (8 digit GICS codes) which is used to define
the blocks in $C_{e}$. Sectors (2 digit GICS codes) will later be
used to impose sparsity on $C_{e}$.

\subsubsection{Sparse Block Correlation Matrices\label{subsec:Sparse-Block-Correlation}}

We consider two particular types of sparse block correlation matrices
that are based on a multi-level partitioning of the variables. 

A Sparse Block Correlation (SBC) is based on a second, coarser partitioning
that is induce sparsity on the correlation matrix. We use GICS sectors
to define the coarser partitioning in our empirical analysis. The
Diagonal Block Correlation (DBC) structure imposes additional sparsity
by imposing $C_{[k,l]}=0$ for $k\neq l$. In our empirical analysis,
this will implies that idiosyncratic shocks for stocks in different
subindustries are uncorrelated. 

Examples of the four types of correlation structures are shown in
Figure \ref{fig:BlockStructures}.

\begin{figure}[H]
\centering
\begin{centering}
\thickspace{}\includegraphics[totalheight=0.4\textwidth]{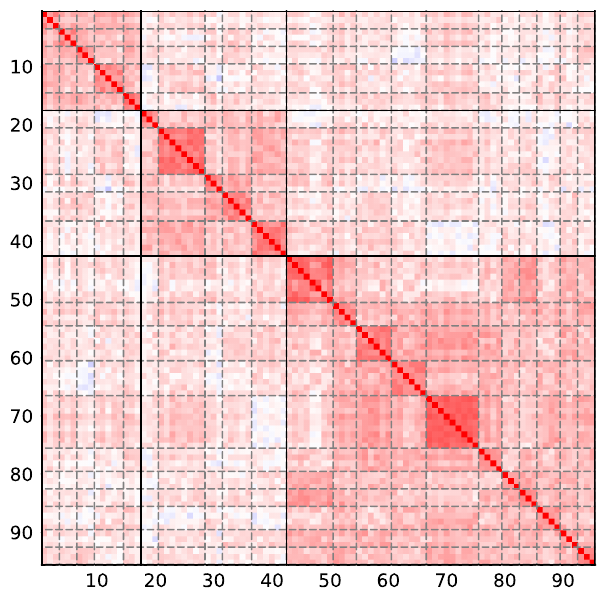}\includegraphics[totalheight=0.405\textwidth]{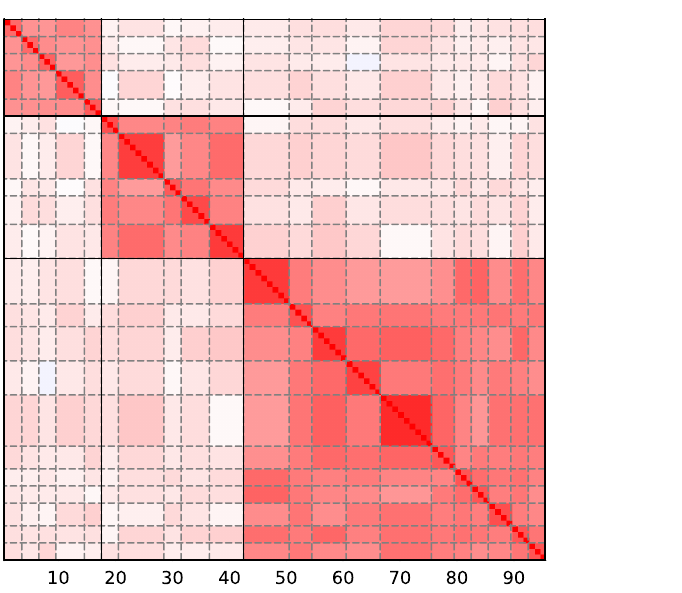}
\par\end{centering}
\begin{centering}
\includegraphics[totalheight=0.395\textwidth]{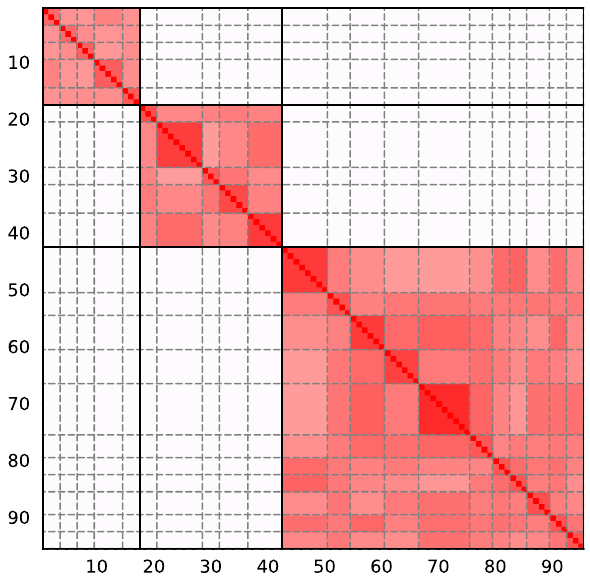}\includegraphics[totalheight=0.4\textwidth]{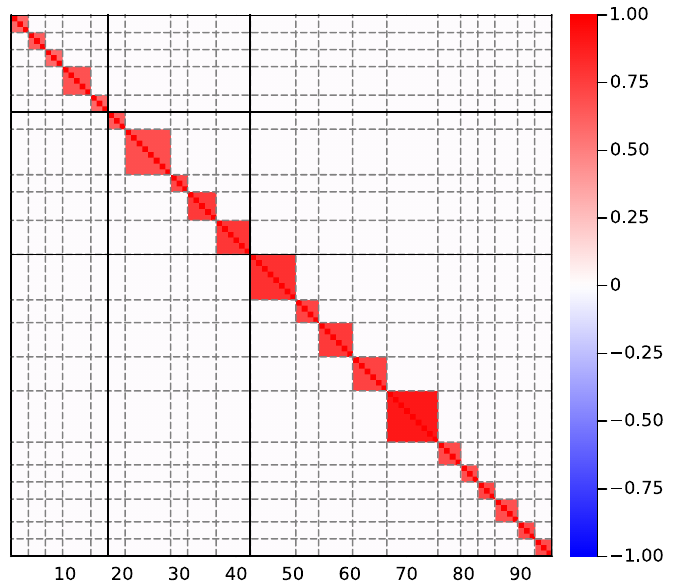}
\par\end{centering}
\caption{Examples of block correlation matrices. Upper-left: Unrestricted Correlation
matrix. Upper-right: Block Correlation matrix with blocks defined
by subindustries. Lower-left: Sparse Block Correlation matrix with
zero correlations between sectors. Lower-right: Diagonal Block Correlation
matrix.\label{fig:BlockStructures}}

\end{figure}

\subsection{Parametrizing the Idiosyncratic Correlation Matrix}

We parameterize the idiosyncratic correlation matrix, $C_{e}$, using
the generalized Fisher transformation by \citet{ArchakovHansen:Correlation},
\[
\gamma(C_{e})\equiv{\rm vecl}\left(\log C_{e}\right)\in\mathbb{R}^{n(n-1)/2}
\]
where $\log C_{e}$ is the matrix logarithm of $C_{e}$ and ${\rm vecl}(\cdot)$
vectorizes the elements in the lower triangle of $C_{e}$ (the elements
below the diagonal).\footnote{One can define the matrix logarithm of a nonsingular correlation matrix,
by $\log C_{e}=Q\log\Lambda Q^{\prime}$, where $C_{e}=Q\Lambda Q^{\prime}$
is the eigendecomposition of $C_{e}$.} The following example illustrates this parametrization for an $3\times3$
correlation matrix:{\small
\[
\gamma\equiv{\rm vecl}\left[\log\left(\begin{array}{ccc}
1.0 & \bullet & \bullet\\
0.7 & 1.0 & \bullet\\
0.4 & 0.6 & 1.0
\end{array}\right)\right]={\rm vecl}\left[\left(\begin{array}{ccc}
-.35 & \bullet & \bullet\\
.825 & -.53 & \bullet\\
.223 & .642 & -.24
\end{array}\right)\right]=\left(\begin{array}{c}
.825\\
.223\\
.642
\end{array}\right).
\]
}This parametrization defines a one-to-one mapping between $\mathbb{R}^{n(n-1)/2}$
and the set of positive definite correlation matrices, see \citet{ArchakovHansen:Correlation}.
The matrix logarithm preserves block structures, as illustrated with,{\small
\[
\ensuremath{\underbrace{\left[\begin{array}{cccc}
\cellcolor{black!5}1.0 & \cellcolor{black!5}0.8 & 0.4 & 0.4\\
\cellcolor{black!5}0.8 & \cellcolor{black!5}1.0 & 0.4 & 0.4\\
0.4 & 0.4 & \cellcolor{black!5}1.0 & \cellcolor{black!5}0.6\\
0.4 & 0.4 & \cellcolor{black!5}0.6 & \cellcolor{black!5}1.0
\end{array}\right]}_{=C_{e}}\quad\Longrightarrow\quad\underbrace{\left[\begin{array}{cccc}
\cellcolor{black!5}-.57 & \cellcolor{black!5}1.02 & .256 & .256\\
\cellcolor{black!5}1.02 & \cellcolor{black!5}-.57 & .256 & .256\\
.256 & .256 & \cellcolor{black!5}-.29 & \cellcolor{black!5}.628\\
.256 & .256 & \cellcolor{black!5}.628 & \cellcolor{black!5}-.29
\end{array}\right]}_{=\log C_{e}}}.
\]
}For a block matrix with $K$ blocks, $\gamma(C_{e})$ will have at
most $(K+1)K/2$ distinct elements, such that we can write $\gamma=B\eta$,
where $B$ is a known bit-matrix and $\eta$ is a subvector of $\gamma$.
In the example above we have,
\[
\gamma=B\eta,\quad B^{\prime}=\left(\begin{array}{cccccc}
1 & 0 & 0 & 0 & 0 & 0\\
0 & 1 & 1 & 1 & 1 & 0\\
0 & 0 & 0 & 0 & 0 & 1
\end{array}\right),\quad\eta=\left(\begin{array}{c}
1.02\\
.256\\
.628
\end{array}\right).
\]
Parametrizing the block correlation matrix, $C_{e}$, with $\eta$
does not impose additional superfluous restrictions, see \citet{TongHansen:2023}.
Thus, any non-singular block correlation matrix corresponds to a unique
$\eta$ vector, and any dynamic block correlation model can be expressed
as a dynamic model for $\eta$.

For the two sparse correlation structures, SBC and DBC, $\log C_{e}$
with have the same sparse structure, with zeroes outside the diagonal
blocks. The dimension of $\eta$ can therefore be reduced substantially.
For the DBC structure with $K$ diagonal blocks, we can use $\eta=(\eta_{1},\ldots,\eta_{K})^{\prime}$,
where
\begin{equation}
\eta_{k}=\tfrac{1}{n_{k}}\log\left(1+n_{k}\tfrac{\varrho_{kk}}{1-\varrho_{kk}}\right),\quad k=1,\ldots,K,\label{eq:gamma_k}
\end{equation}
see \citet[proposition 2]{ArchakovHansen:Correlation}. This follows
from $\eta_{k}=\log C_{[k,k]}$, where each diagonal block, $C_{[k,k]}$,
is an equicorrelation matrix. For later used we observe that inverse
transformation and its Jacobian are given by
\[
\varrho_{kk}=\tfrac{\exp\left(n_{k}\eta_{k}\right)-1}{\exp\left(n_{k}\eta_{k}\right)+n_{k}-1},\quad\text{and}\quad J_{k}=\tfrac{\partial\varrho_{kk}}{\partial\eta_{k}}=\tfrac{1}{\left(1-\varrho_{kk}\right)\left(1+\left(n_{k}-1\right)\varrho_{kk}\right)}.
\]
So, the analysis is greatly simplified with the DBC structure for
$C_{e}$, even in very high-dimensional settings.

For SBC correlation structure, we can apply the results for block
correlation matrices to each of the big diagonal blocks separately,
and combining the $\eta$s for each block to define $\eta$.

\subsubsection{Canonical Form of Idiosyncratic Correlation Matrix\label{subsec:CanBlock}}

To build a score-driven model for block correlation matrix, the conventional
way is to use
\begin{equation}
\frac{\partial\ell}{\partial\eta^{\prime}}=\frac{\partial\ell}{\partial\gamma^{\prime}}\frac{\partial\gamma}{\partial\eta^{\prime}}=\frac{\partial\ell}{\partial\gamma^{\prime}}B\label{eq:ParEqu}
\end{equation}
where $\gamma={\rm vecl}\left(\log C_{e}\right)\in\mathbb{R}^{n(n-1)/2}$
and $\ell$ is the log-likelihood function. Therefore, even if $C_{e}$
has a block structure with unique value $\eta$ with at most $K\left(K+1\right)/2$
elements, the conventional way through (\ref{eq:ParEqu}) also involves
the computation of $n\times n$ matrix, which doesn't take the advantages
of block structure of $C_{e}$. In fact, expect for diagonal block
correlation matrix which can be modeled through unrestricted $\eta_{k}$
for each $k=1,2\ldots,K$, the modeling of a general block correlation
matrix $C_{e}$ including the SBC case, both requires the following
canonical representation of block correlation matrix, such that the
related computation only involves the $K$-dimensional matrix. The
canonical representation of block correlation matrix resembles the
eigendecomposition of matrices, see \citet{ArchakovHansen:CanonicalBlockMatrix}.
For a block correlation matrix with block-sizes, $(n_{1},\ldots,n_{K})$,
we have
\begin{equation}
C_{e}=QDQ^{\prime},\quad D=\left[\begin{array}{cccc}
A & 0 & \cdots & 0\\
0 & \delta_{1}I_{n_{1}-1} & \ddots & \vdots\\
\vdots & \ddots & \ddots & 0\\
0 & \cdots & 0 & \delta_{K}I_{n_{K}-1}
\end{array}\right],\quad\delta_{k}=\frac{n_{k}-A_{kk}}{n_{k}-1}\label{eq:CanForm}
\end{equation}

\noindent where $A\in\mathbb{R}^{K\times K}$ with $A_{kk}=1+\left(n_{k}-1\right)\rho_{kk}$,
and $A_{kl}=\rho_{kl}\sqrt{n_{k}n_{l}}$ for $k\neq l$. Matrix $Q$
is a cluster-specific orthonormal matrix, $Q^{\prime}Q=QQ^{\prime}=I_{n}$,
which is solely determined by the block sizes, $(n_{1},\ldots,n_{K})$,
such that it does not depend on $C_{e}$. Computing the powers of
$C_{e}$, including the matrix inverse, logarithm and the exponential,
is greatly simplified as they only involve the calculations related
to $K\times K$ matrix $A$. From \citet[corollary 2]{ArchakovHansen:CanonicalBlockMatrix},
the unique values in $\gamma\left(C_{e}\right)$, i.e. the elements
in $\eta$, can be expressed as
\begin{equation}
\eta=L_{K}(\Lambda_{n}^{-1}\otimes\Lambda_{n}^{-1}){\rm vec}(W),\label{eq:eta_can}
\end{equation}
where $W=\log A-\log\Lambda_{\delta}$, where $\Lambda_{\delta}=\operatorname{diag}(\delta_{1},\ldots,\delta_{K})$
and $\Lambda_{n}=\operatorname{diag}(\sqrt{n_{1}},\ldots,\sqrt{n_{K}})$
are diagonal matrices. $L_{K}$ is the elimination matrix, that solves
$\mathrm{vech}(A)=L_{k}\mathrm{vec}(A)$, and $\otimes$ is the Kronecker
product. The canonical representation greatly facilitates the evaluation
of likelihood function and the computation of score. \citet{TongHansenArchakov:2024}
build a score driven model for a general block correlation matrix
by utilizing the canonical form (\ref{eq:CanForm}) and (\ref{eq:eta_can}).

\section{Distributions\label{sec:Distributions}}

The class of distributions we consider in our empirical analysis is
detailed in this section. We adopt the family of convolution-$t$
distributions of \citet{HansenTong:2024}. These can accommodate nonlinear
dependencies and heterogeneous marginal distributions. This class
of distributions nests the multivariate $t$-distributions (including
Gaussian distributions) as special cases. A convolution-$t$ distribution
has a relatively simple log-likelihood function, and we adopt this
distribution for both the factor variables, $F\sim\left(0,C_{F}\right)$,
and the idiosyncratic shocks, $e\sim\left(0,C_{e}\right)$. Under
the assumption that $e$ and $F$ are independent, the conditional
distribution of $Z$ given $F$ is also a convolution-$t$ distribution. 

\subsection{The Convolution-$t$ Distribution}

Let $X$ be a random vector with mean, $\mu=\mathbb{E}[X]$, and covariance
matrix, $\Sigma=\operatorname{var}(X)$. We use $X$ to represent
either $F$, $Z|U$ or $e$, and write
\[
X=\mu+\Xi V\sim\mathrm{CT}_{\boldsymbol{m},\boldsymbol{\nu}}^{{\rm std}}(\mu,\Xi),\quad{\rm where}\ \Xi\Xi^{\prime}=\Sigma,
\]
which follows the notation in \citet{HansenTong:2024}. A convolution-$t$
distribution is (aside from location and scale) a rotation of a random
vector, $V=(V_{1}^{\prime},\ldots,V_{G}^{\prime})^{\prime}\in\mathbb{R}^{n}$,
which consists of $G$ mutually independent multivariate $t$-distributions,
where $V_{g}\sim t_{\nu_{g}}^{\mathrm{std}}(0,I_{m_{g}})$ has dimension
$m_{g}$ and degrees of freedom, $\nu_{g}>2$, for $g=1,\ldots,G$
and with $n=\sum_{g=1}^{G}m_{g}$. Here $\boldsymbol{\nu}=(\nu_{1},\ldots,\nu_{G})^{\prime}$
is the vector with degrees of freedom and $\boldsymbol{m}=(m_{1},\ldots,m_{G})^{\prime}$
is the vector with the dimensions for the $G$ multivariate $t$-distributions.
The corresponding log-likelihood function is surprisingly simple.
Set $V=\Xi^{-1}\left(X-\mu\right)$, then
\begin{align}
\ell(X) & =-\log|\Xi|+\sum_{g=1}^{G}c_{g}-\tfrac{\nu_{g}+m_{g}}{2}\log\left(1+\tfrac{1}{\nu_{g}-2}V_{g}^{\prime}V_{g}\right),\label{eq:LogLGenC_StrucT}
\end{align}
where $c_{g}=c(\nu_{g},m_{g})=\log\left(\Gamma\left(\frac{\nu_{g}+m_{g}}{2}\right)/\Gamma\left(\frac{\nu_{g}}{2}\right)\right)-\frac{m_{g}}{2}\log[(\nu_{g}-2)\pi]$,
$g=1,\ldots,G$, are the normalizing constants in the multivariate
$t$-distribution. Note that we previously used a partitioning of
the variables, $n=\left(n_{1},\ldots,n_{K}\right)^{\prime}$, to form
a block correlation structure. The convolution-$t$ distribution involves
a second partitioning that defines the $G$ independent multivariate
$t$-distributions. This is a cluster structure for nonlinear dependences
in the underlying random innovations. The two cluster structures can
be different or identical. 

Next, we highlight five distributional properties of this model. First,
each element of $V_{g}\in\mathbb{R}^{m_{g}}$ follows the same marginal
$t$-distribution with $\nu_{g}$ degrees of freedom. This is not
necessarily true for the elements of $X$, as they are typically unique
convolutions of the $G$ underlying $t$-distributions. Second, the
marginal distributions of $X$ may be time-varying, as they depend
on $\Xi$, which can change over time in the model. Third, the elements
of $X$ have intricate dependencies, arising from the common $t$-distributions
they share. These dependencies induce tail correlations and can include
cluster-specific tail dependencies. Fourth, increasing the number
of $G$-clusters does not always improve the empirical fit. While
a larger $G$ increases the number of degrees-of-freedom parameters,
it also divides $V$ into a larger number of independent subvectors,
which removes the intrinsic dependence between elements of $V$ that
were previously part of the same multivariate $t$-distribution. Fifth,
this model nests the conventional multivariate $t$-distribution as
a special case when $G=1$, facilitating straightforward comparisons
with a natural benchmark model, such as the multivariate Gaussian
distribution.

\subsection{Three Special Convolution-$t$ Distributions}

The convolution-$t$ distributions define a broad class of distributions,
as $V$ can be partitioned in many ways. We will use three particular
types of convolution-$t$ distributions, one being the standard multivariate
$t$-distribution.

\subsubsection{Multivariate-$t$ Distribution (MT)}

The convolution-$t$ distributions nests the multivariate $t$-distribution
as a special case. The $n$-dimensional (standardized) multivariate
$t$-distribution is denoted, $X\sim t_{\nu}^{{\rm std}}(\mu,\Sigma)$,
where $\nu>2$ is the degrees of freedom. The corresponding log-likelihood
is given by
\begin{equation}
\ell(X)=c_{\nu,n}-\tfrac{1}{2}\log|\Sigma|-\tfrac{\nu+n}{2}\log\left(1+\tfrac{1}{\nu-2}\left(X-\mu\right){}^{\prime}\Sigma^{-1}\left(X-\mu\right)\right),\quad\nu>2.\label{eq:LogLstuT}
\end{equation}
As $\nu\rightarrow\infty$, the multivariate $t$-distribution approaches
the multivariate normal distribution, $N(\mu,\Sigma)$. The main advantage
of using the standardized $t$-distribution is that $\mathrm{var}(X)=\Sigma$.
If $X$ is standardized and $\Sigma=C$ has a block structure with
$K$ clusters, we can use the canonical representation in Section
\ref{subsec:CanBlock} of to obtain the following simplified expression,
\[
\ell(X)=c_{\nu,n}-\tfrac{1}{2}\log|A|-\tfrac{1}{2}\sum_{k=1}^{K}\left(n_{k}-1\right)\log\delta_{k}-\tfrac{\nu+n}{2}\log\left[1+\tfrac{1}{\nu-2}\left(Y_{0}^{\prime}A^{-1}Y_{0}+\sum_{k=1}^{K}\tfrac{1}{\delta_{k}}Y_{k}^{\prime}Y_{k}\right)\right].
\]
where $Y=Q^{\prime}\left(X-\mu\right)$. The multivariate $t$-distribution
has two potential drawbacks. First, all elements of a multivariate
$t$-distribution are dependent, because they share a common random
mixing variable. Second, all elements of $V$ are identically distributed,
because they are all $t$-distributed with the same degrees of freedom.
Both implications may be too restrictive in many applications, especially
if the dimension, $n$, is large.

\subsubsection{Cluster-$t$ Distribution (CT)}

The second special type of convolution-$t$ distribution is called
the Cluster-$t$ (CT) distribution. It has a cluster structure on
$V$, represented by $\boldsymbol{m}$. In the absence of a block
structure on $\Xi$, the log-likelihood function is computed by (\ref{eq:LogLGenC_StrucT}).
If $X$ is standardized, $\Sigma=C$ has a block structure with $\boldsymbol{n}=\boldsymbol{m}$
and $G=K$, and we set $\Xi=C^{1/2}$, then we have,
\[
V_{k}^{\prime}V_{k}=Y_{0}^{\prime}A^{-\tfrac{1}{2}}e_{k}e_{k}^{\prime}A^{-\tfrac{1}{2}}Y_{0}+\delta_{k}^{-1}Y_{k}^{\prime}Y_{k},\quad k=1,\ldots,K,
\]
where $Y=Q^{\prime}\left(X-\mu\right)=\left(Y_{0}^{\prime},Y_{1}^{\prime}\ldots,Y_{K}^{\prime}\right)^{\prime}$
and the log-likelihood function simplifies to
\begin{align*}
\ell(X) & =-\tfrac{1}{2}\log|A|+\sum_{k=1}^{K}c_{k}-\tfrac{1}{2}\left(n_{k}-1\right)\log\delta_{k}-\tfrac{\nu_{k}+n_{k}}{2}\log\left(1+\tfrac{1}{\nu_{k}-2}V_{k}^{\prime}V_{k}\right),
\end{align*}
where $c_{k}=c(\nu_{k},n_{k})$. The block structure simplifies implementation
of the score-driven model for this specification, and makes it possible
to apply in high dimensions.

\subsubsection{Convolution of Heterogeneous $t$-distributions (HT)}

The third special type of convolution-$t$ distributions has $G=n$.
So, the elements of $V$ are made up of $n$ independent univariate
$t$-distributions with degrees of freedom, $\nu_{i}$, $i=1,\cdots,n$.
This distribution can accommodate a high degree of heterogeneity in
the marginal properties of $X_{i}$, $i=1,\ldots,n$, which are different
convolutions of heterogeneous independent $t$-distributions. For
this reason, we refer to these as HT distributions, where H is short
for heterogeneous. The number of degrees of freedom increases from
$G$ to $n$, but the additional parameters do not guarantee a better
in-sample log-likelihood, because dependences between elements of
$V$ is eliminated. Without structure on $\Sigma$, the log-likelihood
function is simply computed by (\ref{eq:LogLGenC_StrucT}). As before,
if $X$ is standardized, $\Sigma=C$ has a block structure with group
assignments $\bm{n}$, and we set $\Xi=C^{1/2}$, then the log-likelihood
function simplifies to
\[
\ell(X)=c-\tfrac{1}{2}\log|A|-\tfrac{1}{2}\sum_{k=1}^{K}\left(n_{k}-1\right)\log\lambda_{k}-\sum_{k=1}^{K}\sum_{j=1}^{n_{k}}\tfrac{\nu_{k,j}+1}{2}\log\left(1+\tfrac{1}{\nu_{k,j}-2}V_{k,j}^{2}\right),
\]
where $c=\sum_{i=1}^{n}c(\nu_{i},1)$ and $V_{k,j}$ is the $j$-th
element of the vector $V_{k}$.

\section{Model Architecture and Components\label{sec:Joint-Parameterization}}

Individual returns and factor returns are model with univariate GARCH
models that define the standardized variables, $Z_{t}$ and $F_{t}$.
The dynamic model of $(Z_{t},F_{t})$ has several components. The
first is the dynamic model of $C_{F}=\mathrm{corr}(F_{t})$, which
is a score-driven model of $\gamma(C_{f})$, and this model delivers
the orthogonalized factor variables, $U_{t}$. The most innovative
component is the way we model dynamic factor loadings with the $\tau$-parametrization.
This is part of the core correlation model, which also include the
dynamic model of the idiosyncratic correlation matrix, $C_{e}$. The
two components of the core correlation model can be estimated jointly
or separately, where we refer to the latter as decoupled estimation.
An overview of the structure of the factor correlation model is illustrated
in Figure \ref{fig:ModelStructure}.

\subsection{Univariate Volatility Models for Returns Series}

The factor correlation model requires standardized returns, $Z_{i,t}$,
$i=1,\ldots,n$ and standardized factor variables, $F_{j,t}$, $j=1,\ldots,r$.
Each of these are obtained from suitable univariate models for the
conditional mean and variance of their corresponding return series. 

There is a wide range of choices for the univariate volatility models
that serve this purpose, see e.g. \citet{HansenLundeBeatGarch}. We
do not contribute to this modeling aspect. In our empirical analysis
we simply use univariate EGARCH model, see \citet{Nelson91}, to standardized
each returns series, from which we obtain $Z_{t}\in\mathbb{R}^{n}$
and $F_{t}\in\mathbb{R}^{r}$.

\begin{figure}[p]
\centering
\begin{centering}
\centering
\resizebox{0.9\textwidth}{!}{%
\begin{circuitikz}
\tikzstyle{every node}=[font=\footnotesize]
\draw [ rounded corners = 12,dashed, fill=green!30] (4.25,19-0.7) rectangle  node {} (8.25+0.2,16-0.7);
\draw [ fill=white ,line width=0.9pt] (4.8,19.4-0.7) rectangle  node {{Asset Returns}} (7.9,18.8-0.7);
\node  at (6.25+0.2,18-0.7) {$Z_i = (R_i - \mu_i)/{\sigma_i}$};
\node  at (6.25+0.2,17.3-0.7) {for $i=1,\cdots, n.$}; 
\node  at (6.25+0.2,16.6-0.7) {(Univariate GARCH)}; 
\draw [->, >=Stealth, color = red, line width=0.9pt] (6.25+0.2,16-0.7) -- (6.25+0.2,14);
\node  at (7+0.2,15-0.3) {$Z\in\mathbb{R}^{n}$};

\draw [ rounded corners = 12,dashed, fill=green!30] (9.25+0.7,19-0.7) rectangle  node {} (13.25+0.9,16-0.7);
\draw [ fill=white ,line width=0.9pt] (9.8+0.7,19.4-0.7) rectangle  node {{Factor Returns}} (12.7+0.9,18.8-0.7);
\node  at (11.25+0.8,18-0.7) {$F_i = (R_{f_j} - \mu_{f_j})/\sigma_{f_j}$};
\node  at (11.25+0.8,17.3-0.7) {for $j=1,\cdots, r.$}; 
\node  at (11.25+0.8,16.6-0.7) {(Univariate GARCH)};
\draw  [->, >=Stealth, color = red, line width=0.9pt] (13.25+0.9,17.5-0.7) -- (15.8,17.5-0.7); 
\node  at (14.05+0.9,17.9-0.7) {$F\in\mathbb{R}^{r}$};

\draw [rounded corners = 12,dashed, fill=yellow!50] (15.8,19-0.7) rectangle  node {} (19.8,16-0.7);
\node  at (17.8,17.5-0.7) {$U = C_F^{-1/2}F$};
\node  at (17.8,16.6-0.7) {(Multivariate GARCH)};
\draw [->, >=Stealth, color = red, line width=0.9pt] (17.8,16-0.7) -- (17.8,14);
\node  at (18.9,15-0.3) {$U \sim (0,I_r)$};
\draw [ fill=white ,line width=0.9pt] (16.3,19.4-0.7) rectangle  node {} (19.4,18.5-0.7);
\node  at (17.85,19.15-0.7) {Dynamic Factor};
\node  at (17.85,18.75-0.7) {Correlation Matrix};

\draw [rounded corners = 12, fill=black!3, dashed, line width=0.9pt] (4,14) rectangle  (20.5,-1.7);
\draw [fill=white ,line width=0.9pt] (10,14.3) rectangle  node {Core Correlation Model} (14.5,13.7);
\node  at (12.3,13.1) {$Z=\boldsymbol{\rho}^{\prime}U+\Lambda_{\omega}e$};

\draw [rounded corners = 12,dashed, fill=blue!10] (4.25,12) rectangle  (8.25,4);
\draw [ fill=white ,line width=0.9pt] (4.8,12.4) rectangle  node {Factor Loadings} (7.7,11.8);
\node  at (6.25,10) {$Z_i = \rho_i^{\prime}U+\omega_i e_i$};
\node  at (6.25,8) {$\tau_i=\tfrac{\operatorname{artanh}(\sqrt{\rho_{i}^{\prime}\rho_{i}})}{\sqrt{\rho_{i}^{\prime}\rho_{i}}}\times\rho_{i}$};
\node  at (6.25,6) {$\tau=\left(\tau_1^{\prime}, \ldots, \tau_n^{\prime}\right)^{\prime} \in \mathbb{R}^{r n}$};

\draw [rounded corners = 12,dashed, fill=blue!10] (9,12) rectangle  (18.8+1.5,4);
\draw [ fill=white ,line width=0.9pt] (10.7,12.4) rectangle  node {Idiosyncratic Correlation Matrix} (16.8,11.8);
\node (tikzmaker) at (15.3,10-0.1) {\includegraphics[width=3cm]{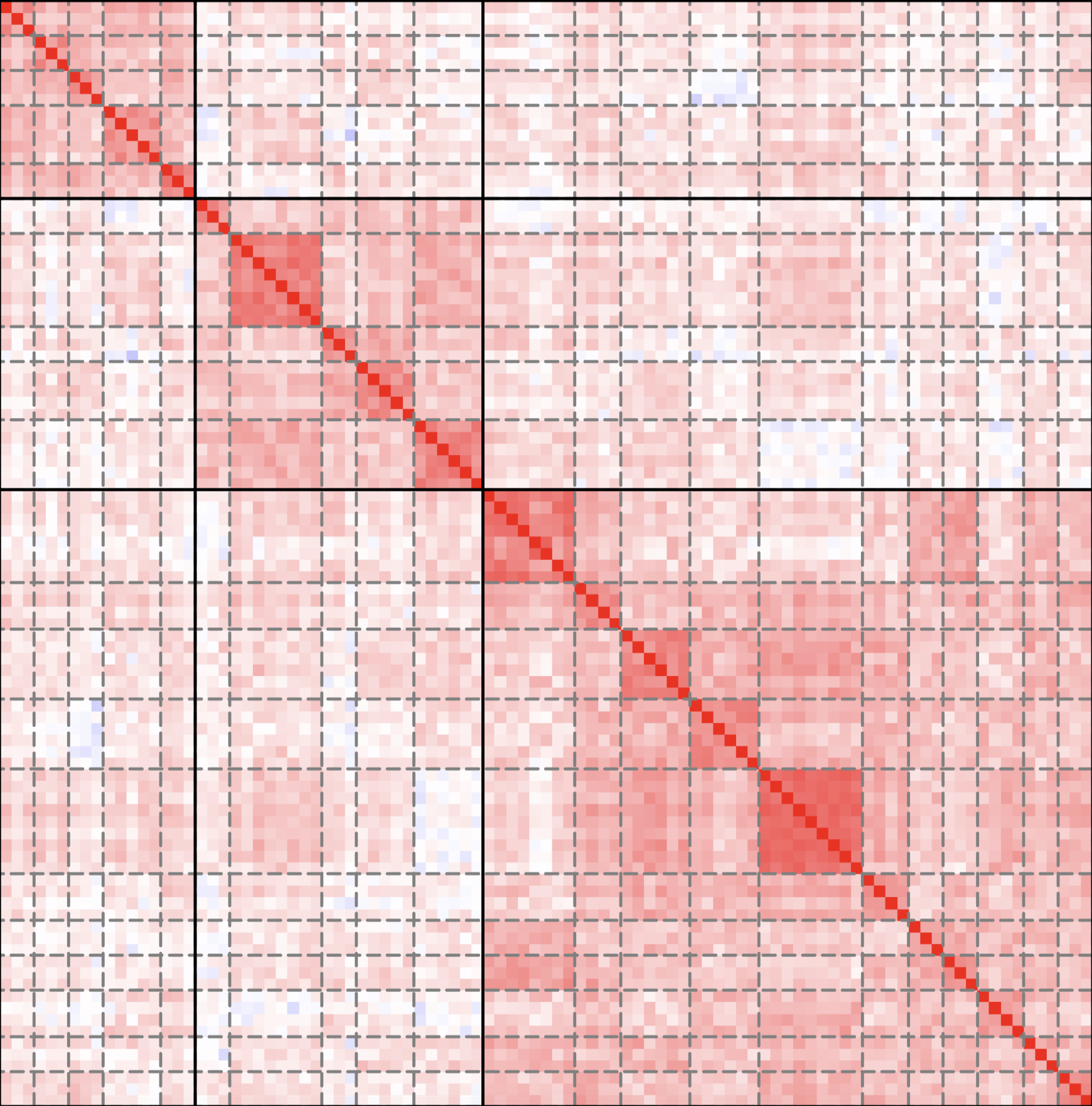}};
\node (tikzmaker) at (18.5,10-0.1) {\includegraphics[width=3cm]{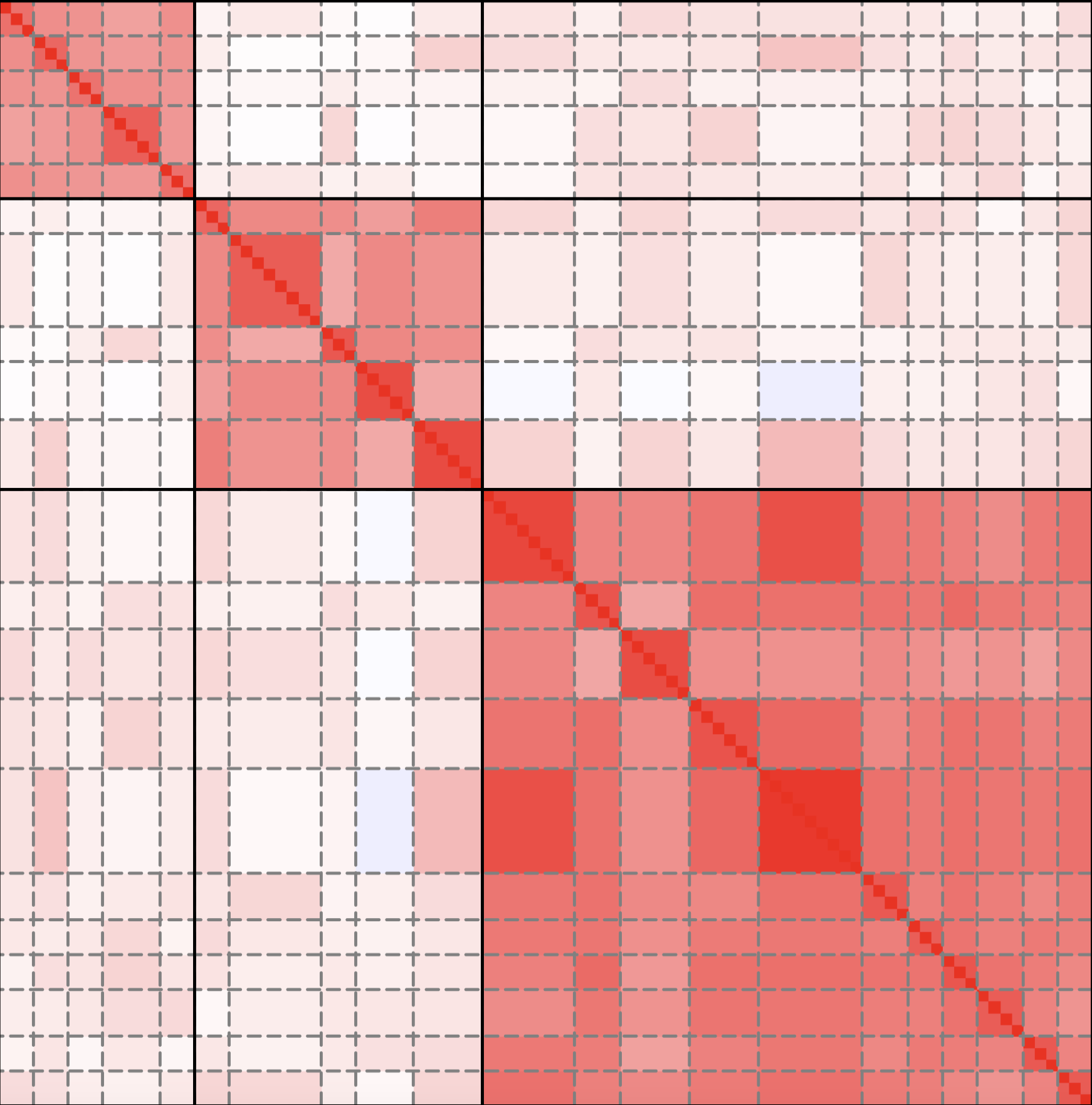}};
\node  at (15.3,8.2-0.1) {(Unrestricted)};
\node  at (18.5,8.2-0.1) { (Full Block)};
\node (tikzmaker) at (15.3,6.2) {\includegraphics[width=3cm]{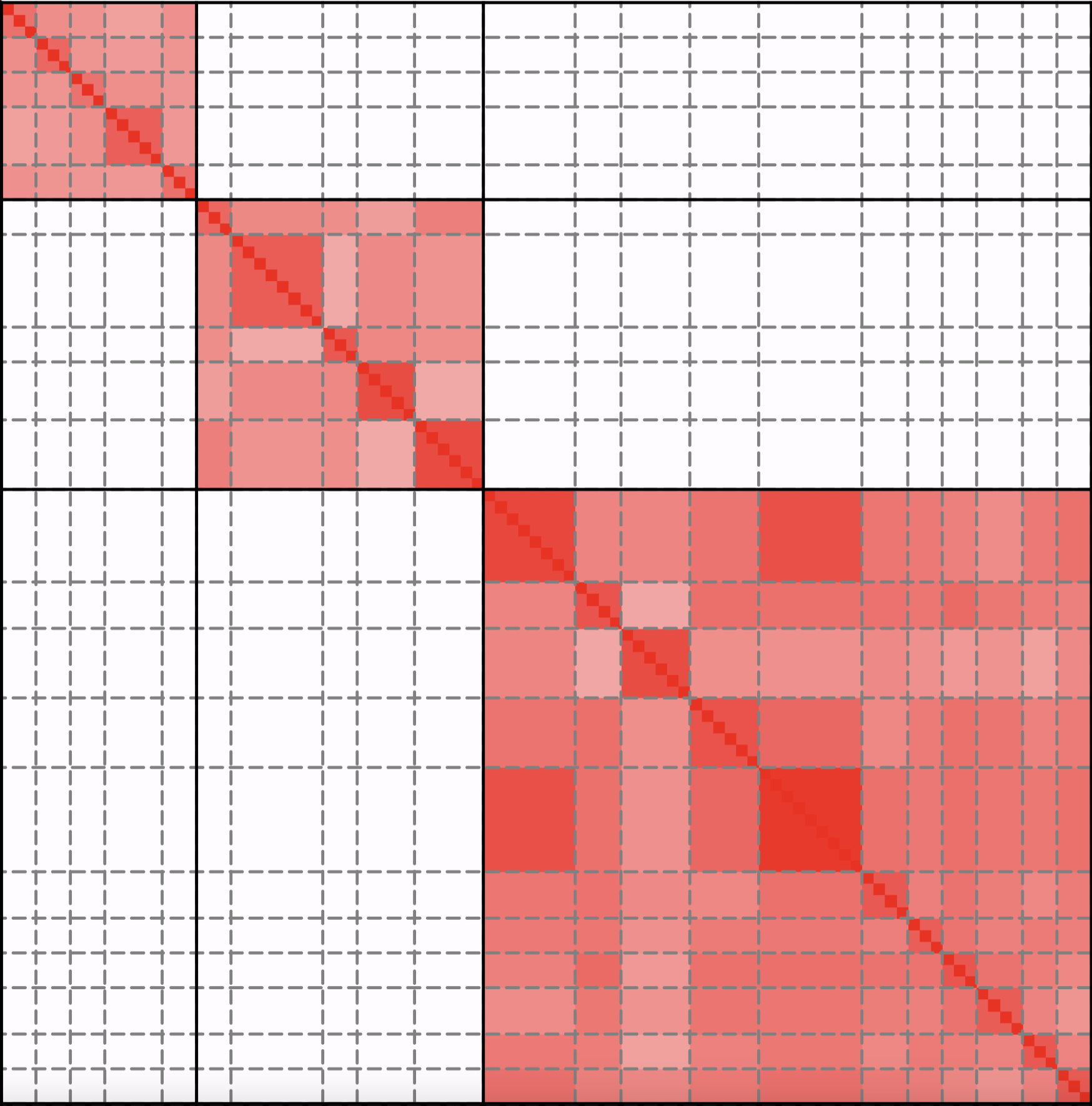}};
\node (tikzmaker) at (18.5,6.2) {\includegraphics[width=3cm]{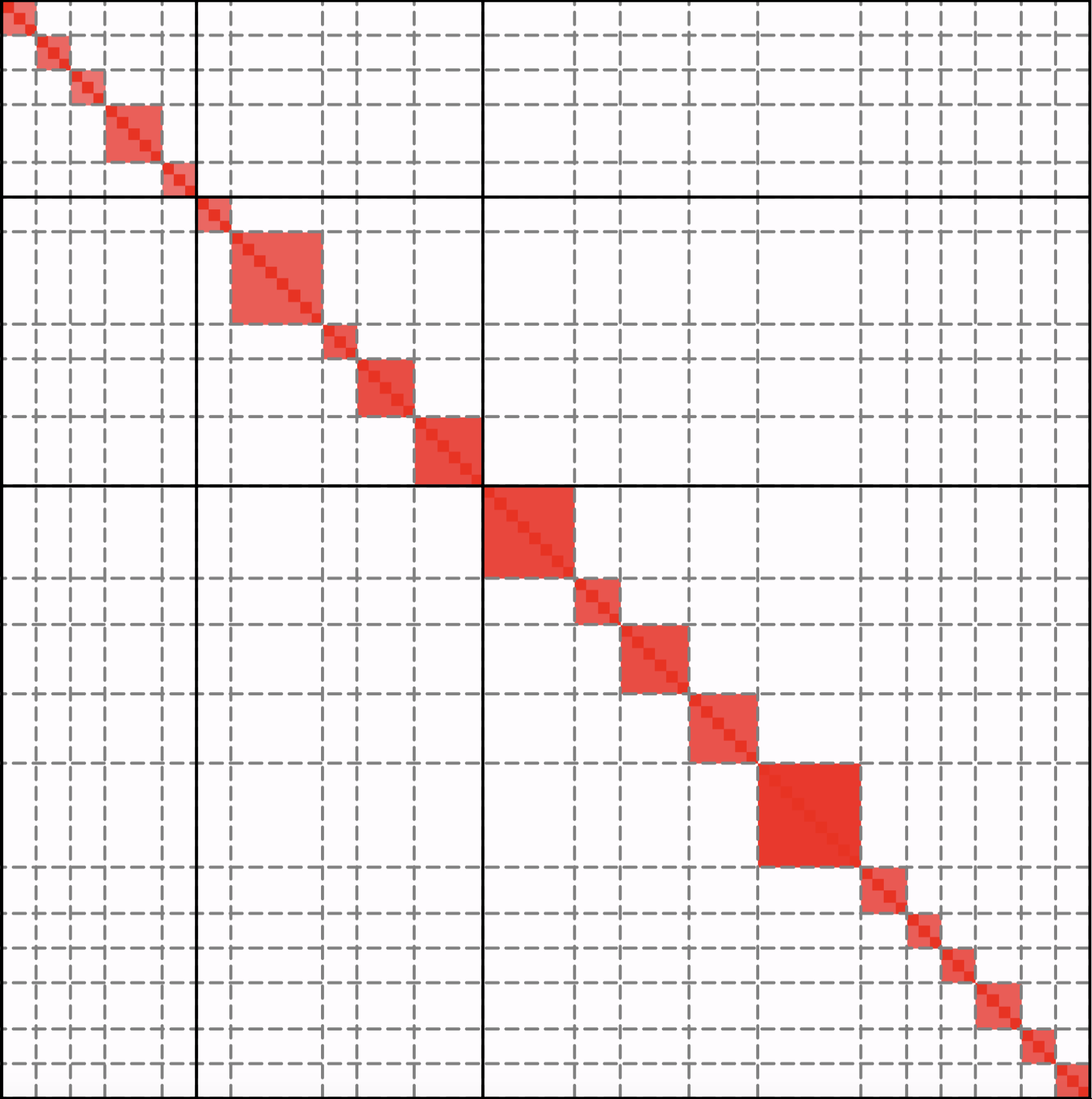}};
\node  at (15.3,4.4) {(Sparse Block)};
\node  at (18.5,4.4) {(Diagonal Block)};
\draw (13.6,7.9) node[rotate = 270] {{\color{gray}$\underbrace{\hspace{7.3cm}}$}};
\node  at (11.2,7.9) {$\gamma(C_e) \equiv \text{vecl}(\log C_e) =B\eta$};
\node  at (11.2,10) {$e \sim \mathrm{CT}_{\boldsymbol{m},\boldsymbol{\nu}}^{{\rm std}}(0,C_{e}^{1/2})$};

\draw [rounded corners = 12,dashed, fill=orange!20] (4.25,2.25) rectangle  (8.75,-1.25);
\node  at (6.5,1.4) {Score-Driven Model};
\node  at (6.5,0.8) {for $\zeta     =(\tau^{\prime},\eta^{\prime})^{\prime}$};
\node  at (6.5,0) {$\zeta_{t+1} =\kappa+\beta\zeta_{t}+\alpha\varepsilon_{t}$};
\node  at (6.5,-.6) {$\varepsilon_{t}=\mathcal{S}_{t}^{{-1}}\nabla_{{t}}, \ \nabla_{{t}}=\tfrac{\partial\ell(Z_{t}|U_{t})}{\partial\zeta_{t}} $};
\draw [ fill=white ,line width=0.9pt] (4.9,2.5) rectangle  node {Joint Estimation} (7.9,1.9);

\draw [rounded corners = 12,dashed, fill=orange!20] (9.2,2.25) rectangle  (20.3,-1.25);
\node  at (12,1.4) {Score-Driven Model for $\tau_i$};
\node  at (12,0.6) {$Z_{i}|U \ \simeq \ t_{\nu_{i}^{\star}}^{{\rm std}}\left(\rho_i^{\prime}U,\omega^2_i \right)$};
\node  at (12,0) {$\tau_{i,t+1} =\kappa_i+\beta\tau_{i,t}+\alpha\varepsilon_{i,t}$};
\node  at (12,-0.6) {$\varepsilon_{i,t}=\mathcal{S}_{i,t}^{-1}\nabla_{{i,t}}, \ \nabla_{{i,t}}=\tfrac{\partial\ell^{\star}(Z_{i,t}|U_{t})}{\partial\tau_{i,t}} $};
\draw [->, >=Stealth, color = black] (14.3,0.2) -- (15.8,0.2);
\node  at (15,0.55) {$\hat{e} \in \mathbb{R}^n$};

\node  at (18,1.4) {Cluster GARCH for $\eta$};
\node  at (18,0.6) {$\hat{e}\sim\mathrm{CT}_{\boldsymbol{m},\boldsymbol{\nu}}^{{\rm std}}(0,C_{e}^{1/2})$};
\node  at (18,0) {$\eta_{t+1}   =\kappa+\beta\eta_{t}+\alpha\varepsilon_{t}$};
\node  at (18,-0.6) {$\varepsilon_{t}=\mathcal{S}_{t}^{-1}\nabla_{{t}}, \ \nabla_{{t}}=\tfrac{\partial\ell(\hat{e}_{t})}{\partial\eta_{t}} $};
\draw [ fill=white ,line width=0.9pt] (12,2.5) rectangle  node {Decoupled Estimation} (17,1.9);

\draw  [color = red, line width=0.7pt] (6.25,4) -- (6.25,3.4);
\draw  [color = red, line width=0.7pt] (14.65,4) -- (14.65,3.4);
\draw  [color = red, line width=0.7pt] (6.25,3.4) -- (14.65,3.4);
\draw  [->, >=Stealth, color = red, dashed, line width=0.9pt] (9.95,3.4) -- (7.9,2.5);
\draw  [->, >=Stealth, color = red, dashed, line width=0.9pt] (9.95,3.4) -- (12,2.5);
\end{circuitikz}
}%
\par\end{centering}
\caption{Model architecture and components. The primary methodological contribution
in this paper is the core correlation model.\label{fig:ModelStructure}}
\end{figure}

\subsection{Model for Standardized Factor Variables}

We model the standardized factor variables, $F_{t}$, using the score-driven
model by \citet{TongHansenArchakov:2024}. Because $r$ is relatively
small, there is not need to impose structure on $C_{F_{t}}=\operatorname{corr}(F_{t})$.
The score-drive framework was introduced by \citet{CrealKoopmanLucas:2013},
where dynamic parameters are updated based on the score of the log-likelihood
function. Here we specify a score-model for the vector representation
of the correlation matrix, $\gamma_{t}^{F}=\operatorname{vecl}(\log C_{F_{t}})$,
specifically
\[
\gamma_{t+1}^{F}=\kappa^{F}+\beta^{F}\gamma_{t}^{F}+\alpha^{F}\varepsilon_{t}^{F},
\]
where $\kappa^{F}=(I_{r(r-1)/2}-\beta^{F})\mu^{F}$ with $\mu^{F}=\mathbb{E}[\gamma_{t}^{F}]$,
and $\alpha^{F}$ and $\beta^{F}$ are $r(r-1)/2\times r(r-1)/2$
matrices and $\ensuremath{\varepsilon_{t}^{F}=\mathcal{S}_{t}^{-1}\nabla_{\gamma_{t}^{F}}}$
where $\mathcal{S}_{t}$ is a scaling matrix and, $\nabla_{\gamma_{t}^{F}}=\partial\ell(F_{t})/\partial\gamma_{t}^{F}$,
is the score of the log-likelihood function. The distributional assumption
is that $F_{t}=C_{F,t}^{1/2}U_{t}$ with $U_{t}\sim\mathrm{CT}_{\boldsymbol{d},\boldsymbol{\nu}^{F}}^{{\rm std}}(0,I_{d})$,
and we consider three types of distributions for $U_{t}$: Gaussian,
Multivariate-$t$ (MT), and (HT) distributions.

\subsection{Dynamic Factor Loadings and Idiosyncratic Correlations}

The \textit{core correlation model} includes the dynamic models of
factor loadings and the idiosyncratic correlation matrix. The is the
central part of the proposed model, which describes the conditional
distribution of $Z_{t}$ given $U_{t}$, which involves dynamic factor
loadings and a dynamic idiosyncratic correlation matrix with various
structures. We adopt the convolution-$t$ distribution with $\ensuremath{\Xi}=C_{e}^{1/2}$
for this part of the model, such that
\begin{equation}
Z=\boldsymbol{\rho}^{\prime}U+\Lambda_{\omega}e,\quad e\sim\mathrm{CT}_{\boldsymbol{m},\boldsymbol{\nu}}^{{\rm std}}(0,C_{e}^{1/2}),\label{eq:condmodel}
\end{equation}
Here $C_{e}^{1/2}$ is the symmetric square-root of $C_{e}$, and
it follows that $Z|U\sim\mathrm{CT}_{\boldsymbol{m},\boldsymbol{\nu}}^{{\rm std}}(\boldsymbol{\rho}^{\prime}U,\Lambda_{\omega}C_{e}^{1/2})$.
We parametrize the factor loadings by $\tau$, whereas the idiosyncratic
correlation matrix is parametrized by $\eta$, using various types
of block structures, $\gamma(C_{e})=B\eta$. Since $\Lambda_{\omega}$
is a function of $\boldsymbol{\rho}$, the parameters dynamic parameters
are represented by the vector, $\zeta_{t}=(\tau_{t}^{\prime},\eta_{t}^{\prime})^{\prime}$,
which leads to a second score-driven model,
\begin{equation}
\zeta_{t+1}=\kappa+\beta\zeta_{t}+\alpha\varepsilon_{t},\label{eq:VAR1}
\end{equation}
where $\beta$ and $\alpha$ are coefficient matrices, $\kappa=(I_{p}-\beta)\mu_{\zeta}$
with $\mu_{\zeta}=\mathbb{E}[\zeta_{t}]$, and $\varepsilon_{t}$
is defined by the score in period $t$.\footnote{It is straightforward to include additional lagged values of $\zeta_{t}$,
such that (\ref{eq:VAR1}) has a higher-order VAR(p) structure, and
adding $q$ lagged values of $\varepsilon_{t}$, would generalize
(\ref{eq:VAR1}) to a VARMA(p,q) model, we do not pursue these extensions
in this paper.} In the empirical analysis we let $\beta$ and $\alpha$ be diagonal
matrices to keep the model relatively parsimonious.

A key aspect of a score-driven model is the innovation, $\varepsilon_{t}=\mathcal{S}_{t}^{-1}\nabla_{\zeta_{t}}$,
where $\nabla_{\zeta_{t}}=\partial\ell(Z_{t}|U_{t})/\partial\zeta_{t}$
is the score of the predictive log-likelihood function and $\mathcal{S}^{-1}$
is a suitable scaling matrix. This model structure is very intuitive,
because (\ref{eq:VAR1}) continuously updates $\zeta_{t}$ in the
direction dictated by the first-order conditions of the log-likelihood. 

To simplify the expositions, we suppress subscript-$t$ in the rest
of this Section. For the convolution-$t$ distribution we need the
following result that follows from \citet[theorems 5 and 6]{TongHansenArchakov:2024}.
\begin{lem}
Suppose that $X\sim\mathrm{CT}_{\boldsymbol{m},\boldsymbol{\nu}}^{{\rm std}}(\mu,\Xi)$
and define $V=(V_{1},\ldots,V_{G})=\Xi^{-1}\left(X-\mu\right)$, where
$V_{g}\in\mathbb{R}^{n\times m_{g}}$ with $\sum_{g=1}^{G}m_{g}=n$.
The partial derivatives (scores) with respect to $\mu$ and ${\rm vec}\left(\Xi\right)$
are given by 
\[
\nabla_{\mu}=\ensuremath{\sum_{g=1}^{G}W_{g}\Xi^{\prime-1}P_{g}V_{g}},\qquad\nabla_{\Xi}=\sum_{g=1}^{G}W_{g}{\rm vec}\left(\Xi^{\prime-1}P_{g}V_{g}V^{\prime}\right)-{\rm vec}\left(\Xi^{\prime-1}\right),
\]
respectively, where $W_{g}=\frac{\nu_{g}+m_{g}}{\nu_{g}-2+V_{g}^{\prime}V_{g}}$.
The corresponding terms in the information matrix are
\begin{eqnarray*}
\mathcal{I}_{\mu} & = & \sum_{g=1}^{G}\tfrac{\left(\nu_{g}+m_{g}\right)\nu_{g}}{\left(\nu_{g}+m_{g}+2\right)\left(\nu_{g}-2\right)}\Xi^{\prime-1}P_{g}P_{g}^{\prime}\Xi^{-1},\\
\mathcal{I}_{\Xi} & = & \left(I_{n}\otimes\Xi^{\prime-1}\right)\ensuremath{\left(K_{n}+\Upsilon_{G}\right)\left(I_{n}\otimes\Xi^{-1}\right),}
\end{eqnarray*}
and $\mathcal{I}_{\mu\Xi}=0$, where $K_{n}$ is the commutation matrix,
$P_{g}$ \textup{is a $n\times m_{g}$ matrix from identity matrix
$I_{n}=\left(P_{1},\ldots,P_{G}\right)$, and $\Upsilon_{G}$ is related
to $\bm{m}$ and $\bm{\nu}$, see} \citet{TongHansenArchakov:2024}.
\end{lem}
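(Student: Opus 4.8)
The plan is to obtain the two scores by direct differentiation of the log-likelihood (\ref{eq:LogLGenC_StrucT}) and then to compute the information matrix as $\mathcal{I}=\mathbb{E}[\nabla\nabla^{\prime}]$, exploiting the mutual independence and spherical symmetry of the blocks $V_{1},\ldots,V_{G}$. For the scores, regard $V=\Xi^{-1}(X-\mu)$ and $V_{g}=P_{g}^{\prime}V\in\mathbb{R}^{m_{g}}$ as functions of $(\mu,\Xi)$ and differentiate each summand. Only the quadratic forms $V_{g}^{\prime}V_{g}$ depend on $\mu$, whereas both $-\log|\Xi|$ and the $V_{g}^{\prime}V_{g}$ depend on $\Xi$. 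The chain rule gives $\partial\log(1+\tfrac{1}{\nu_{g}-2}V_{g}^{\prime}V_{g})/\partial x=(\nu_{g}-2+V_{g}^{\prime}V_{g})^{-1}\,\partial(V_{g}^{\prime}V_{g})/\partial x$, so the weight $W_{g}=(\nu_{g}+m_{g})/(\nu_{g}-2+V_{g}^{\prime}V_{g})$ emerges naturally. Using $\partial V/\partial\mu^{\prime}=-\Xi^{-1}$ and the matrix differential $dV=-\Xi^{-1}(d\Xi)V$, a short computation yields $\partial(V_{g}^{\prime}V_{g})/\partial\mu=-2\Xi^{\prime-1}P_{g}V_{g}$ and $\partial(V_{g}^{\prime}V_{g})/\partial\mathrm{vec}(\Xi)=-2\,\mathrm{vec}(\Xi^{\prime-1}P_{g}V_{g}V^{\prime})$; combining these with $\partial(-\log|\Xi|)/\partial\mathrm{vec}(\Xi)=-\mathrm{vec}(\Xi^{\prime-1})$ reproduces the stated $\nabla_{\mu}$ and $\nabla_{\Xi}$.

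For the information matrix, two structural facts do most of the work. First, the blocks $V_{g}$ are mutually independent, and each $V_{g}\sim t_{\nu_{g}}^{\mathrm{std}}(0,I_{m_{g}})$ is spherically symmetric with $W_{g}$ a function of $V_{g}^{\prime}V_{g}$ only; hence $W_{g}V_{g}$ is odd in $V_{g}$ with mean zero. This annihilates every cross-block term in $\mathbb{E}[\nabla_{\mu}\nabla_{\mu}^{\prime}]$ and gives $\mathcal{I}_{\mu\Xi}=\mathbb{E}[\nabla_{\mu}\nabla_{\Xi}^{\prime}]=0$, since $\nabla_{\Xi}$ is even in $V$ while $\nabla_{\mu}$ is odd. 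Second, within a block spherical symmetry forces $\mathbb{E}[W_{g}^{2}V_{g}V_{g}^{\prime}]=c_{g}I_{m_{g}}$ with $c_{g}=\tfrac{1}{m_{g}}\mathbb{E}[W_{g}^{2}V_{g}^{\prime}V_{g}]$. Evaluating this scalar through the Gaussian scale-mixture representation, which gives $V_{g}^{\prime}V_{g}=(\nu_{g}-2)\chi_{m_{g}}^{2}/\chi_{\nu_{g}}^{2}$, and integrating against the resulting Gamma densities yields $c_{g}=\tfrac{(\nu_{g}+m_{g})\nu_{g}}{(\nu_{g}+m_{g}+2)(\nu_{g}-2)}$. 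Substituting into $\mathbb{E}[\nabla_{\mu}\nabla_{\mu}^{\prime}]=\sum_{g}\Xi^{\prime-1}P_{g}\,\mathbb{E}[W_{g}^{2}V_{g}V_{g}^{\prime}]\,P_{g}^{\prime}\Xi^{-1}$ delivers $\mathcal{I}_{\mu}$.

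The genuine obstacle is $\mathcal{I}_{\Xi}$. Using $P_{g}V_{g}=P_{g}P_{g}^{\prime}V$, the $\Xi$-score collapses to $\nabla_{\Xi}=\mathrm{vec}(\Xi^{\prime-1}\bar{W}VV^{\prime})-\mathrm{vec}(\Xi^{\prime-1})$ with $\bar{W}=\sum_{g}W_{g}P_{g}P_{g}^{\prime}$; pulling out $(I_{n}\otimes\Xi^{\prime-1})$ on the left and $(I_{n}\otimes\Xi^{-1})$ on the right then reduces the claim to the variance of $\mathrm{vec}(\bar{W}VV^{\prime})$. Because $\bar{W}$ is itself a function of $V$, this entails mixed fourth moments of the weighted outer products $W_{g}P_{g}V_{g}V^{\prime}$ across blocks $g,h$ (together with the cross terms against the constant $\mathrm{vec}(I_{n})$), which decompose into within-block spherical contributions and between-block products $\mathbb{E}[W_{g}V_{g}V_{g}^{\prime}]\,\mathbb{E}[W_{h}V_{h}V_{h}^{\prime}]$. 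The commutation matrix $K_{n}$ arises from the spherical fourth-moment identity for $\mathrm{vec}(VV^{\prime})$, while all remaining block- and degrees-of-freedom-dependent pieces assemble into $\Upsilon_{G}$. The hard part is precisely this bookkeeping of the mixed fourth moments and verifying that they collapse into the compact $K_{n}+\Upsilon_{G}$ form; since the present $\mathrm{CT}$ object is identical to the one in \citet{TongHansenArchakov:2024}, the explicit expression for $\Upsilon_{G}$ in terms of $\boldsymbol{m}$ and $\boldsymbol{\nu}$ can be imported verbatim from their theorems 5 and 6.
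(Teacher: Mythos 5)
Your proposal is correct, but it takes a genuinely different route from the paper: the paper never proves this lemma at all --- it is stated as a direct import of Theorems 5 and 6 of \citet{TongHansenArchakov:2024}, and no argument for it appears in the appendix. You instead derive most of it from scratch. Your score computations are right: the chain rule produces the weight $W_{g}$, and the differentials $\partial(V_{g}^{\prime}V_{g})/\partial\mu=-2\Xi^{\prime-1}P_{g}V_{g}$ and $\partial(V_{g}^{\prime}V_{g})/\partial\mathrm{vec}(\Xi)=-2\,\mathrm{vec}(\Xi^{\prime-1}P_{g}V_{g}V^{\prime})$ combine with $\partial(-\log|\Xi|)/\partial\mathrm{vec}(\Xi)=-\mathrm{vec}(\Xi^{\prime-1})$ to give exactly the stated $\nabla_{\mu}$ and $\nabla_{\Xi}$. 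Your $\mathcal{I}_{\mu}$ constant also checks out: with $u=V_{g}^{\prime}V_{g}=(\nu_{g}-2)\chi_{m_{g}}^{2}/\chi_{\nu_{g}}^{2}$ one has $u/(\nu_{g}-2+u)\sim\mathrm{Beta}(m_{g}/2,\nu_{g}/2)$, and the Beta moments yield $\mathbb{E}\bigl[W_{g}^{2}V_{g}V_{g}^{\prime}\bigr]=\tfrac{(\nu_{g}+m_{g})\nu_{g}}{(\nu_{g}+m_{g}+2)(\nu_{g}-2)}I_{m_{g}}$; the sign-symmetry/independence argument that kills cross-block terms and gives $\mathcal{I}_{\mu\Xi}=0$ is likewise valid. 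For $\mathcal{I}_{\Xi}$ you correctly collapse the score to $(I_{n}\otimes\Xi^{\prime-1})\mathrm{vec}(\bar{W}VV^{\prime}-I_{n})$ and reduce the claim to a fourth-moment identity, then import $\Upsilon_{G}$ from the reference; this deferral is unavoidable, since the lemma itself defines $\Upsilon_{G}$ only by pointing to \citet{TongHansenArchakov:2024}, so your proof is as self-contained as the statement permits. What each approach buys: the paper's one-line citation is the economical and arguably intended proof, since the result is verbatim prior work once one substitutes $X=Z|U$; your derivation makes the weight, the Beta-moment constant, and the block-diagonal structure of the information matrix independently verifiable, at the cost of still leaning on the reference for the single object the statement never spells out.
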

Note that $W_{g}$ mitigates the influence that an outlier in $V_{g}$
has on the scores. For latter use, we define
\begin{equation}
\xi\equiv\left(\begin{array}{c}
{\rm vec}\left(\mu,\omega\right)^{\prime}\\
\eta
\end{array}\right),\quad\text{ and}\quad\Pi\equiv\frac{\partial\xi}{\partial\zeta^{\prime}}=\left(\begin{array}{cc}
M & \bm{0}\\
\bm{0} & I
\end{array}\right),\quad\text{where}\ M=\frac{\partial{\rm vec}\left(\mu,\omega\right)^{\prime}}{\partial\tau^{\prime}},\label{eq:DefineXi}
\end{equation}
where $\mu=\boldsymbol{\rho}^{\prime}U$ and $\omega=\left(\omega_{1},\omega_{2},\ldots,\omega_{n}\right)^{\prime}$
with $\omega_{i}=\sqrt{1-\rho_{i}^{\prime}\rho_{i}}$ for $i=1,\ldots,n$.
Next, the following Theorem provides expressions for key terms in
the score-driven model.
\begin{thm}[Key terms in Score-Driven Correlation Model]
\label{thm:JointScore}Suppose that $e\sim\mathrm{CT}_{\boldsymbol{m},\boldsymbol{\nu}}^{{\rm std}}(0,C_{e}^{1/2})$
such that $Z|U\sim\mathrm{CT}_{\boldsymbol{m},\boldsymbol{\nu}}^{{\rm std}}(\mu,\Xi)$
with $\mu=\boldsymbol{\rho}^{\prime}U$ and $\Xi=\Lambda_{\omega}C_{e}^{1/2}$,
then the score vector and information matrix with respect to the dynamic
parameter, $\zeta$, are given by
\[
\nabla_{\zeta}=\Pi^{\prime}\nabla_{\xi},\qquad\mathcal{I}_{\zeta}=\Pi^{\prime}\mathcal{I}_{\xi}\Pi,
\]
where
\[
\nabla_{\xi}=\Theta^{\prime}\left(\begin{array}{c}
\begin{array}{c}
\nabla_{\mu}\\
\nabla_{\Xi}
\end{array}\end{array}\right),\quad\mathcal{I}_{\xi}=\Theta^{\prime}\left(\begin{array}{cc}
\mathcal{I}_{\mu} & 0\\
0 & \mathcal{I}_{\Xi}
\end{array}\right)\Theta.
\]
with $\theta=\left[\mu^{\prime},{\rm vec}\left(\Xi\right)^{\prime}\right]^{\prime}$,
we have
\begin{align*}
\Theta & =\frac{\partial\theta}{\partial\xi^{\prime}}=\left[\begin{array}{ccc}
I_{n} & \bm{0} & \bm{0}\\
\bm{0} & \frac{\partial{\rm vec}\left(\Xi\right)}{\partial\omega^{\prime}} & \frac{\partial{\rm vec}\left(\Xi\right)}{\partial\eta^{\prime}}
\end{array}\right]\left[\begin{array}{cc}
K_{2n} & \bm{0}\\
\bm{0} & I_{n(n-1)/2}
\end{array}\right],
\end{align*}
where $K_{2n}$ is the communication matrix and
\[
\tfrac{\partial{\rm vec}\left(\Xi\right)}{\partial\omega^{\prime}}=(C_{e}^{1/2}\otimes I_{n})E_{d}^{\prime},\qquad\tfrac{\partial{\rm vec}\left(\Xi\right)}{\partial\eta^{\prime}}=(I_{n}\otimes\Lambda_{\omega})(C_{e}^{1/2}\oplus I_{n})^{-1}\tfrac{\partial{\rm vec}\left(C_{e}\right)}{\partial\gamma^{\prime}}B.
\]
where $E_{d}$ is the elimination matrix such that ${\rm diag}\left(S\right)=E_{d}{\rm vec}\left(S\right)$
for any square matrix S. The expression for $\partial{\rm vec}(C_{e})/\partial\gamma^{\prime}$
is in Appendix \ref{sec:Proof-of-Theorem1}. The matrix $M$ in (\ref{eq:DefineXi})
is given by
\[
M=K_{n2}\left[\begin{array}{c}
\left(U^{\prime}\otimes I_{n}\right)\sum_{i=1}^{n}\left(I_{d}\otimes P_{i}\right)J_{i}\left(P_{i}^{\prime}\otimes I_{d}\right)\\
-\frac{1}{2}\text{\ensuremath{\Lambda_{\omega}^{-1}}}E_{d}\ensuremath{\left(I_{n^{2}}+K_{n}\right)\left(\boldsymbol{\rho}^{\prime}\otimes I_{n}\right)}\sum_{i=1}^{n}\left(I_{d}\otimes P_{i}\right)J_{i}\left(P_{i}^{\prime}\otimes I_{d}\right)
\end{array}\right],
\]
where $P_{i}$ is the $i$-th column of identity matrix $I_{n}$,
and $J_{i}$ is the Jacobian matrix $\partial\rho_{i}/\partial\tau_{i}$.
\end{thm}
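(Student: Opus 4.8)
The plan is to prove Theorem~\ref{thm:JointScore} by the chain rule, layering the three parametrizations $\zeta=(\tau',\eta')'\mapsto\xi=(\mathrm{vec}(\mu,\omega)',\eta')'\mapsto\theta=(\mu',\mathrm{vec}(\Xi)')'$ and feeding the base scores and information blocks from the preceding Lemma through them. Since $Z\mid U\sim\mathrm{CT}^{\mathrm{std}}_{\bm m,\bm\nu}(\mu,\Xi)$ with $\mu=\bm\rho'U$ and $\Xi=\Lambda_{\omega}C_{e}^{1/2}$, the Lemma (applied with location $\mu$ and scale $\Xi$) directly supplies $\nabla_{\theta}=(\nabla_{\mu}',\nabla_{\Xi}')'$ and the block-diagonal $\mathcal{I}_{\theta}=\mathrm{diag}(\mathcal{I}_{\mu},\mathcal{I}_{\Xi})$, the latter precisely because $\mathcal{I}_{\mu\Xi}=0$. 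Writing $\Theta=\partial\theta/\partial\xi'$ and $\Pi=\partial\xi/\partial\zeta'$, the chain rule for gradients then gives $\nabla_{\xi}=\Theta'\nabla_{\theta}$ and $\nabla_{\zeta}=\Pi'\nabla_{\xi}$. For the information matrices I would use that both Jacobians are $\mathcal{F}_{t-1}$-measurable (they depend only on the past-measurable parameters and on $U$, not on the innovation $e$), so they factor out of the conditional expectation $\mathcal{I}=\mathbb{E}_{t-1}[\nabla\nabla']$, yielding $\mathcal{I}_{\xi}=\Theta'\mathcal{I}_{\theta}\Theta$ and $\mathcal{I}_{\zeta}=\Pi'\mathcal{I}_{\xi}\Pi$ with no second-order remainder. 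This reduces the theorem to computing the two Jacobians explicitly.

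Next I would compute $\Theta$ block by block. The $\mu$-row is trivial, since $\partial\mu/\partial\mu'=I_{n}$ and $\mu$ is functionally free of $(\omega,\eta)$. For the $\Xi$-row I would exploit the two groupings of $\mathrm{vec}(\Lambda_{\omega}C_{e}^{1/2})$: the form $(C_{e}^{1/2}\otimes I_{n})\mathrm{vec}(\Lambda_{\omega})$ together with $\mathrm{vec}(\Lambda_{\omega})=E_{d}'\omega$ gives $\partial\mathrm{vec}(\Xi)/\partial\omega'=(C_{e}^{1/2}\otimes I_{n})E_{d}'$, while the form $(I_{n}\otimes\Lambda_{\omega})\mathrm{vec}(C_{e}^{1/2})$ isolates the $\eta$-dependence. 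For the last block I would differentiate through $\eta\mapsto\gamma=B\eta\mapsto C_{e}\mapsto C_{e}^{1/2}$. The only nonroutine step here is the matrix-square-root differential, which I would obtain by differentiating $C_{e}^{1/2}C_{e}^{1/2}=C_{e}$ and vectorizing to produce a Sylvester relation whose inverse is exactly the Kronecker-sum factor appearing in the statement; composing with $\partial\mathrm{vec}(C_{e})/\partial\gamma'$ (from the Appendix) and $B$ completes the block. The commutation matrix $K_{2n}$ in $\Theta$ merely realigns the column-stacked ordering of $\mathrm{vec}(\mu,\omega)$ with the ordering used inside these blocks.

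Finally I would compute $\Pi$ and in particular $M=\partial\,\mathrm{vec}(\mu,\omega)/\partial\tau'$. The block structure $\Pi=\mathrm{diag}(M,I)$ is immediate, because $(\mu,\omega)$ depends on $\zeta$ only through $\tau$ (via $\rho$) while $\eta$ maps to itself. For $M$ I would first note that $\rho=\mathrm{vec}(\bm\rho)$ has block-diagonal Jacobian $\partial\rho/\partial\tau'=\mathrm{diag}(J_{1},\ldots,J_{n})$, with each $J_{i}=\partial\rho_{i}/\partial\tau_{i}'$ given in Theorem~\ref{thm:CorrTau}; reassembling these blocks into the vec-ordering compatible with $\mu=\bm\rho'U$ and $\omega$ is what produces the selector sums $\sum_{i}(I\otimes P_{i})J_{i}(P_{i}'\otimes I)$. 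Differentiating $\mu=\bm\rho'U$ in $\rho$ contributes the factor built from $U$ (the $U'\otimes I_{n}$ term, after $K_{n2}$ reconciles the orderings), while differentiating $\omega_{i}=\sqrt{1-\rho_{i}'\rho_{i}}$ contributes $\partial\omega_{i}/\partial(\rho_{i}'\rho_{i})=-\tfrac{1}{2}\omega_{i}^{-1}$, which in stacked form gives the $-\tfrac{1}{2}\Lambda_{\omega}^{-1}$ prefactor; the symmetrizer $(I_{n^{2}}+K_{n})(\bm\rho'\otimes I_{n})$ and the diagonal selector $E_{d}$ arise from applying the product rule to the Gram matrix $\bm\rho'\bm\rho$ and then extracting its diagonal.

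I expect the main obstacle to be the matrix-calculus bookkeeping rather than any conceptual difficulty: tracking which commutation matrix ($K_{2n}$, $K_{n2}$, or $K_{n}$) is needed to reconcile the competing vec-orderings, and verifying the symmetrizer $(I_{n^{2}}+K_{n})$ in the $\omega$-row, is the delicate part. The one genuinely non-mechanical ingredient is the matrix-square-root differential, which must be handled through the Sylvester/Kronecker-sum identity rather than naive term-by-term differentiation; everything else is a careful but standard application of vec and Kronecker-product identities, together with the Jacobian $J_{i}$ already supplied by Theorem~\ref{thm:CorrTau} and the score and information blocks from the Lemma.
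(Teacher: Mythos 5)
Your proposal is correct and follows essentially the same route as the paper's own proof: apply the Lemma's $(\nabla_\mu,\nabla_\Xi)$ and block-diagonal information (using $\mathcal{I}_{\mu\Xi}=0$), chain through $\zeta\mapsto\xi\mapsto\theta$ with the commutation matrix realignment, obtain $\partial\mathrm{vec}(\Xi)/\partial\omega^{\prime}$ from $\mathrm{vec}(\Lambda_\omega C_e^{1/2})=(C_e^{1/2}\otimes I_n)E_d^{\prime}\omega$, obtain the $\eta$-block by the Sylvester/Kronecker-sum differential of the symmetric square root composed with $\partial\mathrm{vec}(C_e)/\partial\gamma^{\prime}$ and $B$, and build $M$ from the block-diagonal Jacobian $\mathrm{diag}(J_1,\ldots,J_n)$, the factor $U^{\prime}\otimes I_n$ for $\mu=\boldsymbol{\rho}^{\prime}U$, and the product rule on $\boldsymbol{\rho}^{\prime}\boldsymbol{\rho}$ giving $-\tfrac12\Lambda_\omega^{-1}E_d(I_{n^2}+K_n)(\boldsymbol{\rho}^{\prime}\otimes I_n)$. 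Your explicit measurability argument for why the Jacobians pass through the conditional expectation in the information-matrix sandwich is a point the paper leaves implicit, but it is the same proof.
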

\begin{proof}
See Appendix \ref{sec:Proof-of-Theorem1}.
\end{proof}
A common choice for the scaling matrix is the Fisher information $S_{t}=\mathcal{I}_{\zeta_{t}}=\mathbb{E}\left(\nabla_{\zeta_{t}}\nabla_{\zeta_{t}}^{\prime}\right)$,
see \citet{CrealKoopmanLucas:2013}. However, in the present context,
$\mathcal{I}_{\zeta_{t}}$ will not be invertible if $r\geq2$, which
stems from the conditional mean, $\rho_{i}^{\prime}U_{}$, being univariate,
while $\rho_{i}$ is $r$-dimensional. 

A natural starting point is to use the Moore-Penrose inverse, $\mathcal{I}_{\zeta}^{+}$,
of $\mathcal{I}_{\zeta}$. Unfortunately, this choice turns out to
be numerically unstable, as we explain in the next subsection. 

\subsubsection{Moore-Penrose Inverse of Information Matrix}

Note that the parameter vector $\xi$ defined in (\ref{eq:DefineXi})
is a minimal parameterization for the convolution-$t$ distribution,
and the information matrix for $\zeta$ can be expressed as $\mathcal{I}_{\zeta}=\Pi^{\prime}\mathcal{I}_{\xi}\Pi$
with its Moore-Penrose inverse given by
\[
\mathcal{I}_{\zeta}^{+}=\Pi^{+}\mathcal{I}_{\xi}^{-1}\Pi^{+\prime},\quad{\rm with}\quad\Pi^{+}=\left(\begin{array}{cc}
M^{+} & \bm{0}\\
\bm{0} & I
\end{array}\right),
\]
such that the scaled innovation $\varepsilon$ under Moore-Penrose
inverse, denote $\varepsilon_{0}$, is given by
\begin{equation}
\varepsilon_{0}=\mathcal{I}_{\zeta}^{+}\nabla_{\zeta}=\Pi^{+}\mathcal{I}_{\xi}^{-1}\nabla_{\xi}\label{eq:e0}
\end{equation}
where we use $\nabla_{\zeta}=\Pi^{\prime}\nabla_{\xi}$ and the fact
that $\Pi^{+\prime}\Pi^{\prime}=I$. The matrix $M^{+}$ in $\Pi^{+}$
is given by
\[
M^{+}={\rm diag}\left(M_{1}^{+},M_{2}^{+},\ldots,M_{n}^{+}\right),\quad M_{i}^{+}=\left(\begin{array}{c}
\frac{\partial\mu_{i}}{\partial\tau_{i}^{\prime}}\\
\frac{\partial\omega_{i}}{\partial\tau_{i}^{\prime}}
\end{array}\right)^{+},\quad i=1,2,\ldots,n.
\]
with the following expression for $M_{i}^{+}$
\begin{align}
M_{i}^{+} & =\frac{1}{U^{\prime}U\rho_{i}^{\prime}\rho_{i}-\left(U^{\prime}\rho_{i}\right)^{2}}J_{i}^{-1}\left[\begin{array}{c}
\rho_{i}^{\prime}\rho_{i}U^{\prime}-\rho_{i}^{\prime}U\rho_{i}^{\prime}\\
\omega_{i}\left(\rho_{i}^{\prime}UU^{\prime}-U^{\prime}U\rho_{i}^{\prime}\right)
\end{array}\right]^{\prime}.\label{eq:PinvMi}
\end{align}
This expression clarifies the reason behind the numerical instability
for $\mathcal{I}_{\zeta}^{+}$. The random vector, $U$, can be proportional
to (or nearly proportional to) $\rho_{i}$, (which has non-trivial
probability). As the angle between $U$ and $\rho_{i}$ vanishes,
the denominator vanishes to zero at a faster rate than the numerator,
resulting in (arbitrarily) large elements of $M_{i}^{+}$. To avoid
this issue, we adopt the Tikhonov regularized Moore-Penrose inverse
as our scaling matrix. The Tikhonov regularization involves a penalty
term, $\lambda_{i}\geq0$, for each assets, $i=1,\ldots,n$. 

\subsubsection{Tikhonov Regularized Moore-Penrose Inverse}

Let $\bm{\lambda}=\left(\lambda_{1},\ldots,\lambda_{n}\right)^{\prime}$
be a vector of non-negative penalty terms and define
\begin{equation}
M_{i,\lambda_{i}}^{+}=M_{i}^{\prime}\left(M_{i}M_{i}^{\prime}+\lambda_{i}I_{2}\right)^{+},\quad{\rm where}\quad M_{i,\lambda_{i}}^{+}={\rm argmin}_{S}\ \left\Vert SM_{i}-I_{r}\right\Vert _{2}+\lambda_{i}\left\Vert S\right\Vert _{2}.\label{eq:PinvMlambda}
\end{equation}
This is the Tikhonov regularized Moore-Penrose inverse where the boundary
case, $\lambda_{i}=0$, corresponds to the usual Moore-Penrose inverse
in (\ref{eq:PinvMi}). The scaled innovation in (\ref{eq:e0}) now
can be adapted to
\begin{equation}
\varepsilon_{\bm{\lambda}}=\Pi_{\bm{\lambda}}^{+}\mathcal{I}_{\xi}^{-1}\nabla_{\xi},\label{eq:e_Lambda}
\end{equation}
where the Tikhonov regularized matrix $\Pi_{\bm{\lambda}}^{+}$ is
given by
\[
\Pi_{\bm{\lambda}}^{+}=\left(\begin{array}{cc}
M_{\bm{\lambda}}^{+} & \bm{0}\\
\bm{0} & I
\end{array}\right),\quad\text{with}\quad M_{\bm{\lambda}}^{+}={\rm diag}\left(M_{1,\lambda_{1}}^{+},M_{2,\lambda_{2}}^{+},\ldots,M_{n,\lambda_{n}}^{+}\right).
\]
Note that when $\bm{\lambda}=0$, the scaled innovation $\varepsilon_{\bm{\lambda}}$
will return to $\varepsilon_{0}$ in (\ref{eq:e0}).\footnote{One could also consider a regularization based on a common penalty
parameter, $\lambda_{i}=\lambda^{*}$ for all $i=1,2\ldots,n.$ We
do not explore this in this paper.}

In our empirical analysis, we compare three definitions of $\varepsilon$,
given by $\varepsilon_{0}=\Pi^{+}\mathcal{I}_{\xi}^{-1}\nabla_{\xi}$
(Moore-Penrose), $\varepsilon_{\bm{\lambda}}=\Pi_{\bm{\lambda}}^{+}\mathcal{I}_{\xi}^{-1}\nabla_{\xi}$
(Tikhonov), and the very simple choice, $\varepsilon^{*}=\nabla_{\zeta}$
corresponding to $S=I$ with the identity as the scaling matrix. In
specifications using $\varepsilon_{\bm{\lambda}}$, the vector of
penalty parameters, $\bm{\lambda}$, is estimated and the empirical
results clearly favors this choice for scaling matrix. 

\subsection{Decoupled Estimation}

When $n$ is large it becomes necessary to simplify the estimation
problem further. This can be achieved by decoupling the dynamic structure
for factor loadings from that of the idiosyncratic correlation matrix. 

We propose an estimation strategy where the $n$ score-driven models
for $\rho_{i}$, $i=1,\ldots,n$ are estimated separately. These low-dimensional
score-driven models yield the idiosyncratic residuals, $\hat{e}_{i,t}$,
$i=1,\ldots,n$, that are subsequently used to estimate a score-driven
model for the idiosyncratic correlation matrix $C_{e}$.

\subsubsection{Decoupled Model for $i$-th Factor Loadings\label{subsec:DynamicLoadings}}

To construct a score-driven model for the dynamic factor loading $\rho_{i}$
using information from $Z_{i}$, we leverage the fact that the marginal
distribution of a convolution-$t$ distribution can be well approximated
by a univariate Student's $t$ distribution, see \citet{Patil:1965},
\citet{Alcaraz2023}, and references therein. Here we will approximating
the marginal distribution with the $t$-distribution that minimizes
Kullback-Leibler divergence. 

Therefore, we assume that
\[
Z_{i}=\rho_{i}^{\prime}U+\omega_{i}e_{i},\quad e_{i}\simeq t_{\nu_{i}^{\star}}^{{\rm std}}\left(0,1\right)
\]
It is important to note that the approximating $t$-distribution,
$t_{\nu_{i}^{\star}}^{\text{std }}(0,1)$, is only used to facilitate
the construction of a score-driven model for $\rho_{i}$. Each of
the score-driven models are estimated by maximizing the quasi log-likelihood
function, defined by the approximating $t$-distribution, $t_{\nu_{i}^{\star}}^{\text{std }}\left(\mu_{i},\omega_{i}\right)$,
where $\mu_{i}=\rho_{i}^{\prime}U$ and $\omega_{i}=\sqrt{1-\rho_{i}^{\prime}\rho_{i}}$.
The corresponding quasi log-likelihood function is given by
\begin{align*}
\ell^{\star}(Z_{i}|U) & =c_{\nu_{i}^{\star}}-\log\left(\omega_{i}\right)-\tfrac{\nu_{i}^{\star}+1}{2}\log\left(1+\tfrac{e_{i}^{2}}{\nu_{i}^{\star}-2}\right),
\end{align*}
where $c_{\nu_{i}^{\star}}$ is a normalizing constant. From (\ref{eq:e_Lambda}),
Theorem \ref{thm:SeqDynamicLoadings} shows that the scaled score
used to update the dynamics of $\tau\left(\rho_{i}\right)$ is given
by
\[
\varepsilon_{\lambda_{i}}=M_{i,\lambda_{i}}^{+}\mathcal{I}_{\xi_{i}}^{-1}\nabla_{\xi_{i}},\quad{\rm where}\quad\nabla_{\xi_{i}}=\tfrac{\partial\ell\left(Z_{i}|U\right)}{\xi_{i}},\quad\xi_{i}=\left(\mu_{i},\omega_{i}\right)^{\prime},
\]
and $M_{i,\lambda_{i}}^{+}$ is defined in (\ref{eq:PinvMlambda}).
Note that in this modeling strategy, the marginal densities of $Z_{i}$
are approximated by a univariate Student's $t$-distribution with
degrees of freedom $\nu_{i}^{\star}$. The parameter $\nu_{i}^{\star}$
is estimated using a score-driven approach, which can be interpreted
as minimizing the Kullback-Leibler (KL) divergence between the true
marginal density function and the approximating density function.
\begin{thm}
\label{thm:SeqDynamicLoadings}Suppose that $e_{i}\sim t_{\nu_{i}^{\star}}^{{\rm std}}\left(0,1\right)$
such that $Z_{i}|U\sim t_{\nu_{i}^{\star}}^{{\rm std}}\left(\mu_{i},\omega_{i}^{2}\right)$
with $\mu_{i}=\rho_{i}^{\prime}U$ and $\omega_{i}=\sqrt{1-\rho_{i}^{\prime}\rho_{i}}$,
then the score vector and information matrix with respect to the dynamic
parameter, $\tau\left(\rho_{i}\right)$, are given by
\begin{align*}
\nabla_{\tau_{i}} & =J_{i}\left[\tfrac{W_{i}e_{i}}{\omega_{i}}U+\tfrac{1-W_{i}e_{i}^{2}}{\omega_{i}^{2}}\rho_{i}\right]\\
\mathcal{I}_{\tau_{i}} & =J_{i}\left[\tfrac{\left(\nu_{i}^{\star}+1\right)\nu_{i}^{\star}}{\left(\nu_{i}^{\star}+3\right)\left(\nu_{i}^{\star}-2\right)}\tfrac{UU^{\prime}}{\omega_{i}^{2}}+\tfrac{2\nu_{i}^{\star}}{\nu_{i}^{\star}+3}\tfrac{1-\omega_{i}^{2}}{\omega_{i}^{4}}\right]J_{i}
\end{align*}
where $W_{i}=\frac{\nu_{i}^{\star}+1}{\nu_{i}^{\star}-2+e_{i}^{2}}$
and $J_{i}$ is the Jacobian matrix $\partial\rho_{i}/\partial\tau_{i}$.
With $\xi_{i}\equiv\left(\mu_{i},\omega_{i}\right)^{\prime}$, we
have
\[
\nabla_{\xi_{i}}=\frac{1}{\omega_{i}}\left(\begin{array}{c}
\begin{array}{c}
W_{i}e_{i}\\
W_{i}e_{i}^{2}-1
\end{array}\end{array}\right),\quad\mathcal{I}_{\xi_{i}}=\frac{1}{\omega_{i}^{2}}\left[\begin{array}{cc}
\tfrac{\left(\nu_{i}^{\star}+1\right)\nu_{i}^{\star}}{\left(\nu_{i}^{\star}+3\right)\left(\nu_{i}^{\star}-2\right)} & 0\\
0 & \frac{2\nu_{i}^{\star}}{\nu_{i}^{\star}+3}
\end{array}\right],
\]
such that the scaled score for updating the dynamics of $\tau_{i}$
is given by $\varepsilon_{\lambda_{i}}=M_{i,\lambda_{i}}^{+}\mathcal{I}_{\xi_{i}}^{-1}\nabla_{\xi_{i}}$,
where $M_{i,\lambda_{i}}^{+}$ is defined in (\ref{eq:PinvMlambda}).
\end{thm}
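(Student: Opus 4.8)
The plan is to obtain everything from the univariate quasi-log-likelihood $\ell^\star(Z_i|U)$ by first differentiating with respect to the intermediate parameter $\xi_i=(\mu_i,\omega_i)'$, and then transporting the score and information to the target parameter $\tau_i$ via the chain rule through the composite map $\tau_i\mapsto\rho_i\mapsto(\mu_i,\omega_i)$. The Jacobian $J_i=\partial\rho_i/\partial\tau_i'$ of the inner map is supplied by Theorem \ref{thm:CorrTau}, and crucially it is symmetric, which is exactly what makes the resulting information matrix a clean symmetric sandwich.

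First I would compute $\nabla_{\xi_i}$. Writing the standardized residual as $e_i=(Z_i-\mu_i)/\omega_i$, so that $\partial e_i/\partial\mu_i=-1/\omega_i$ and $\partial e_i/\partial\omega_i=-e_i/\omega_i$, direct differentiation of
\[
\ell^\star(Z_i|U)=c_{\nu_i^\star}-\log\omega_i-\tfrac{\nu_i^\star+1}{2}\log\!\left(1+\tfrac{e_i^2}{\nu_i^\star-2}\right)
\]
gives $\partial\ell^\star/\partial\mu_i=W_ie_i/\omega_i$ and $\partial\ell^\star/\partial\omega_i=(W_ie_i^2-1)/\omega_i$, where the factor $W_i=(\nu_i^\star+1)/(\nu_i^\star-2+e_i^2)$ emerges from the derivative of the log term. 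This reproduces the stated $\nabla_{\xi_i}$ and is entirely routine.

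Next is the information matrix $\mathcal{I}_{\xi_i}=\mathbb{E}[\nabla_{\xi_i}\nabla_{\xi_i}']$, which I expect to be the main obstacle, because it requires closed-form moments of $e_i$ under the standardized $t_{\nu_i^\star}$ law. The off-diagonal entry involves $\mathbb{E}[W_i^2e_i^3-W_ie_i]$, which vanishes because the density is even and each summand is odd in $e_i$; this yields the block-diagonal structure immediately. For the diagonal I would evaluate $\mathbb{E}[W_i^2e_i^2]$ and $\mathbb{E}[(W_ie_i^2-1)^2]=\mathbb{E}[W_i^2e_i^4]-2\mathbb{E}[W_ie_i^2]+1$, reducing each to integrals of the form $\int e^{2m}(\nu_i^\star-2+e^2)^{-p}\,de$ evaluated with the beta/gamma identity $\int_{-\infty}^{\infty}x^{2m}(c+x^2)^{-p}dx=c^{m-p+1/2}\Gamma(m+\tfrac12)\Gamma(p-m-\tfrac12)/\Gamma(p)$. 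After cancelling the Gamma ratios this gives $\mathbb{E}[W_ie_i^2]=1$, $\mathbb{E}[W_i^2e_i^2]=(\nu_i^\star+1)\nu_i^\star/[(\nu_i^\star-2)(\nu_i^\star+3)]$, and $\mathbb{E}[W_i^2e_i^4]=3(\nu_i^\star+1)/(\nu_i^\star+3)$, so that $\mathbb{E}[(W_ie_i^2-1)^2]=2\nu_i^\star/(\nu_i^\star+3)$, delivering the claimed diagonal $\mathcal{I}_{\xi_i}$.

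Finally I would push these to $\tau_i$ through $\nabla_{\tau_i}=M_i'\nabla_{\xi_i}$ and $\mathcal{I}_{\tau_i}=M_i'\mathcal{I}_{\xi_i}M_i$, where $M_i=\partial\xi_i/\partial\tau_i'$ has rows $\partial\mu_i/\partial\tau_i'=U'J_i$ (from $\mu_i=\rho_i'U$) and $\partial\omega_i/\partial\tau_i'=-(\rho_i'/\omega_i)J_i$ (from $\omega_i=\sqrt{1-\rho_i'\rho_i}$). Substituting $\nabla_{\xi_i}$ and using the symmetry $J_i'=J_i$ gives $\nabla_{\tau_i}=J_i[\,W_ie_iU/\omega_i+(1-W_ie_i^2)\rho_i/\omega_i^2\,]$, and substituting the diagonal $\mathcal{I}_{\xi_i}$ collects into the sandwiched quadratic form $J_i[\,\cdots\,]J_i$ stated in the theorem. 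The only genuinely nontrivial step is the moment evaluation in the previous paragraph; everything else is bookkeeping once $J_i$ is imported from Theorem \ref{thm:CorrTau}.
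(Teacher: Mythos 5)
Your proposal is correct and follows essentially the same route as the paper's proof: differentiate the quasi log-likelihood with respect to $\xi_i=(\mu_i,\omega_i)'$, use the stated $t$-distribution moments to get the diagonal $\mathcal{I}_{\xi_i}$, and transport both score and information to $\tau_i$ via $M_i=\bigl[U';\,-(\rho_i'/\omega_i)\bigr]J_i$ together with the symmetry of $J_i$. The only difference is that you spell out the beta/gamma integral evaluations of $\mathbb{E}[W_i^2e_i^2]$ and $\mathbb{E}[(W_ie_i^2-1)^2]$, which the paper simply asserts; your stated values agree with the paper's.
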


\subsubsection{Score-driven Model for Idiosyncratic Correlation Matrix $C_{e}$\label{subsec:Second-Step}}

The final stage of the decoupled estimation method is defined by a
score-driven model for the idiosyncratic correlation matrix $C_{e}$
using the estimated residuals $\hat{e}=\left(e_{1},e_{2},\ldots,e_{n}\right)^{\prime}$
from the first stage of decoupled estimation. Our distributional specifications
are variants of the convolution $t$-distribution $e\sim\mathrm{CT}_{\boldsymbol{m},\boldsymbol{\nu}}^{{\rm std}}(0,C_{e}^{1/2})$,
and we consider four structure of $C_{e}$. (1) The most flexible
structure is an Unrestricted Correlation matrix, which is heavily
parameterized. A dynamic model for the elements in an unrestricted
$C_{e}$ is therefore only practical when the number of assets $n$
is small. We estimate the model with $n=12$ in our empirical analysis,
but it will be difficult to increase the dimension much beyond 12.
For larger dimensions we will need to impose structure on $C_{e}$.
(2) The Block Correlation matrix for $C_{e}$ offers a way to reduce
the number of free parameters, we will use subindustries to define
the block structure. When the number of blocks is large (e.g. $K=63$
which is used in some of our empirical analyses) additional structure
is needed, and we consider two sparse variants that achieve this:
(3) The Sparse Block Correlation matrix, which has idiosyncratic correlations
for stocks in different sectors to be zero, and (4) Diagonal Block
Correlation matrix for $C_{e}$, where the only nontrivial correlations
are between stocks in the same subindustry. 

The two sparse variants of $C_{e}$ greatly simplify the estimation
of the dynamic model for $C_{e}$, because it can be decomposed in
to estimation involving subvectors of $e$ that are specific to sectors
or subindustries. For instance, with the Diagonal Block structure
we have $C_{e}={\rm diag}\left(C_{[1,1]},C_{[2,2]},\ldots,C_{[K,K]}\right)$
where $C_{[k,k]}$ is an equicorrelation matrix with coefficient $\varrho_{kk}$.
The unique element in $\log(C_{[k,k]})$, $\eta_{k}\in\mathbb{R}$,
is given from (\ref{eq:gamma_k}), and it can be modeled as an unrestricted
parameter. Let $e_{[k]}$ denote the corresponding subvector of $e$
such that $\mathrm{var}(e_{[k]})=C_{[k,k]}$. These subvectors are
independent for the Gaussian, HT, and CT distributions (provide that
the cluster structure in CT is also based on subindustries such that
$e_{(k)}\sim t_{n_{k},\nu_{k}}^{{\rm std}}(0,C_{[k,k]})$). This independence
allows us to estimate the dynamics of $\eta_{k}$ separately by using
information from $e_{[k]}$ alone. The scaled score for updating $\eta_{k}$
is then given by
\[
\varepsilon_{[k]}=\mathcal{I}_{\eta_{k}}^{-1}\nabla_{\eta_{k}},\quad\text{where}\quad\nabla_{\eta_{k}}=\tfrac{\partial\ell\left(e_{[k]}\right)}{\eta_{k}}.
\]
The formulas for the score $\nabla_{\eta_{k}}$ and the information
matrix $\mathcal{I}_{\eta_{k}}$ are provided in Theorem \ref{thm:TquiCorr}.
\begin{thm}[Multivariate-$t$ with Equicorrelation Matrix $C$]
\label{thm:TquiCorr}Suppose that $X\sim t_{n,\nu}^{{\rm std}}(0,C)$,
where $C$ is an equicorrelation matrix with correlation coefficient
$\varrho$. Then the score and information matrix with respect to
dynamic parameter, $\eta=\frac{1}{n}\log\left(1+n\frac{\varrho}{1-\varrho}\right)$,
are given by
\begin{align*}
\nabla_{\eta} & =-\tfrac{J}{2}\left[\tfrac{\left(n-1\right)}{1+\left(n-1\right)\varrho}-\tfrac{\left(n-1\right)}{1-\varrho}+W\left(\tfrac{X^{\prime}X}{\left(1-\varrho\right)^{2}}-\tfrac{\left[1+\left(n-1\right)\rho^{2}\right]X^{\prime}\iota_{n}\iota_{n}^{\prime}X}{\left(1-\varrho\right)^{2}\left[1+(n-1)\varrho\right]^{2}}\right)\right],\\
\mathcal{I}_{\eta} & =J^{2}\left[\tfrac{1}{4}\tfrac{\left(3\phi-1\right)\left(n-1\right)^{2}}{\left(1+\left(n-1\right)\varrho\right)^{2}}+\tfrac{\phi}{2}\tfrac{\left(n-1\right)}{\left(1-\varrho\right)^{2}}+\tfrac{1-\phi}{4}\tfrac{\left(1-\left(n+1\right)\varrho\right)\left(n-1\right)^{2}}{\left(1+\left(n-1\right)\varrho\right)\left(1-\varrho\right)^{2}}\right],
\end{align*}
where $J_{}=\frac{1}{\left(1-\varrho\right)\left(1+\left(n-1\right)\varrho\right)}$
and $\iota_{n}$ is a n-dimensional vector of ones, and
\[
\phi=\tfrac{\nu+n}{\nu+n+2},\qquad W=\tfrac{\nu+n}{\nu-2+Q},\qquad Q=\tfrac{X^{\prime}X}{1-\varrho}-\tfrac{\varrho X^{\prime}\iota_{n}\iota_{n}^{\prime}X}{\left(1-\varrho\right)\left[1+(n-1)\varrho\right]}.
\]
\end{thm}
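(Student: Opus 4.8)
The plan is to specialize the standardized multivariate-$t$ log-likelihood (\ref{eq:LogLstuT}) to the equicorrelation case and then propagate the derivatives through the scalar reparametrization $\eta\mapsto\varrho$. First I would record the three closed-form ingredients for $C=(1-\varrho)I_n+\varrho\iota_n\iota_n'$: the determinant $|C|=(1-\varrho)^{n-1}(1+(n-1)\varrho)$, the Sherman--Morrison inverse $C^{-1}=\tfrac{1}{1-\varrho}\bigl(I_n-\tfrac{\varrho}{1+(n-1)\varrho}\iota_n\iota_n'\bigr)$, and the resulting quadratic form $X'C^{-1}X=Q$, which reproduces exactly the $Q$ in the statement. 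Since the normalizing constant $c_{\nu,n}$ is free of $\varrho$, the score with respect to $\varrho$ reduces to $\partial_\varrho\ell=-\tfrac12\partial_\varrho\log|C|-\tfrac{W}{2}\partial_\varrho Q$ with $W=\tfrac{\nu+n}{\nu-2+Q}$, where I have used $\partial_\varrho\log(1+\tfrac{1}{\nu-2}Q)=\tfrac{\partial_\varrho Q}{\nu-2+Q}$.

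Differentiating the two-term log-determinant gives the deterministic piece $\partial_\varrho\log|C|=\tfrac{n-1}{1+(n-1)\varrho}-\tfrac{n-1}{1-\varrho}$, while differentiating $Q$ gives $\partial_\varrho Q=\tfrac{X'X}{(1-\varrho)^2}-\tfrac{[1+(n-1)\varrho^2]X'\iota_n\iota_n'X}{(1-\varrho)^2[1+(n-1)\varrho]^2}$, the $1+(n-1)\varrho^2$ numerator arising from differentiating $\varrho/[(1-\varrho)(1+(n-1)\varrho)]$. Collecting these into the bracket and applying the chain rule $\nabla_\eta=\partial_\varrho\ell\cdot(\partial\varrho/\partial\eta)$, with the Jacobian $J=\partial\varrho/\partial\eta$ of the inverse transformation following (\ref{eq:gamma_k}), yields the stated $\nabla_\eta=-\tfrac{J}{2}[\cdots]$. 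This settles the score.

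For the information matrix I would use the scalar-reparametrization identity $\mathcal{I}_\eta=J^2\mathcal{I}_\varrho$ with $\mathcal{I}_\varrho=\mathbb{E}[(\partial_\varrho\ell)^2]$. Because the score has zero mean under correct specification, the deterministic log-determinant term is tied to the mean of the stochastic part, $\partial_\varrho\log|C|=-\mathbb{E}[W\partial_\varrho Q]$, so in fact $\mathcal{I}_\varrho=\tfrac14\mathrm{Var}(W\partial_\varrho Q)$ and the two contributions need not be handled separately. The key device is to rotate $X$ into the eigenbasis of $C$ --- equivalently the canonical representation (\ref{eq:CanForm}) with a single block, $K=1$ --- which splits $\partial_\varrho Q$ into two invariant quadratic forms: a one-dimensional piece $R_1$ along $\iota_n$ (eigenvalue $1+(n-1)\varrho$) and the $(n-1)$-dimensional orthogonal piece $R_2$ (eigenvalue $1-\varrho$), so that $Q=\tfrac{R_1}{1+(n-1)\varrho}+\tfrac{R_2}{1-\varrho}$ and $\partial_\varrho Q$ is an explicit linear combination of $R_1$ and $R_2$. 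Via the Gaussian scale-mixture representation of the standardized $t$, $R_1$ and $R_2$ are proportional to independent $\chi^2_1$ and $\chi^2_{n-1}$ variables divided by a common $\chi^2_\nu$, so the variances and covariance of the $W$-weighted pieces reduce to elementary chi-square moments, with the $t$-weight second moment producing $\phi=\tfrac{\nu+n}{\nu+n+2}$.

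The main obstacle is the bookkeeping that yields the three coefficient groups, in particular the factor $3\phi-1$ on the $\iota_n$-direction term: the $3$ is the fourth moment of the one-dimensional Gaussian component, $\phi$ is the weight second moment, and the $-1$ is the subtraction of the squared mean in $\mathrm{Var}(W\partial_\varrho Q)$; keeping the $R_1$--$R_2$ covariance term and the common $\chi^2_\nu$ denominator straight is where arithmetic errors are easiest to make. An alternative that bypasses the $t$-moment calculation is to specialize the general $\mathcal{I}_\Xi$ from the Lemma preceding Theorem \ref{thm:JointScore} to $G=1$ and $\Xi=C^{1/2}$ and then transform through $\eta\to\varrho\to C\to\Xi$, but this trades the moment computation for careful handling of the commutation matrix $K_n$ and is not obviously shorter.
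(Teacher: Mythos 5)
Your treatment of the score is exactly the paper's: both specialize the multivariate-$t$ log-likelihood to the equicorrelation case via the Sherman--Morrison inverse and $|C|=(1-\varrho)^{n-1}\bigl(1+(n-1)\varrho\bigr)$, differentiate in $\varrho$ (your $1+(n-1)\varrho^{2}$ numerator appears identically in the paper's proof, confirming the $\rho^{2}$ in the theorem statement is a typo for $\varrho^{2}$), and push through the Jacobian $J=\partial\varrho/\partial\eta$. Where you genuinely depart is the information matrix. The paper computes no moments at all: it imports $\mathcal{I}_{A}$ from Theorem 3 of \citet{TongHansenArchakov:2024} --- the information for the canonical one-block parameter $A=1+(n-1)\varrho$ --- and then applies the chain rule twice, $\mathcal{I}_{\varrho}=(n-1)^{2}\mathcal{I}_{A}$ and $\mathcal{I}_{\eta}=J^{2}\mathcal{I}_{\varrho}$. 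You instead give a self-contained derivation: the zero-mean-score identity yields $\mathcal{I}_{\varrho}=\tfrac{1}{4}\mathrm{Var}\bigl(W\,\partial_{\varrho}Q\bigr)$, the eigenbasis of $C$ gives $\partial_{\varrho}Q=R_{2}/(1-\varrho)^{2}-(n-1)R_{1}/\bigl(1+(n-1)\varrho\bigr)^{2}$ with $R_{1}=(\iota_{n}^{\prime}X)^{2}/n$ and $R_{2}=X^{\prime}X-R_{1}$, and the scale-mixture representation reduces everything to Dirichlet moments: normalizing $(G_{1}^{2},\chi_{n-1}^{2},S)$ by its sum, with $S\sim\chi_{\nu}^{2}$ the common mixing variable, gives a $\mathrm{Dirichlet}(\tfrac{1}{2},\tfrac{n-1}{2},\tfrac{\nu}{2})$ vector whose second moments, multiplied by the $(\nu+n)^{2}$ from $W^{2}$, produce precisely the factor $\phi=\tfrac{\nu+n}{\nu+n+2}$ and --- after subtracting the squared mean --- the three coefficient groups $3\phi-1$, $\phi$, $1-\phi$; I verified this bookkeeping reproduces the paper's $\mathcal{I}_{\varrho}$ exactly, including your diagnosis of where the $3$ and the $-1$ in $3\phi-1$ come from. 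The paper's route buys brevity and parallelism with Theorem \ref{thm:HTquiCorr}, which is proved the same way by citation; yours buys a proof that stands on its own within this paper and makes the origin of $\phi$ transparent, and your variance identity $\mathcal{I}_{\varrho}=\tfrac{1}{4}\mathrm{Var}(W\partial_{\varrho}Q)$, which lets the deterministic log-determinant term drop out entirely, is a genuine simplification the paper does not exploit.
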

\noindent When $e$ follows the HT distribution, the subvectors $e_{[k]}$
are mutually independent and follow a HT distribution with $\boldsymbol{\nu}_{k}$
degrees of freedom, where $\boldsymbol{\nu}_{k}$ is the subvector
of $\boldsymbol{\nu}$ that corresponds to the $k$-th group. In this
case, we can still estimate the dynamics of each of the $\eta_{k}$-vectors
separately, solely using the information provided by $e_{[k]}$. The
formulas for $\nabla_{\eta_{k}}$ and $\mathcal{I}_{\eta_{k}}$ are
provided in Theorem \ref{thm:HTquiCorr}.
\begin{thm}[HT distribution with Equicorrelation Matrix $C$]
\label{thm:HTquiCorr}Suppose that $X\sim\mathrm{CT}_{\boldsymbol{n},\boldsymbol{\nu}}^{{\rm std}}(0,C^{1/2})$
, where $C$ is an equicorrelation matrix with correlation coefficient
$\varrho$ and $\bm{n}=\iota_{n}$. Then the score vector and information
matrix with respect to the dynamic parameters, $\eta=\frac{1}{n}\log\left(1+n\frac{\varrho}{1-\varrho}\right)$,
are given by
\begin{align*}
\nabla_{\eta} & =-\tfrac{J}{2}\ensuremath{\left[\tfrac{\left(n-1\right)}{1+\left(n-1\right)\varrho}-\tfrac{\left(n-1\right)}{1-\varrho}\right]}-\tfrac{J}{2}\sum_{i=1}^{n}W_{i}V_{i}\left[\tfrac{\left(nX_{i}-\iota_{n}^{\prime}X\right)}{n\left(1-\varrho\right)^{3/2}}-\tfrac{\left(n-1\right)\iota_{n}^{\prime}X}{n\left(1+(n-1)\varrho\right)^{3/2}}\right]\\
\mathcal{I}_{\eta} & =J^{2}\left[\tfrac{1}{4}\tfrac{\left(n^{2}+\mathcal{A}_{1}\right)\left(n-1\right)^{2}}{n^{2}\left(1+\left(n-1\right)\varrho\right)^{2}}+\tfrac{1}{4}\tfrac{\left(n-1\right)\mathcal{A}_{2}}{n^{2}\left(1-\varrho\right)^{2}}+\tfrac{1}{2}\tfrac{\left(n-1\right)^{2}\mathcal{A}_{3}}{n^{2}\left(1+\left(n-1\right)\varrho\right)\left(1-\varrho\right)}\right]
\end{align*}
where $J$ and $\iota_{n}$ are defined in Theorem \ref{thm:TquiCorr},
and 
\[
W_{i}=\tfrac{\nu_{i}+1}{\nu_{i}-2+V_{i}^{2}},\quad V_{i}=\tfrac{nX_{i}-\iota_{n}^{\prime}X}{n\sqrt{1-\varrho}}+\tfrac{\iota_{n}^{\prime}X}{n\sqrt{1+(n-1)\varrho}}
\]
\[
\mathcal{A}_{1}=3\iota_{n}^{\prime}\phi+\left(n-1\right)\iota_{n}^{\prime}\psi-2n,\hspace{0.75em}\mathcal{A}_{2}=\left(3\iota_{n}^{\prime}\phi-n\right)\left(n-1\right)+\iota_{n}^{\prime}\psi+n,\hspace*{0.75em}\mathcal{A}_{3}=\iota_{n}^{\prime}\psi+2n-3\iota_{n}^{\prime}\phi
\]
where $\phi$ and $\psi$ are both vectors with $\phi_{i}=\frac{\nu_{i}+1}{\nu_{i}+3}$,
$\psi_{i}=\frac{\phi_{i}\nu_{i}}{\nu_{i}-2}$ for $i=1,2,\ldots,n$.
\end{thm}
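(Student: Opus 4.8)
The plan is to reduce everything to the univariate $t$-innovations $V=C^{-1/2}X$, whose components are mutually independent by the HT structure ($\bm m=\bm n=\iota_n$, so $G=n$), and to exploit the explicit spectral form of the equicorrelation matrix. First I would record that $C=a\,P+b\,P^{\bot}$ with $a=1+(n-1)\varrho$, $b=1-\varrho$, $P=\tfrac1n\iota_n\iota_n'$ and $P^{\bot}=I_n-P$, so that $C^{\pm1/2}=a^{\pm1/2}P+b^{\pm1/2}P^{\bot}$ and $\log|C|=\log a+(n-1)\log b$; this is just the $K=1$ instance of the canonical form (\ref{eq:CanForm}), with $A=a$ and $\delta=b$. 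Writing $T=\iota_n'V$ and using $\iota_n'X=\sqrt a\,T$ together with $nX_i-\iota_n'X=\sqrt b\,(nV_i-T)$, the HT log-likelihood from the HT subsection specializes to $\ell=c-\tfrac12\log|C|-\sum_{i=1}^n\tfrac{\nu_i+1}{2}\log\!\big(1+\tfrac{V_i^2}{\nu_i-2}\big)$, and substituting these identities reproduces exactly the stated closed form for $V_i$.

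Next I would obtain the score by differentiating in $\varrho$ and applying the chain rule $\nabla_\eta=J\,\partial\ell/\partial\varrho$ with $J=\tfrac1{ab}$ taken from (\ref{eq:gamma_k}). The $\log|C|$ term contributes the deterministic piece $-\tfrac{J}{2}\big[\tfrac{n-1}{a}-\tfrac{n-1}{b}\big]$, while differentiating each $\log\!\big(1+V_i^2/(\nu_i-2)\big)$ produces $-W_iV_i\,\partial V_i/\partial\varrho$ with $W_i=\tfrac{\nu_i+1}{\nu_i-2+V_i^2}$; computing $\partial V_i/\partial\varrho$ from the closed form of $V_i$ gives precisely the stochastic bracket in the statement, so $\nabla_\eta$ follows directly. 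Writing $\partial\ell/\partial\varrho=-\tfrac12(D+S)$ with $D$ deterministic and $S=\sum_i W_iV_iB_i$ the stochastic part, the key simplification for the next step is that, substituting $\iota_n'X=\sqrt a\,T$ and $nX_i-\iota_n'X=\sqrt b\,(nV_i-T)$ into $B_i$ and using $\tfrac1{nb}+\tfrac{n-1}{na}=\tfrac1{ab}$, the bracket collapses to $B_i=\tfrac{V_i}{b}-\tfrac{T}{ab}$, so that $S=\tfrac1b H-\tfrac1{ab}TG$ with $H=\sum_i W_iV_i^2$ and $G=\sum_i W_iV_i$.

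For the information matrix I would use $\mathcal{I}_\eta=J^2\mathcal{I}_\varrho=\tfrac{J^2}{4}\,\mathrm{Var}(S)$, since $D$ is constant and $\mathbb{E}[S]=-D$. Because the $V_i$ are independent, all moments factor and reduce to the three scalar $t$-identities $\mathbb{E}[W_iV_i^2]=1$, $\mathbb{E}[W_i^2V_i^2]=\psi_i$, $\mathbb{E}[W_i^2V_i^4]=3\phi_i$ (with odd moments vanishing), which are exactly the moment identities underlying the diagonal entries of $\mathcal{I}_{\xi_i}$ in Theorem \ref{thm:SeqDynamicLoadings}. These give $\mathbb{E}[S]=-\tfrac{n(n-1)\varrho}{ab}$ (confirming the score has mean zero and matching $-D$), together with $\mathbb{E}[H^2]=\mathbb{E}[HTG]=3\iota_n'\phi+n(n-1)$ and the one genuinely laborious term $\mathbb{E}[T^2G^2]=3\iota_n'\phi+(n-1)\iota_n'\psi+2n(n-1)$.

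The main obstacle is this last fourth-order expectation: expanding $\sum_{j,j',k,k'}\mathbb{E}[V_jV_{j'}W_kV_kW_{k'}V_{k'}]$ and retaining only those index patterns for which every variable enters at an even total power leaves three surviving configurations—all four indices equal; the two $V$'s on one index and the two $W V$'s on another; and a pairwise match of one $V$ with one $WV$—whose multiplicities must be counted carefully. Assembling $\mathrm{Var}(S)=\mathbb{E}[S^2]-(\mathbb{E}[S])^2$ over the common denominator $a^2b^2$ then yields a quadratic in $a$. The final, slightly delicate algebraic step is to eliminate the apparent $\varrho^2$ (equivalently $D^2$) term by substituting $(n-1)\varrho=a-1$ and using the identity $a+(n-1)b=n$ to redistribute it; matching the coefficients of $a^2$, $a^1$, and $a^0$ then reproduces exactly the $\mathcal{A}_1,\mathcal{A}_2,\mathcal{A}_3$ grouping in the stated formula. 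The three resulting coefficient identities (the constant term being $\mathcal{A}_1=\mathbb{E}[T^2G^2]-2n^2$, together with the two analogous relations for the $a^1$ and $a^2$ coefficients) can be verified to hold after using $(n-1)^2+2(n-1)+1=n^2$, which is what makes the $\iota_n'\phi$ and $\iota_n'\psi$ sums combine into the compact $\mathcal{A}_j$ expressions.
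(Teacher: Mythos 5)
Your proposal is correct, and for the score it follows essentially the same route as the paper: both write $V=C^{-1/2}X$ using the explicit two-eigenvalue form of the equicorrelation matrix, differentiate $V_i$ with respect to $\varrho$ holding the data fixed, and pass to $\eta$ by the chain rule with $J$. Where you genuinely depart from the paper is the information matrix. The paper obtains it by specializing the general block-correlation information formula of \citet[Theorem 5]{TongHansenArchakov:2024} (the $\mathcal{I}_{A}$ expression involving $\Omega_{A}$, $\Upsilon_{K}^{e}$, $E_{d}$, etc.) to a single block and then converting via $\mathcal{I}_{\varrho}=(n-1)^{2}\mathcal{I}_{A}$. You instead compute $\mathrm{Var}(S)$ from first principles: the collapse $B_{i}=V_{i}/b-T/(ab)$ using $a+(n-1)b=n$ (a clean structural simplification the paper never states explicitly), mutual independence of the $V_{i}$ under the HT structure, and the three univariate $t$-moment identities $\mathbb{E}[W_{i}V_{i}^{2}]=1$, $\mathbb{E}[W_{i}^{2}V_{i}^{2}]=\psi_{i}$, $\mathbb{E}[W_{i}^{2}V_{i}^{4}]=3\phi_{i}$, which are exactly the moments behind $\mathcal{I}_{\xi_{i}}$ in Theorem \ref{thm:SeqDynamicLoadings}. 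I verified your fourth-moment bookkeeping, $\mathbb{E}[H^{2}]=\mathbb{E}[HTG]=3\iota_{n}^{\prime}\phi+n(n-1)$ and $\mathbb{E}[T^{2}G^{2}]=3\iota_{n}^{\prime}\phi+(n-1)\iota_{n}^{\prime}\psi+2n(n-1)$, and checked that assembling $\mathrm{Var}(S)$ over $a^{2}b^{2}$ and eliminating $\varrho^{2}$ via $(n-1)\varrho=a-1$ reproduces the stated $\mathcal{A}_{1},\mathcal{A}_{2},\mathcal{A}_{3}$ grouping; it does. What each approach buys: yours is self-contained and elementary, makes the mean-zero property of the score transparent, and does not require the reader to trust or correctly specialize the companion paper's machinery, at the cost of the laborious index-pattern counting; the paper's is much shorter on the page precisely because it outsources that work. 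One small slip in your write-up: you state $\mathbb{E}[S]=-n(n-1)\varrho/(ab)$, but the computation gives $\mathbb{E}[S]=+n(n-1)\varrho/(ab)$, which is what equals $-D$ as your parenthetical correctly asserts; this sign inconsistency is harmless since only $\mathrm{Var}(S)=\mathbb{E}[S^{2}]-(\mathbb{E}[S])^{2}$ enters the final formula.
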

Under the sparse block structure where blocks are defined by subindustries,
we can express $C_{e}={\rm diag}\left(C_{s_{1}},C_{s_{2}},\ldots,C_{s_{N}}\right)$
where $C_{s_{j}}$ is an block correlation matrix of the $j$-th sector
with group size $s_{j}$, for $j=1,\ldots,N$. From the formula (\ref{eq:eta_can}),
the unique element in $\log(C_{s_{j}})$ is $\eta_{j}\in\mathbb{R}^{s_{j}\left(s_{j}+1\right)/2}$,
which can be modeled in an unrestricted way. Let $e_{\{j\}}$ denote
the subvector of $e$ that corresponds to the $j$-th sector, such
that $e_{\{j\}}\sim(0,C_{s_{j}})$. When combined with either the
Gaussian, CT, or HT distribution, $e_{\{j\}}$ will be mutually independent.
This allows us to estimate the dynamics of $\eta_{j}$ separately,
using only the information from $e_{\{j\}}$. The scaled score for
updating $\eta_{j}$ is given by
\[
\varepsilon_{\{j\}}=\mathcal{I}_{\eta_{j}}^{-1}\nabla_{\eta_{j}},\quad\text{where}\quad\nabla_{\eta_{j}}=\tfrac{\partial\ell\left(e_{\{j\}}\right)}{\eta_{j}}.
\]
When $e$ following CT or HT distribution, subvectors $e_{\{j\}}$
will also have a CT or HT distribution, $e_{\{j\}}\sim\text{CT}_{\boldsymbol{n}_{j},\boldsymbol{\nu}_{j}}^{{\rm std}}(0,C_{s_{1}}^{1/2})$,
with dimensions and degrees of freedom inherited from the parents
distribution.

Note that this factorization is not feasible with the MT distribution,
as its subvectors are not mutually independent. As a result, a score-driven
model with the MT distribution cannot be estimated in this way, making
it practical only in low-dimensional settings.

\section{Empirical Analysis\label{sec:EmpiricalAnalysis}}

\subsection{Data}

Our empirical analysis is based on daily close-to-close returns of
constituents of the S\&P 500 index and return series for each of the
factor variables.\footnote{The data were obtained from the CRSP database of WRDS.}
Our sample period is from January 2007 to December 2023, with a total
of $T=4,278$ trading days.

\subsubsection{Individual Returns series}

The initial set of stocks consists of all stocks that were constituents
of the S\&P 500 index at some point during the sample period. We excluded
stocks using the following exclusion criteria: (E1) stocks for which
returns were not available over the full sample period; (E2) stocks
from the Real Estate and Communication Services sectors, as is standard
in the existing literature;\footnote{The Real Estate sector was introduced as the 11th sector in 2016,
and the Communication Services sector was introduces as an expanded
and rebranded version of the Telecommunication Services sector. One
consequence of this is that  the corresponding sector-specific SPDR
ETFs, (XLRE) and (XLC), are not available in the full sample period.} (E3) stocks in sub-industries with fewer than 3 stocks (after applying
E1 and E2). This resulted in a balanced sample with 323 stocks from
63 subindustries in 9 sectors. These are listed in Table \ref{tab:SeqEstLarge1}
with ticker symbols under the 63 sub-industries in the Supplemental
Material. We use the Global Industry Classification Standard (GICS)
to sort stocks by their eight-digit GICS code in ascending order,
and use sub-industries to define block structures in the idiosyncratic
correlation matrix.

Before we analyze the full set of $n=323$ stocks, which we refer
to as the \emph{Large Universe}, we will analyze a smaller subset
with $n=12$ stocks, which we label the \emph{Small Universe}. The
Small Universe provides a framework where it is possible to make comparisons
with existing models. The simpler model also enables us to compare
joint estimation of all model parameters with the two-stage method
we use for the large universe.

\subsubsection{Factor Variables}

We use as many as $r=15$ factors. These include those in the Fama-French
five-factor model (FF5)\footnote{FF5 represents the following portfolios: Market (MKT), small-minus-big
(SMB), high-minus-low (HML), robust-minus-weak (RMW), and conservative-minus-aggressive
(CMA).} and the momentum factor (UMD) by \citet{Carhart:1997}. These six
cross-sectional factors were obtained from the \citet{french_datalibrary}
Data Library. We also include nine sector-specific factors using returns
for SPDR ETFs: Energy (XLE), Materials (XLB), Industrials (XLI), Consumer
Discretionary (XLY), Consumer Staples (XLP), Health Care (XLV), Financials
(XLF), Information Technology (XLK), and Utilities (XLU), as in \citet{FanFurgerXiu:2016},
\citet{YacineXiu2017}, \citet{DaiLuXiu:2019}, and \citet{Bodilsen:2024}. 

\subsection{Standardized Asset Returns and Factor Variables}

All model-specifications are based on the standardized variables,
$Z_{t}$ and $F_{t}$, that are obtained from $n+r$ univariate EGARCH
models. These take the form
\begin{align}
R_{i,t} & =a_{0,i}+a_{1,i}R_{i,t-1}+\sigma_{i,t}Z_{i,t},\quad Z_{i,t}\sim(0,1),\nonumber \\
\log\sigma_{i,t+1} & =b_{0,i}+b_{1,i}\log\sigma_{i,t}+b_{2,i}Z_{i,t}+b_{3,i}|Z_{i,t}|,\label{eq:EGARCH}
\end{align}
for $i=1,\ldots,n$, which has an AR(1) structure for the conditional
mean. The same univariate model is estimated for each of the $r$
factor return series. Estimation results for the $n+r$ EGARCH models
are reported in the Supplemental Material, see Table \ref{tab:EGARCHest}.

\subsection{Small Universe }

We begin by analyzing the Small Universe with $n=12$ stocks and $r=8$
factors. These stocks belong to four sub-industries (with three stocks
in each) within the Health Care and Information Technology sectors
(two sub-industries from each). The $r=8$ factors are given by FF5,
UMD, and the ETFs, XLV and XLK, that represent the two sector factors,
Health Care and Information Technology, respectively. 

The lower triangle of Table \ref{tab:EmpiricalCorrelations} reports
the full-sample unconditional correlation matrix and the upper triangle
reports the idiosyncratic correlations. The block structures, as defined
by sub-industries, are highlighted with shaded regions. The idiosyncratic
correlations are based on the residuals obtained by regressing $Z_{i,t}$
on $F_{t}$ and constant, $i=1,\ldots,n$.
\begin{table}
\caption{Unconditional and Idiosyncratic correlations (Small Universe)}

\begin{centering}
\begin{footnotesize} 
\begin{tabularx}{\textwidth}{p{1cm}YYYYYYYYYYYYYYY}
\toprule
\midrule
         & \multicolumn{3}{c}{Health Care } & \multicolumn{3}{c}{Managed} & \multicolumn{3}{c}{Semiconductor} & \multicolumn{3}{c}{\multirow{2}[1]{*}{Semiconductors}} \\
          & \multicolumn{3}{c}{Distributors} & \multicolumn{3}{c}{Health Care} & \multicolumn{3}{c}{Materials \& Equipment} & \multicolumn{3}{c}{} \\
\\[-0.2cm]      
          & ABC   & CAH   & MCK   & HUM   & UNH   & WLP   & AMAT  & KLAC  & LRCX  & ADI   & MCHP  & TXN \\
    \midrule
\\[-0.2cm]    
    ABC   & \cellcolor{black!8}1.000 & \cellcolor{black!8}0.472 & \cellcolor{black!8}0.520 & 0.044 & 0.058 & 0.080 & \cellcolor{black!8}0.013 & \cellcolor{black!8}0.021 & \cellcolor{black!8}0.015 & 0.033 & 0.001 & 0.012 \\
    CAH   & \cellcolor{black!8}0.640 & \cellcolor{black!8}1.000 & \cellcolor{black!8}0.497 & 0.025 & 0.034 & 0.062 & \cellcolor{black!8}0.043 & \cellcolor{black!8}0.044 & \cellcolor{black!8}0.040 & 0.038 & 0.013 & 0.006 \\
    MCK   & \cellcolor{black!8}0.669 & \cellcolor{black!8}0.655 & \cellcolor{black!8}1.000 & 0.063 & 0.068 & 0.074 & \cellcolor{black!8}0.012 & \cellcolor{black!8}0.001 & \cellcolor{black!8}0.012 & -0.01 & -0.02 & -0.01 \\
\\[-0.2cm] 
    HUM   & 0.311 & 0.300 & 0.322 & \cellcolor{black!8}1.000 & \cellcolor{black!8}0.499 & \cellcolor{black!8}0.482 & 0.011 & -0.01 & 0.005 & \cellcolor{black!8}-0.01 & \cellcolor{black!8}-0.02 & \cellcolor{black!8}-0.02 \\
    UNH   & 0.383 & 0.370 & 0.388 & \cellcolor{black!8}0.651 & \cellcolor{black!8}1.000 & \cellcolor{black!8}0.575 & 0.033 & 0.005 & 0.019 & \cellcolor{black!8}0.017 & \cellcolor{black!8}0.011 & \cellcolor{black!8}0.007 \\
    WLP   & 0.384 & 0.376 & 0.379 & \cellcolor{black!8}0.635 & \cellcolor{black!8}0.732 & \cellcolor{black!8}1.000 & 0.038 & 0.026 & 0.030 & \cellcolor{black!8}0.003 & \cellcolor{black!8}0.018 & \cellcolor{black!8}0.001 \\
\\[-0.2cm] 
    AMAT  & \cellcolor{black!8}0.260 & \cellcolor{black!8}0.301 & \cellcolor{black!8}0.264 & 0.215 & 0.285 & 0.276 & \cellcolor{black!8}1.000 & \cellcolor{black!8}0.539 & \cellcolor{black!8}0.613 & 0.351 & 0.369 & 0.329 \\
    KLAC  & \cellcolor{black!8}0.262 & \cellcolor{black!8}0.296 & \cellcolor{black!8}0.254 & 0.202 & 0.265 & 0.266 & \cellcolor{black!8}0.753 & \cellcolor{black!8}1.000 & \cellcolor{black!8}0.628 & 0.349 & 0.346 & 0.336 \\
    LRCX  & \cellcolor{black!8}0.254 & \cellcolor{black!8}0.290 & \cellcolor{black!8}0.256 & 0.205 & 0.269 & 0.264 & \cellcolor{black!8}0.790 & \cellcolor{black!8}0.794 & \cellcolor{black!8}1.000 & 0.354 & 0.358 & 0.344 \\
\\[-0.2cm] 
    ADI   & 0.291 & 0.317 & 0.271 & \cellcolor{black!8}0.221 & \cellcolor{black!8}0.297 & \cellcolor{black!8}0.277 & 0.663 & 0.653 & 0.651 & \cellcolor{black!8}1.000 & \cellcolor{black!8}0.520 & \cellcolor{black!8}0.540 \\
    MCHP  & 0.259 & 0.291 & 0.252 & \cellcolor{black!8}0.200 & \cellcolor{black!8}0.275 & \cellcolor{black!8}0.268 & 0.674 & 0.654 & 0.655 & \cellcolor{black!8}0.754 & \cellcolor{black!8}1.000 & \cellcolor{black!8}0.474 \\
    TXN   & 0.283 & 0.302 & 0.275 & \cellcolor{black!8}0.213 & \cellcolor{black!8}0.295 & \cellcolor{black!8}0.279 & 0.660 & 0.656 & 0.654 & \cellcolor{black!8}0.769 & \cellcolor{black!8}0.737 & \cellcolor{black!8}1.000 \\
\\[-0.2cm] 
\midrule
\bottomrule
\end{tabularx}
\end{footnotesize}
\par\end{centering}
{\small Note: Unconditional correlations for the 12 assets are reported
below the diagonal, whereas the (crude) idiosyncratic correlations,
based on the residuals obtained by regressing each asset return on
the eight factor returns. \label{tab:EmpiricalCorrelations}}{\small\par}
\end{table}

Unconditional correlations are quite similar within each block of
the correlation matrix. Two assets from the same subindustry are more
correlated than stocks from different subindustries, albeit subindustries
within the same sector are more correlated that subindustries from
different sectors.

Correlations are greatly reduced by controlling for the eight factors.
In fact, the idiosyncratic correlations involving stocks from different
sectors are very close to zero. The average between-sectors idiosyncratic
correlation is 0.012. The factor variables also explain much of the
within-sector correlations, but to a lesser extend. In the Health
Care sector, between-subindustries idiosyncratic correlations average
about 0.056, whereas they average about 0.348 in the Information Technology
sector. While this correlations are based on static factor loadings,
the results do suggest that correlations can be non-negligible between
subindustries in the same sector. The idiosyncratic correlations within
subindustries are even larger and range from 47.2\% to 62.8\%. Thus,
the eight factors, including the two sector factors, are clearly unable
to explains the correlation structure between stocks in the same subindustry,
at least not with static factor loadings.

The structure of the empirical idiosyncratic correlation matrix in
Table \ref{tab:EmpiricalCorrelations} is the motivation for the sparse
block correlation structures we introduced in Section \ref{subsec:Sparse-Block-Correlation}.
The results in Table \ref{tab:EmpiricalCorrelations} are based on
a simplified structure with static factor loadings, and the dynamic
factor correlation model may therefore reduce idiosyncratic correlations
further. Next, we present the empirical results for the dynamic correlations
model for the eight factor variables.

\begin{table}
\caption{Model for Dynamic Factor Variables (Small Universe)}

\begin{centering}
\begin{footnotesize}
\begin{tabularx}{\textwidth}{p{1cm}YYYYYYYYYYYYYYY}
\toprule 
\midrule
    \multicolumn{9}{c}{Panel A: Unconditional Correlation Matrix $C_F$} \\
\\[-0.2cm]     
          & MKT   & SMB   & HML   & RMW   & CMA   & UMD   & XLV   & XLK \\
    \midrule
\\[-0.2cm]    
    MKT   & 1.000 & 0.301 & 0.097 & -0.295 & -0.140 & -0.057 & 0.766 & 0.891 \\
    SMB   & 0.301 & 1.000 & 0.113 & -0.279 & 0.019 & -0.129 & 0.136 & 0.165 \\
    HML   & 0.097 & 0.113 & 1.000 & -0.064 & 0.500 & -0.307 & -0.051 & -0.096 \\
    RMW   & -0.295 & -0.279 & -0.064 & 1.000 & 0.093 & 0.014 & -0.221 & -0.228 \\
    CMA   & -0.140 & 0.019 & 0.500 & 0.093 & 1.000 & -0.050 & -0.101 & -0.268 \\
    UMD   & -0.057 & -0.129 & -0.307 & 0.014 & -0.050 & 1.000 & 0.001 & 0.032 \\
    XLV   & 0.766 & 0.136 & -0.051 & -0.221 & -0.101 & 0.001 & 1.000 & 0.645 \\
    XLK   & 0.891 & 0.165 & -0.096 & -0.228 & -0.268 & 0.032 & 0.645 & 1.000 \\
\\[-0.2cm] 
    \midrule
    \multicolumn{9}{c}{Panel B: Estimation Results on Dynamic Correlations Models of $C_F$} \\
\\[-0.2cm]   
          & $\bar{\mu}^F$ & $\bar{\beta}^F$ & $\bar{\alpha}^F$ & $\bar{\nu}^F$  & ${\nu}^F$(range) & $p$     & -$\ell (F)$   & BIC \\
    \midrule
\\[-0.2cm]    
    Gauss & 0.086 & 0.985 & 0.031 & $\infty$ &       & 84    & 36967 & 74636 \\
    MT & 0.088 & 0.986 & 0.032 & 10.30 &       & 85    & \textbf{36124} & \textbf{72959} \\
    HT & 0.086 & 0.982 & 0.034 & 8.084 & [6.259,11.43] & 92    & 36254 & 73277 \\
\\[-0.2cm]    
\midrule
\bottomrule
\end{tabularx}
\end{footnotesize}
\par\end{centering}
{\small Note: Panel A reports the sample correlation matrix for the
eight factors (FF5+}UMD{\small +XLV+XLK) and Panel B presents estimation
results for score-driven models based on thee distributional specifications.
The vector of transformed correlations, $\gamma(C_{F})$, has 28 elements
and the average estimates of their score-model parameters, $\mu^{F}$,
$\alpha^{F}$, $\beta^{F}$, and $\nu^{F}$. The HT distribution has
8 degrees of freedom parameters and we report their range. The number
of free parameters in each model is denoted by $p$, and we report
the negative value of the log-likelihood function $\ell(F)$ for each
model and the corresponding Bayesian Information Criterion (BIC),
which is defined by $\mathrm{BIC}=-2\ell+p\log T$. The largest log-likelihood
and smallest BIC values are highlighted in bold font.\label{tab:FactorEstimation}}{\small\par}
\end{table}

\subsubsection{Correlation Matrix for Factor Variables}

Table \ref{tab:FactorEstimation} reports the estimation results for
the dynamic model for the eight factor variables in $F_{t}$. The
correlation matrix in Panel A has correlations ranging from -30.7\%
to as much as 89.1\%. The sector returns are highly correlated with
market returns (and with each other). 

Panel B of Table \ref{tab:FactorEstimation} presents the estimation
results for the dynamic conditional correlation model for $F_{t}$,
i.e. $C_{F,t}$ using the score-driven model for $\gamma_{t}^{F}=\operatorname{vecl}\left(\log C_{F,t}\right)$,
by \citet{TongHansenArchakov:2024}. The maximized log-likelihood
function and the corresponding BIC show that the heavy-tailed specifications,
MT and HT, outperform the Gaussian specification, with MT (multivariate-$t$)
having the best performance. That HT is inferior to MT, despite having
more parameters, suggests that the nonlinear dependencies that a multivariate
distribution implies is important for the factor variables. Consequently,
we adopt the MT specification for $F_{t}$ and use the corresponding
$C_{F,t}$ to define the orthogonal factor variables, $U_{t}=C_{F,t}^{-1/2}F_{t}$,
that are used in the subsequent analysis.
\begin{table}
\caption{Estimation Results for Factor Loadings (CT with Full Block $C_{e}$)}

\begin{centering}
\begin{footnotesize}
\begin{tabularx}{\textwidth}{p{0.8cm}YYYYYYYYYYYYYYY}
\toprule
\midrule
         & \multicolumn{3}{c}{Health Care } & \multicolumn{3}{c}{Managed} & \multicolumn{3}{c}{Semiconductor} & \multicolumn{3}{c}{\multirow{2}[1]{*}{Semiconductors}} \\
          & \multicolumn{3}{c}{Distributors} & \multicolumn{3}{c}{Health Care} & \multicolumn{3}{c}{Materials \& Equipment} & \multicolumn{3}{c}{} \\
\\[-0.2cm]      
          & ABC   & CAH   & MCK   & HUM   & UNH   & WLP   & AMAT  & KLAC  & LRCX  & ADI   & MCHP  & TXN \\ 
    \midrule
\\[-0.2cm]    
    MKT   & \cellcolor{black!8}0.266 & \cellcolor{black!8}0.289 & \cellcolor{black!8}0.274 & 0.201 & 0.259 & 0.240 & \cellcolor{black!8}0.365 & \cellcolor{black!8}0.362 & \cellcolor{black!8}0.361 & 0.382 & 0.390 & 0.393 \\
    SMB   & \cellcolor{black!8}0.067 & \cellcolor{black!8}0.076 & \cellcolor{black!8}0.048 & 0.038 & 0.008 & 0.015 & \cellcolor{black!8}0.114 & \cellcolor{black!8}0.120 & \cellcolor{black!8}0.122 & 0.102 & 0.135 & 0.087 \\
    HML   & \cellcolor{black!8}0.005 & \cellcolor{black!8}0.044 & \cellcolor{black!8}0.028 & 0.038 & 0.043 & 0.070 & \cellcolor{black!8}-0.01 & \cellcolor{black!8}-0.02 & \cellcolor{black!8}-0.01 & -0.01 & -0.01 & -0.02 \\
    RMW   & \cellcolor{black!8}0.007 & \cellcolor{black!8}-0.02 & \cellcolor{black!8}-0.01 & 0.000 & 0.013 & -0.01 & \cellcolor{black!8}-0.08 & \cellcolor{black!8}-0.07 & \cellcolor{black!8}-0.08 & -0.09 & -0.08 & -0.06 \\
    CMA   & \cellcolor{black!8}0.010 & \cellcolor{black!8}0.008 & \cellcolor{black!8}-0.01 & -0.03 & -0.03 & 0.005 & \cellcolor{black!8}-0.06 & \cellcolor{black!8}-0.05 & \cellcolor{black!8}-0.05 & -0.04 & -0.03 & -0.01 \\
    UMD   & \cellcolor{black!8}0.013 & \cellcolor{black!8}-0.03 & \cellcolor{black!8}0.013 & 0.001 & 0.030 & 0.001 & \cellcolor{black!8}-0.02 & \cellcolor{black!8}0.001 & \cellcolor{black!8}-0.01 & -0.02 & -0.01 & -0.01 \\
    XLV   & \cellcolor{black!8}0.426 & \cellcolor{black!8}0.429 & \cellcolor{black!8}0.416 & 0.400 & 0.482 & 0.452 & \cellcolor{black!8}0.189 & \cellcolor{black!8}0.192 & \cellcolor{black!8}0.184 & 0.228 & 0.202 & 0.224 \\
    XLK   & \cellcolor{black!8}0.144 & \cellcolor{black!8}0.176 & \cellcolor{black!8}0.160 & 0.120 & 0.156 & 0.143 & \cellcolor{black!8}0.459 & \cellcolor{black!8}0.455 & \cellcolor{black!8}0.437 & 0.465 & 0.468 & 0.487 \\
\\[-0.2cm] 
\midrule
\bottomrule
\end{tabularx}
\end{footnotesize}
\par\end{centering}
{\small Note: The averaged factor loadings for the model with Cluster-$t$
distribution (CT) under Full Block $C_{e}$ structure from Table \ref{tab:EstSmall}.
\label{tab:Factorloadings}}{\small\par}
\end{table}

\subsubsection{Factor Loadings and Idiosyncratic Correlations (joint estimation)}

Next we turn to the central component of the model, which is the dynamic
model of factor loadings and idiosyncratic correlations. We can estimated
this components of the model in two ways: Either jointly (simultaneously)
or decoupled. In this section we use joint estimation. These estimation
results are presented in Table \ref{tab:EstSmall}, where we report
the average values of $\mu$, $\alpha$, $\beta$, $\nu$, $\log\lambda$,
as well as the individual estimates of $\nu_{i}$ and $\log\lambda_{i}$.
Several interesting observations emerge from Table \ref{tab:EstSmall}.
\begin{enumerate}
\item The heavy-tailed distributions offer substantial improvements over
the Gaussian specification, as was the case in the model for $F_{t}.$
But, here we find that the convolution-$t$ distributions outperform
the multivariate-$t$ distribution with $\ell(Z|U)$ increases as
much as 1,400 units. 
\item The estimates of $\alpha$ (the updating-parameter in the score model)
are larger for the heavy-tailed specifications than the Gaussian specification.
We attribute this to the $W$-variables that mitigates the impact
of extreme values (outliers). This is a well-known feature of score-driven
models, see e.g. \citet{Harvey2013}. 
\item The CT specification has a cluster structure with fewer degrees of
freedom and more nonlinear dependencies that the HT specification.
While the the estimated degrees of freedom in all CT specifications
are quite similar to the average of the corresponding coefficients
in the HT specifications, the largest log-likelihood (and best BIC)
is achieved by the CT specification with the sparse block correlation
structure. The additional degrees of freedom in the HT distribution
cannot make up for the lack of nonlinear dependencies that the CT
distribution can accommodate. 
\item A comparison of specifications with different structures for the idiosyncratic
correlation matrix, reveals that the diagonal structure (DBC) is too
restrictive. This was also suggested by the preliminary empirical
results in Table \ref{tab:EmpiricalCorrelations}, which were based
on static factor loadings.
\item The penalty parameters, $\lambda_{1},\ldots,\lambda_{n}$ are all
estimated to be large, which demonstrates a need to regularize the
scaling matrix in the score model. The corresponding shocks are denoted
$\varepsilon_{\lambda}.$ The bottom of Table \ref{tab:EstSmall}
include comparisons with the score models use the Moore-Penrose inverse
as scaling matrix (with shocks denoted $\varepsilon_{0}$ because
$\lambda=0$) and the simplistic $S=I$ scaling matrix, with shocks
denoted $\varepsilon^{\ast}$. Using regularized model has the largest
log-likelihood, $\ell\left(Z|U\right)$, in all cases. On average
the regularized score driven model has a log-likelihood that is about
349 unit larger than the unregularized variant, $\varepsilon_{0}$,
and 375 units larger than the simplistic unscaled score model.
\end{enumerate}
\begin{sidewaystable}
\caption{Factor Loadings and Idiosyncratic Correlations (joint estimation)}

\begin{centering}
\begin{scriptsize} 
\begin{tabularx}{\textwidth}{p{1.5cm}YYYYp{-0.1cm}YYYYp{-0.4cm}YYYYp{-0.1cm}YYYYp{-0.1cm}YYYYYYYY}
\toprule
\midrule
          & \multicolumn{4}{c}{Unrestricted $C_e$}   &       & \multicolumn{4}{c}{Full Block $C_e$}  &       & \multicolumn{4}{c}{Sparse Block $C_e$} &       & \multicolumn{4}{c}{Diagonal Block $C_e$} \\
\cmidrule{2-5}\cmidrule{7-10}\cmidrule{12-15}\cmidrule{17-20}          & {Gauss} & MT    & CT & HT &       & {Gauss} & MT   & CT & HT &       & {Gauss} & MT    & CT & HT &       & {Gauss} & MT     & CT & HT \\
\midrule
\\[-0.2cm]  
    $\bar{\mu}$ & 0.128 & 0.131 & 0.129 & 0.130 &       & 0.146 & 0.145 & 0.146 & 0.146 &       & 0.150 & 0.150 & 0.151 & 0.150 &       & 0.154 & 0.151 & 0.152 & 0.151 \\
\\[-0.3cm]  
    $\bar{\beta}$ & 0.949 & 0.942 & 0.951 & 0.950 &       & 0.924 & 0.963 & 0.937 & 0.959 &       & 0.948 & 0.936 & 0.938 & 0.952 &       & 0.946 & 0.953 & 0.939 & 0.961 \\
\\[-0.3cm] 
    $\bar{\alpha}$ & 0.010 & 0.016 & 0.016 & 0.018 &       & 0.014 & 0.020 & 0.028 & 0.027 &       & 0.015 & 0.025 & 0.029 & 0.028 &       & 0.016 & 0.024 & 0.026 & 0.027 \\
\\[-0.2cm]
    $\bar{\nu}$    & $\infty$ & 4.874 & 3.764 & 3.449 &       & $\infty$ & 4.841 & 3.751 & 3.437 &       & $\infty$ & 4.850 & 3.755 & 3.441 &       & $\infty$ & 4.835 & 3.815 & 3.480 \\
\\[-0.2cm]
    ${\nu}_1$    &       &       &       & 3.675 &       &       &       &       & 3.624 &       &       &       &       & 3.632 &       &       &       &       & 3.661 \\
    ${\nu}_2$    &       &       & 3.640 & 3.194 &       &       &       & 3.633 & 3.231 &       &       &       & 3.641 & 3.237 &       &       &       & 3.666 & 3.251 \\
    ${\nu}_3$    &       &       &       & 3.312 &       &       &       &       & 3.279 &       &       &       &       & 3.281 &       &       &       &       & 3.296 \\
    ${\nu}_4$    &       &       &       & 2.998 &       &       &       &       & 3.079 &       &       &       &       & 3.079 &       &       &       &       & 3.084 \\
    ${\nu}_5$    &       &       & 3.710 & 3.867 &       &       &       & 3.719 & 3.673 &       &       &       & 3.722 & 3.667 &       &       &       & 3.729 & 3.672 \\
    ${\nu}_6$    &       &       &       & 3.570 &       &       &       &       & 3.549 &       &       &       &       & 3.544 &       &       &       &       & 3.548 \\
    ${\nu}_7$    &       &       &       & 3.587 &       &       &       &       & 3.643 &       &       &       &       & 3.660 &       &       &       &       & 3.732 \\
    ${\nu}_8$    &       &       & 3.782 & 3.113 &       &       &       & 3.733 & 3.289 &       &       &       & 3.735 & 3.304 &       &       &       & 3.801 & 3.361 \\
    ${\nu}_9$    &       &       &       & 3.707 &       &       &       &       & 3.510 &       &       &       &       & 3.513 &       &       &       &       & 3.488 \\
    ${\nu}_{10}$   &       &       &       & 3.411 &       &       &       &       & 3.304 &       &       &       &       & 3.305 &       &       &       &       & 3.369 \\
    ${\nu}_{11}$   &       &       & 3.925 & 3.240 &       &       &       & 3.918 & 3.295 &       &       &       & 3.923 & 3.298 &       &       &       & 4.064 & 3.466 \\
    ${\nu}_{12}$   &       &       &       & 3.715 &       &       &       &       & 3.768 &       &       &       &       & 3.771 &       &       &       &       & 3.833 \\
          &       &       &       &       &       &       &       &       &       &       &       &       &       &       &       &       &       &       &  \\
    $\bar{\log\lambda}$ & 5.744 & 3.121 & 2.937 & 3.035 &       & 5.639 & 3.555 & 3.431 & 3.344 &       & 5.913 & 3.685 & 3.234 & 4.501 &       & 6.228 & 3.827 & 4.576 & 3.974 \\
\\[-0.2cm]
    $\log \lambda_1$ & 3.469 & 3.652 & 4.716 & 5.398 &       & 3.442 & 4.237 & 6.259 & 7.017 &       & 3.384 & 4.112 & 5.356 & 13.11 &       & 3.511 & 7.229 & 4.927 & 10.63 \\
    $\log \lambda_2$ & 9.574 & 4.165 & 3.753 & 6.445 &       & 10.37 & 8.421 & 9.987 & 7.854 &       & 11.23 & 9.323 & 7.170 & 11.49 &       & 13.48 & 8.017 & 10.85 & 8.133 \\
    $\log \lambda_3$ & 8.384 & 3.149 & 3.433 & 3.523 &       & 9.023 & 3.483 & 3.855 & 3.720 &       & 9.803 & 3.454 & 3.921 & 3.713 &       & 12.34 & 3.634 & 4.102 & 3.887 \\
    $\log \lambda_4$ & 9.817 & 6.419 & 8.099 & 9.049 &       & 10.27 & 9.782 & 8.841 & 9.371 &       & 11.17 & 9.896 & 9.745 & 13.26 &       & 10.66 & 8.556 & 19.02 & 10.06 \\
    $\log \lambda_5$ & 2.578 & 1.723 & 1.390 & 0.950 &       & 2.782 & 1.824 & 1.379 & 1.134 &       & 2.683 & 1.820 & 1.381 & 1.132 &       & 2.692 & 1.776 & 1.408 & 1.254 \\
    $\log \lambda_6$ & 4.958 & 4.482 & 3.369 & 3.223 &       & 5.157 & 4.220 & 3.503 & 3.400 &       & 5.133 & 4.172 & 3.561 & 3.444 &       & 4.897 & 4.274 & 3.627 & 3.682 \\
    $\log \lambda_7$ & 6.138 & 3.391 & 2.402 & 1.949 &       & 3.711 & 2.598 & 1.862 & 1.906 &       & 3.686 & 2.737 & 1.947 & 1.958 &       & 3.495 & 2.624 & 2.276 & 2.009 \\
    $\log \lambda_8$ & 9.702 & 3.134 & 2.666 & 0.660 &       & 10.62 & 1.568 & 0.636 & 0.375 &       & 11.50 & 2.000 & 0.727 & 0.418 &       & 12.53 & 1.815 & 1.009 & 0.735 \\
    $\log \lambda_9$ & 3.759 & 1.741 & 1.409 & 0.791 &       & 1.915 & 1.396 & 0.994 & 0.858 &       & 1.931 & 1.551 & 1.120 & 0.918 &       & 2.440 & 1.728 & 2.281 & 1.289 \\
    $\log \lambda_{10}$ & 2.558 & 1.056 & 0.868 & 1.429 &       & 1.882 & 1.019 & 0.820 & 1.572 &       & 1.875 & 1.080 & 0.837 & 1.610 &       & 2.055 & 1.567 & 1.672 & 2.476 \\
    $\log \lambda_{11}$ & 4.375 & 2.748 & 2.406 & 2.120 &       & 4.130 & 3.042 & 2.501 & 2.180 &       & 4.127 & 2.907 & 2.495 & 2.188 &       & 3.736 & 3.148 & 2.818 & 2.540 \\
    $\log \lambda_{12}$ & 3.614 & 1.798 & 0.732 & 0.879 &       & 4.355 & 1.072 & 0.535 & 0.741 &       & 4.423 & 1.165 & 0.544 & 0.759 &       & 2.882 & 1.554 & 0.908 & 0.987 \\
          &       &       &       &       &       &       &       &       &       &       &       &       &       &       &       &       &       &       &  \\
    $p$     & 498   & 499   & 502   & 510   &       & 330   & 331   & 334   & 342   &       & 318   & 319   & 322   & 330   &       & 312   & 313   & 316   & 324 \\
\\[-0.3cm]
    -$\ell (Z|U)$   & 50987 & 45891 & 44498 & \textbf{44455} &       & 51155 & 46059 & \textbf{44639} & 44657 &       & 51166 & 46073 & \textbf{44659} & 44671 &       & 51860 & 46827 & \textbf{45459} & 45547 \\
\\[-0.3cm]
    BIC   & 106138 & 95954 & 93193 & \textbf{93174} &       & 105069 & 94886 & \textbf{92071} & 92174 &       & 104991 & 94813 & \textbf{92010} & 92101 &       & 106329 & 96271 & \textbf{93560} & 93803 \\
\\[0cm]  
    \multicolumn{20}{l}{Differences in $\ell (Z|U)$ relative to $\varepsilon_{\lambda}=\Pi_{\lambda}^{+}\mathcal{I}_{\xi}^{-1}\nabla_{\xi}$ } \\
\\[-0.2cm]     
    $\varepsilon^*=\nabla_{\zeta}$ & -280   & -330   & -356   & -367   &       & -294   & -331   & -358   & -359   &       & -304   & -341   & -371   & -377   &       & -447   & -435   & -515   & -541 \\
    $\varepsilon_0=\mathcal{I}_{\zeta}^{+}\nabla_{\zeta}$  & -371   & -286   & -293   & -302   &       & -340   & -287   & -328   & -298   &       & -376   & -292   & -325   & -309   &       & -494   & -370   & -437   & -468 \\
\\[-0.2cm]    
\midrule
\bottomrule
\end{tabularx}
\end{scriptsize}
\par\end{centering}
{\small Note: Joint estimation results for the dynamic models of factor
loadings and the idiosyncratic correlation matrix. We report the average
values of $\mu$, $\alpha$, $\beta$, $\nu$, $\log\lambda$, as
well as the individual estimates of $\nu_{i}$ and $\log\lambda_{i}$.
The number of model parameters is denoted by $p$, and we also report
the negative log-likelihood function, $-\ell(Z|U)$, and the Bayesian
Information Criterion (BIC). At the bottom, we compare the differences
in $\ell(Z|U)$ with two alternative specifications for the scaled
score, relative to the regularized version proposed in this paper.
The largest log-likelihood and smallest BIC values for each structure
of $C_{e}$ are highlighted in bold font.\label{tab:EstSmall}}{\small\par}
\end{sidewaystable}

\begin{figure}[t]
\centering
\centering{}\includegraphics[width=1\textwidth]{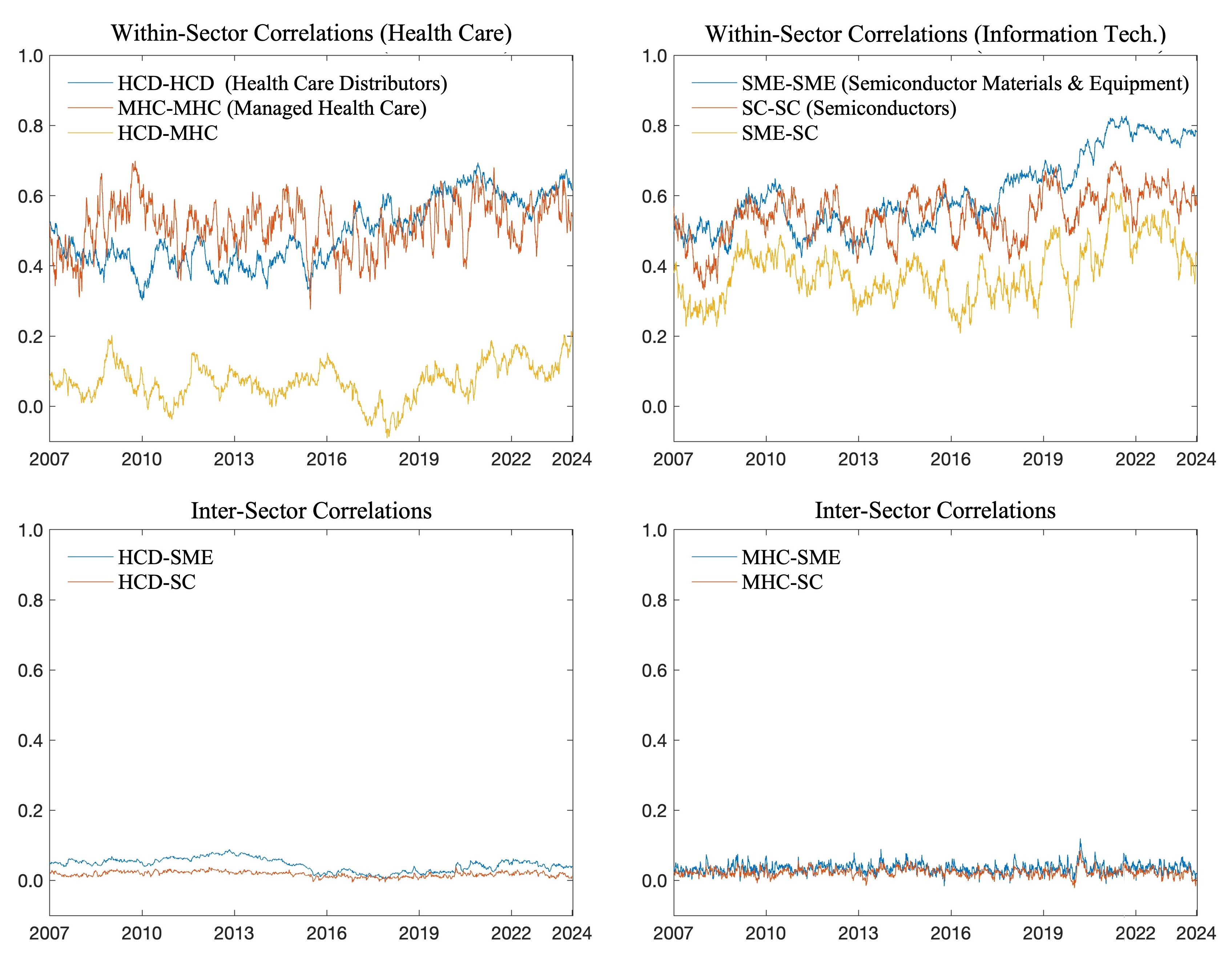}\caption{{\small Time series of idiosyncratic block correlations among different
sub-industries in small universe. The idiosyncratic correlations are
based on the CT distribution with Full Block $C_{e}$ in Table \ref{tab:EstSmall}.\label{fig:BlockCorrSmall}}}
\end{figure}

Figure \ref{fig:BlockCorrSmall} illustrates the time series of block
correlations for the specification estimated with Cluster-$t$ distribution
(CT) under Full Block $C_{e}$. The upper panels reveal sizable correlations
for stocks in the same sector, especially for stocks in the same subindustry.
The average correlation between the two Health Care subindustries,
HCD and MHC, is just 0.056, but there is substantial time variation
in this correlation that exceeds 0.20 on several occasions. This underscores
the importance of permitting non-trivial idiosyncratic correlations
to be time-varying. The lower panels in Figure \ref{fig:BlockCorrSmall}
present the estimated paths for correlations between stocks in different
sectors. These are also close to zero and remain fairly constant over
time. These observations support a sparse block correlation structure
for $C_{e}$, which imposes idiosyncratic shocks from different sectors
to be be zero. This correlation structure, combined with the CT distributions
has the smallest BIC.
\begin{figure}[t]
\centering
\centering{}\includegraphics[width=1\textwidth]{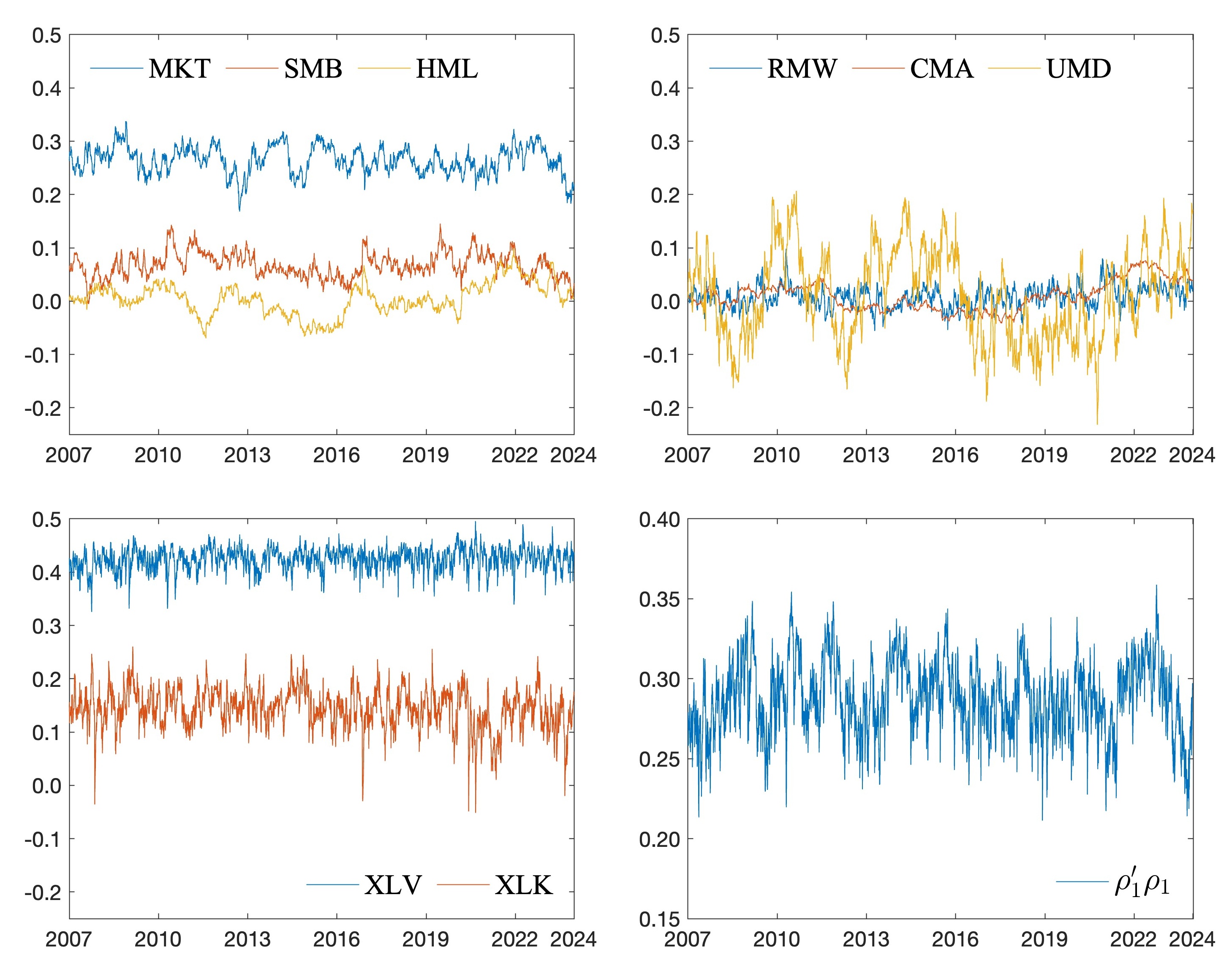}\caption{{\small This figure displays the time series of Factor Loadings of
Company ABC on eight factors (FF5+}UMD{\small +XLV+XLK). The final
subfigure plots the dynamics of the explanatory power of the factors,
measured by $\rho_{1}^{\prime}\rho_{1}$. The factor loadings are
derived from the joint estimation results of the model with Full Block
$C_{e}$ under CT distribution in Table \ref{tab:EstSmall}.\label{fig:FactorLoadingsABC}}}
\end{figure}

Table \ref{tab:Factorloadings} presents the average factor loadings
as estimated with the CT specification a block correlation structure
for $C_{e}$. These are the average values of $\rho_{i,t}$ which
related to $U_{t}$. The lower panel in Table \ref{tab:Factorloadings}
presents the corresponding average loadings on the standardized factor
variables, $F_{t}$. The factor loadings for assets in the same sector
tend to be quite similar, which is also reflected in the average factor
loadings. The most significant factors are, not surprisingly, the
market factor (MKT) and the two sector-factors (XLV and XLK). The
size factor (SMB) is also important for the IT sector, but has small
average impact on stocks in the Health Care sector. 

Figure \ref{fig:FactorLoadingsABC} displays the time series of factor
loadings for Cencora Inc. (which has Ticker ABC). Its factor loadings,
$\rho_{1,t}$, are estimated with the CT specification and the non-sparse
block structure for $C_{e}$. The upper panels are the factor loadings
for the six cross-sectional factors (FF5+UMD) and the lower left panel
presents the factor loadings for the two sector-factors. The lower-right
panel displays the fraction of the variance that is explained by factors,
as defined by $\rho_{1,t}^{\prime}\rho_{1,t}$. The factor loadings
are time-varying and exhibit significant heterogeneity across different
factors. From Table \ref{tab:Factorloadings}, although the average
factor loading for HML is only 0.005, it dynamically ranges from -0.069
to 0.099 over time. Similarly, the factor loading on UMD (the momentum
factor) displays the largest variation, despite having a mean of just
0.013. This stock belongs to the Health Care sector, and we also observe
that its factor loading on XLV (Health care sector) is large and relatively
stable over time. Additionally, there is a notable correlation with
the Information Technology ETF (XLK), reflecting inter-sector dependencies.
The last subfigure reveals that the averaged explanatory power of
the factors is 0.288 and is also dynamic over time.

\subsubsection{Decoupled Estimation}

In this section, we estimate core component of our model with the
decoupled estimation method as described in Section \ref{subsec:DynamicLoadings}.
The approach to estimation can be scaled to large dimensions, because
the dynamic models for factor loadings are estimated separately for
each asset and separately from the dynamic model for $C_{e}$. The
dynamic models for the vector of factor loadings, $\rho_{i}$, is
estimated by maximizing $\ell^{\ast}\left(Z_{i}|U\right)$, $i=1,\ldots,n$,
from which we obtain the residuals, $e_{t}\in\mathbb{R}^{n}$, $t=1,\ldots,T$.
These are subsequently used to estimated the dynamic model for the
idiosyncratic correlation matrix, $C_{e}$, which is done by maximizing
$\ell\left(e\right)$.

The results for the $n$ factor loading models are reported in Table
\ref{tab:SeqFirstSmall}. The factor loading models are estimated
by an approximating $t$-distribution, and the estimated degrees of
freedom, $\nu_{i}^{\star}$, is small for all twelve assets, indicating
that they all exhibit heavy tails. 

While decoupled estimation is less efficient than joint estimation,
it does yield quite similar paths for the estimated factor loadings.
The empirical correlation between the two sets of factor loadings,
denoted ${\rm corr}(\rho)$, is reported in the last row of Table
\ref{tab:SeqFirstSmall}. The specification used for the joint estimation
is that based on the HT distribution and an unrestricted $C_{e}$.
\begin{table}[t]
\caption{Factor Loadings (decoupled estimation)}

\begin{centering}
\begin{footnotesize}
\begin{tabularx}{\textwidth}{p{1.3cm}YYYYYYYYYYYYYYY}
\toprule
\midrule
         & \multicolumn{3}{c}{Health Care } & \multicolumn{3}{c}{Managed} & \multicolumn{3}{c}{Semiconductor} & \multicolumn{3}{c}{\multirow{2}[1]{*}{Semiconductors}} \\
          & \multicolumn{3}{c}{Distributors} & \multicolumn{3}{c}{Health Care} & \multicolumn{3}{c}{Materials \& Equip.} & \multicolumn{3}{c}{} \\
\\[-0.2cm]      
          & ABC   & CAH   & MCK   & HUM   & UNH   & WLP   & AMAT  & KLAC  & LRCX  & ADI   & MCHP  & TXN \\
    \midrule 
\\[-0.2cm]    
    $\bar{\mu}$    & 0.140 & 0.130 & 0.124 & 0.096 & 0.137 & 0.125 & 0.136 & 0.134 & 0.131 & 0.149 & 0.157 & 0.158 \\
    $\bar{\beta}$  & 0.979 & 0.976 & 0.990 & 0.953 & 0.986 & 0.966 & 0.974 & 0.960 & 0.979 & 0.955 & 0.978 & 0.985 \\
    $\bar{\alpha}$   & 0.019 & 0.018 & 0.012 & 0.016 & 0.028 & 0.027 & 0.028 & 0.068 & 0.033 & 0.045 & 0.033 & 0.052 \\
    $\nu^{\star}$     & 3.702 & 3.332 & 3.368 & 3.145 & 3.728 & 3.766 & 3.978 & 3.468 & 3.832 & 3.644 & 3.623 & 4.001 \\
\\[-0.2cm]
$\log \lambda$ & 3.339 & 5.454 & 5.075 & 5.900 & 2.053 & 2.974 & 2.976 & 1.334 & 2.616 & 2.600 & 2.981 & 1.481 \\
\\[-0.2cm]
    $p$     & 26    & 26    & 26    & 26    & 26    & 26    & 26    & 26    & 26    & 26    & 26    & 26 \\
    -$\ell (Z_i|U)$   & 4784  & 4505  & 4683  & 4720  & 4637  & 4707  & 4294  & 4257  & 4439  & 4178  & 4064  & 4071 \\
\\[-0.2cm]
    $\text{corr}  (\rho)$ & 0.989 & 0.992 & 0.992 & 0.992 & 0.993 & 0.987 & 0.989 & 0.988 & 0.983 & 0.989 & 0.991 & 0.990 \\
\\[-0.2cm] 
\midrule
\bottomrule
\end{tabularx}
\end{footnotesize}
\par\end{centering}
{\small Note: Estimation results for the $n$ factor loading models
estimated with decoupled estimation, by maximizing $\ell\left(Z_{i}|U\right)$
for $i=1,\ldots,12$, using the best-approximating $t$-distribution.
We report the average values of $\mu$, $\alpha$, $\beta$, as well
as the individual estimates of $\nu_{i}^{*}$ and $\log\lambda_{i}$.
The number of model parameters is represented by $p$, and we report
the negative log-likelihood function, $\ell\left(Z_{i}|U\right)$.
The correlation between the factor loadings estimated with decoupled
estimation and those obtained from joint estimation (using an unrestricted
$C_{e}$ with the HT distribution) are reported in the last row.\label{tab:SeqFirstSmall}}{\small\par}
\end{table}

Table \ref{tab:SeqSecondSmall} presents the estimation results for
the dynamic models of $C_{e}$, estimated with decoupled estimation
by maximizing $\ell(e)$. We have estimated this model with four types
of distributions and four different correlation structures for $C_{e}$.
Each of these models have the form of a Cluster-GARCH model, see \citet{TongHansenArchakov:2024},
and can be estimated as such. For some of the specifications with
sparse $C_{e}$, it is possible to estimated lower-dimensional models
for each sector/subindustry separated as discussed in Section \ref{subsec:Second-Step}. 

For instance with the Sparse-Block $C_{e}$ we can estimate separate
Cluster-GARCH models for the two sectors, with the exception of the
case where the multivariate $t$-distribution is used (because it
implies non-linear dependencies). Similarly with a diagonal block
$C_{e}$, some of the specification can be simplified to (four) dynamic
models with the structure detailed in Theorem \ref{thm:TquiCorr}
and Theorem \ref{thm:HTquiCorr}. The estimated degrees or freedom
are very similar to those we obtained with joint estimation. 

The main difference between joint and decoupled estimation can be
seed from the conditional log-likelihood of $Z$ given $F$, which
is given by $\ell(Z|U)=-\log|\Lambda_{\rho}|+\ell(e)$. While joint
estimation maximizes this quantity, decoupled estimation only does
so indirectly. This can be seen from the values of $\ell(Z|U)$ which
are presented at the bottom of Table \ref{tab:SeqSecondSmall}. This
log-likelihood is about 200 units smaller with decoupled estimation,
which is to be expected because there are 300 fewer parameters being
estimated when decoupled estimation is used. The last row of Table
\ref{tab:SeqSecondSmall} reports the correlation between the estimated
elements of $C=\boldsymbol{\rho}^{\prime}\boldsymbol{\rho}+\Lambda_{\omega}C_{e}\Lambda_{\omega}$,
with joint estimation versus decouple estimation, and once again we
do find very high correlations between the two estimation methods.

\begin{sidewaystable}
\caption{Decoupled Estimation for Dynamic Idiosyncratic Correlation Matrix}

\begin{centering}
\begin{scriptsize}
\begin{tabularx}{\textwidth}{p{1.6cm}YYYYp{-0.1cm}YYYYp{-0.1cm}YYYYp{-0.1cm}YYYYp{-0.1cm}YYYYYYYY}
\toprule 
\midrule
          & \multicolumn{4}{c}{Unrestricted $C_e$}   &       & \multicolumn{4}{c}{Full Block $C_e$}  &       & \multicolumn{4}{c}{Sparse Block $C_e$} &       & \multicolumn{4}{c}{Diagonal Block $C_e$} \\
\cmidrule{2-5}\cmidrule{7-10}\cmidrule{12-15}\cmidrule{17-20}          & {Gauss} & MT    & CT & HT &       & {Gauss} & MT   & CT & HT &       & {Gauss} & MT    & CT & HT &       & {Gauss} & MT     & CT & HT \\
\midrule
\\[-0.2cm]  
    $\bar{\mu}$    & 0.112 & 0.118 & 0.114 & 0.110 &       & 0.174 & 0.207 & 0.210 & 0.476 &       & 0.285 & 0.331 & 0.345 & 0.476 &       & 0.374 & 0.461 & 0.476 & 0.476 \\
          &       &       &       &       &       &       &       &       &       &       &       &       &       &       &       &       &       &       &  \\
    $\bar{\beta}$  & 0.900 & 0.911 & 0.920 & 0.921 &       & 0.850 & 0.947 & 0.991 & 0.992 &       & 0.955 & 0.989 & 0.989 & 0.990 &       & 0.954 & 0.989 & 0.988 & 0.990 \\
          &       &       &       &       &       &       &       &       &       &       &       &       &       &       &       &       &       &       &  \\
    $\bar{\alpha}$ & 0.006 & 0.007 & 0.008 & 0.009 &       & 0.019 & 0.013 & 0.011 & 0.011 &       & 0.031 & 0.017 & 0.018 & 0.018 &       & 0.035 & 0.021 & 0.022 & 0.021 \\
          &       &       &       &       &       &       &       &       &       &       &       &       &       &       &       &       &       &       &  \\
    $\bar{\nu}$    & $\infty$  & 4.994 & 3.867 & 3.541 &       & $\infty$  & 4.972 & 3.862 & 3.535 &       & $\infty$  & 4.974 & 3.865 & 3.537 &       & $\infty$  & 4.899 & 3.887 & 3.549 \\
          &       &       &       &       &       &       &       &       &       &       &       &       &       &       &       &       &       &       &  \\
    $\nu_1$    &       &       &       & 3.717 &       &       &       &       & 3.697 &       &       &       &       & 3.707 &       &       &       &       & 3.716 \\
    $\nu_2$    &       &       & 3.672 & 3.192 &       &       &       & 3.655 & 3.230 &       &       &       & 3.662 & 3.235 &       &       &       & 3.672 & 3.236 \\
    $\nu_3$    &       &       &       & 3.347 &       &       &       &       & 3.311 &       &       &       &       & 3.312 &       &       &       &       & 3.317 \\
    $\nu_4$    &       &       &       & 3.041 &       &       &       &       & 3.117 &       &       &       &       & 3.116 &       &       &       &       & 3.119 \\
    $\nu_5$    &       &       & 3.813 & 3.865 &       &       &       & 3.831 & 3.716 &       &       &       & 3.832 & 3.717 &       &       &       & 3.826 & 3.710 \\
    $\nu_6$    &       &       &       & 3.671 &       &       &       &       & 3.651 &       &       &       &       & 3.654 &       &       &       &       & 3.644 \\
    $\nu_7$    &       &       &       & 3.823 &       &       &       &       & 3.896 &       &       &       &       & 3.900 &       &       &       &       & 3.954 \\
    $\nu_8$    &       &       & 3.926 & 3.317 &       &       &       & 3.904 & 3.406 &       &       &       & 3.905 & 3.407 &       &       &       & 3.938 & 3.429 \\
    $\nu_9$    &       &       &       & 3.782 &       &       &       &       & 3.661 &       &       &       &       & 3.652 &       &       &       &       & 3.622 \\
    $\nu_{10}$   &       &       &       & 3.520 &       &       &       &       & 3.426 &       &       &       &       & 3.424 &       &       &       &       & 3.439 \\
    $\nu_{11}$   &       &       & 4.055 & 3.361 &       &       &       & 4.060 & 3.404 &       &       &       & 4.061 & 3.409 &       &       &       & 4.112 & 3.488 \\
    $\nu_{12}$   &       &       &       & 3.855 &       &       &       &       & 3.906 &       &       &       &       & 3.908 &       &       &       &       & 3.916 \\
          &       &       &       &       &       &       &       &       &       &       &       &       &       &       &       &       &       &       &  \\
    $p$     & 198   & 199   & 202   & 210   &       & 30    & 31    & 34    & 42    &       & 18    & 19    & 22    & 30    &       & 12    & 13    & 16    & 24 \\
\\[-0.3cm]
    -$\ell(e)$   & 64586 & 58771 & 57367 & \textbf{57289} &       & 64701 & 58955 & 57538 & \textbf{57508} &       & 64709 & 58968 & 57551 & \textbf{57520} &       & 65383 & 59656 & 58336 & \textbf{58334} \\
\\[-0.3cm]
    BIC   & 130828 & 119206 & 116423 & \textbf{116334} &       & 129653 & 118169 & \textbf{115360} & 115367 &       & 129569 & 118095 & \textbf{115286} & 115291 &       & 130866 & 119421 & \textbf{116806} & 116869 \\
          &       &       &       &       &       &       &       &       &       &       &       &       &       &       &       &       &       &       &  \\
 -$\ell(Z|U)_{\text{Seq}}$ & 51113 & 46154 & 44750 & 44672 &       & 51228 & 46338 & 44921 & 44891 &       & 51236 & 46351 & 44934 & 44903 &       & 51910 & 47039 & 45719 & 45717 \\
\\[-0.2cm]
 -$\ell(Z|U)_{\text{Joint}}$ & 50987 & 45891 & 44498 & 44455 &       & 51155 & 46059 & 44639 & 44657 &       & 51166 & 46073 & 44659 & 44671 &       & 51860 & 46827 & 45459 & 45547 \\
\\[-0.2cm]  
    $\text{corr} (C)$ & 0.991 & 0.992 & 0.994 & 0.994 &       & 0.986 & 0.991 & 0.994 & 0.994 &       & 0.987 & 0.992 & 0.994 & 0.995 &       & 0.988 & 0.990 & 0.993 & 0.994 \\
\midrule
\bottomrule
\end{tabularx}
\end{scriptsize}
\par\end{centering}
{\small Note: This table presents the second-step estimation results
for the dynamic idiosyncratic correlation matrix $C_{e}$, obtained
by maximizing $\ell(e)$, where $e$ is filtered from the first-step
estimation reported in Table \ref{tab:SeqFirstSmall}. We report the
average values of$\mu$, $\alpha$, $\beta$, and $\nu$, as well
as the individual estimates of $\nu_{i}$. Additionally, we include
the number of model parameters $p$, the negative log-likelihood function
$\ell(e)$, and the Bayesian Information Criterion (BIC). The largest
log-likelihood and smallest BIC values for each structure of $C_{e}$
are highlighted in bold font. At the bottom of the table, we compute
the log-likelihood values for returns, $\ell(Z|U)=-\log\left|\Lambda_{\omega}\right|+\ell(e)$,
under joint and decoupled estimations, respectively, where $\Lambda_{\omega}$
in the decoupled estimation is taken from the first-step estimation
in Table \ref{tab:SeqFirstSmall}. Furthermore, we provide the correlation
between the estimated elements of $C=\boldsymbol{\rho}^{\prime}\boldsymbol{\rho}+\Lambda_{\omega}C_{e}\Lambda_{\omega}$,
with joint estimation versus decouple estimation. \label{tab:SeqSecondSmall}}{\small\par}
\end{sidewaystable}

\subsection{Out-of-Sample Results (Small Universe)}

We next compare the estimated model specifications in terms of their
out-of-sample (OOS) performance. We take the $Z_{t}$ and $U_{t}$
variables obtained from the full sample, and focus on the performance
of the core model. We estimate each of the specification for the core
correlation model with joint estimation and decoupled estimation,
using 12 years of data (2007-2018). The estimated models are then
compared out-of-sample using data from the years 2019 to 2023. 

We report the out-of-sample results in Table \ref{tab:OOSsmall}.
The out-of-sample log-likelihood function, $\ell(Z|U)$, is reported
for 32 model specifications, based on 4 different distributions, 4
different structures for the idiosyncratic correlation matrix, $C_{e}$,
and the two estimation methods (joint and decoupled). 

There are several interesting observations to made from this out-of-sample
comparison. First, the convolution-$t$ distribution with a cluster
structure, CT, is always the best distributional specification, regardless
of the correlation structure and estimation method. This differs from
the in-sample results where the HT specification (with an unrestricted
$C_{e}$) had the largest log-likelihood. Second, the Sparse Block
structure for $C_{e}$ tend to have the best out-of-sample performance
followed by the (non-sparse) block correlation structure. The most
restrictive structure, diagonal block correlation, has the worst out-of-sample
performance, which is consistent with the in-sample results. Third,
decoupled estimation is inferior to joint estimation, despite having
much fewer parameters to estimate. The best out-of-sample log-likelihood
is obtained with joint estimation of the model with the CT distribution
and a sparse block correlation matrix. Overall, the out-of-sample
results are in line with BIC statistics reported in Table \ref{tab:EstSmall}.
\begin{table}[t]
\caption{Out-of-sample Results in Small Universe}

\begin{centering}
\begin{footnotesize}
\begin{tabularx}{\textwidth}{p{3.5cm}YYYYYYYYYY}
\toprule 
\midrule
          & Gauss & MT    & CT    & HT \\
    \midrule
          \\[-0.2cm]          
          \multicolumn{5}{l}{\it Unrestricted $C_e$} \\
          \\[-0.3cm]           
    Joint   & -14040   & -12592 & -12288 & -12303 \\
    Decoupled   & -14049 & -12613 & -12291 & -12312 \\
          \\[-0.2cm]          
          \multicolumn{5}{l}{\it Full Block $C_e$} \\
          \\[-0.3cm] 
    Joint   & -14003   & -12557 & -12214 & -12242 \\
    Decoupled   & -14058 & -12599 & -12271 & -12292 \\
          \\[-0.2cm]          
          \multicolumn{5}{l}{\it Sparse Block $C_e$} \\
          \\[-0.3cm] 
    Joint   &  -14014   & -12551 & \textbf{-12202} & -12230 \\
    Decoupled   & -14055 & -12598 & -12241 & -12256 \\
          \\[-0.2cm]          
          \multicolumn{5}{l}{\it Diagonal Block  $C_e$} \\
           \\[-0.3cm] 
    Joint   & -14289 & -12900 & -12602 & -12603 \\
    Decoupled   & -14299 & -12854 & -12545 & -12582 \\
    \\[-0.2cm]  
\midrule
\bottomrule
\end{tabularx}
\end{footnotesize}

\par\end{centering}
{\small Note: The out-of-sample log-likelihood $\ell(Z|U)$ is reported
for different distributional specifications, different structures
on the idiosyncratic correlation matrix, $C_{e}$, and the two estimation
methods (joint and decoupled). The in-sample (estimation) period spans
the years 2007 to 2018 and the out-of-sample (evaluation) period spans
the years 2019 to 2023. The largest out-of-sample log-likelihood is
highlighted in bold font.\label{tab:OOSsmall}}{\small\par}
\end{table}

\subsection{Analysis of Large Universe}

The large universe has $n=323$ stocks that are distributed over 9
sectors and 63 subindustries. We use subindustries to define the block
structure for the idiosyncratic correlation matrix, $C_{e}$, as we
did in the small universe. The factor variables include the six cross-sectional
factors (FF5+UMD) and the nine sector factors, such that this analysis
is based on $r=15$ factors.

The large number of stocks necessitates the use of decoupled estimation,
as it is not practical to estimated the $232\times15$ dynamic factor
loadings simultaneously in conjunction with a dynamic model for $C_{e}\in\mathbb{R}^{323\times323}$.
We do not attempt to estimate a model for an unrestricted $C_{e}$,
because it has 52,003 distinct correlations. The block structure with
$K=63$ reduce this number to 2,016, which is also unpractically large.
Instead we focus on the two sparse correlation structure for $C_{e}$:
The sparse block correlation matrix, with 296 distinct correlations,
and the diagonal block correlation matrix with 63 distinct correlations. 

We do not consider the multivariate $t$-distribution (MT) in the
large universe for two reasons. One is that it was found to be inferior
to CT and HT in the Small Universe. More importantly, estimation with
the MT distribution is far more involved than those for the three
other distributions, because MT entails non-linear dependencies across
subindustries. With decoupled estimation, the Gaussian, CT, and HT
specifications, have simplified estimation of the dynamic idiosyncratic
correlation matrix, because estimation can be done for each sector
(or subindustry) separately. The reason is that the Gaussian and HT
distributions are composed of independent elements, and the CT has
a cluster structure that is aligned with subindustries.\footnote{The computationally most complex structure is that for the sector
``Industrials'' which has 11 subindustries, which is manageable.} 

Figure \ref{fig:BlockCorrLargeUniverse} presents the sample block
correlation matrix, $C_{e}$, based on the residuals from the $323$
dynamic factor loading models. The correlation matrix is estimated
using the method of moments by \citet{ArchakovHansen:CanonicalBlockMatrix}
with a $63\times63$ block structure based on subindustries. Solid
lines are used to indicate the nine sectors and dashed lines are used
to indicate the GICS Industries of which 42 are represented in the
Large Universe. The names of the nine sectors are give along with
their two-digit GICS code. For better visualization, we have truncated
correlations smaller than 0.05 in absolute value to zero. There are
many nontrivial correlations between subindustries, which renders
the diagonal block correlation structure invalid. However, most non-trivial
correlations are found between subindustries within the same sector,
which is consistent with the sparse block correlation structure, which
can also accommodate the heterogeneity in the correlations within
and between subindustries. Most of large correlations between subindustries
are for subindustries in the same industries (six-digit GICS codes).
Two notable exceptions to this is the Material subindustry, ``Fertilizers
\& Agricultural Chemicals'', which is correlated with several Energy
subindustries, and the ``Consumer Staples Merchandise Retail'' subindustry,
which is correlated with subindustries in the Consumer Discretionary
sector, primarily those in the ``Consumer Discretionary Distribution
\& Retail'' industry group (2550).

The results in Figure \ref{fig:BlockCorrLargeUniverse} was based
on a static block correlation matrix. We present the results for the
dynamic model for $C_{e}$ in Table \ref{tab:SeqEstLarge1} for each
of the nine sectors. For each sector, we report the average estimates
of $\mu$, $\alpha$, $\beta$, and $\nu$, as well as the range of
the estimates for $\nu_{i}$ (the degrees of freedom parameters).
We also report the number of model parameters, $p$, in each sector
model, the negative log-likelihood function, $-\ell(e_{\{j\}})$,
and the corresponding Bayesian Information Criterion (BIC). The largest
log-likelihood and the smallest BIC are highlighted in bold font for
each sector, and the best model specification is. The best model specification
is in all cases provided by the HT distribution with the Sparse Block
Correlation structure. The latter confirms the need to account for
correlations between subindustries in the same sector, as was indicated
by the static sample correlations in Figure \ref{fig:BlockCorrLargeUniverse}.
We note that the estimate of $\bar{\mu}$ is smaller in the specifications
with Sparse Block structure, relative to Diagonal Block structure,
which is consistent with within-subindustry correlations tend to be
much larger than between-subindustry correlations.

\begin{table}[tbh]
\caption{Estimation Results (Large Universe)}

\begin{centering}
\begin{footnotesize}
\begin{tabularx}{\textwidth}{p{1.5cm}YYYYYYYYYYYYYYY}
\toprule 
\midrule
          & $\bar{\mu}$    & $\bar{\beta}$  & $\bar{\alpha}$ & $\bar{\nu}$    & ${\nu}_{\min}$  & ${\nu}_{\max}$  & $p$     & -$\ell(e)$   & BIC \\
    \midrule
    \multicolumn{10}{c}{\textbf{Energy Sector (10)}} \\
    \multicolumn{10}{l}{\textit{Sparse Block $C_e$}} \\
    Gauss & 0.095 & 0.845 & 0.018 &       &       &       & 45    & 98096 & 196569 \\
    CT    & 0.103 & 0.883 & 0.016 & 5.360 & 4.387 & 6.232 & 50    & 93065 & 186548 \\
    HT   & 0.103 & 0.880 & 0.015 & 4.741 & 3.511 & 6.915 & 62    & \textbf{92892} & \textbf{186302} \\
    \multicolumn{10}{l}{\textit{Diagonal Block $C_e$}} \\
    Gauss & 0.222 & 0.994 & 0.012 &       &       &       & 15    & 98792 & 197709 \\
    CT    & 0.255 & 0.993 & 0.016 & 5.368 & 4.383 & 6.196 & 20    & 93821 & 187809 \\
    HT   & 0.249 & 0.994 & 0.014 & 4.746 & 3.559 & 6.997 & 30    & {93674} & {187599} \\   
    \midrule
    \multicolumn{10}{c}{\textbf{Materials Sector (15)}} \\    
    \multicolumn{10}{l}{\textit{Sparse Block $C_e$}} \\
    Gauss & 0.082 & 0.800 & 0.025 &       &       &       & 45    & 146905 & 294186 \\
    CT    & 0.082 & 0.807 & 0.023 & 4.406 & 3.792 & 5.136 & 50    & 134784 & 269985 \\
    HT   & 0.082 & 0.855 & 0.021 & 3.704 & 2.899 & 5.131 & 70    & \textbf{133168} & \textbf{266921} \\
    \multicolumn{10}{l}{\textit{Diagonal Block $C_e$}} \\
    Gauss & 0.163 & 0.939 & 0.019 &       &       &       & 15    & 147415 & 294955 \\
    CT    & 0.208 & 0.958 & 0.026 & 4.416 & 3.823 & 5.148 & 20    & 135351 & 270868 \\
    HT   & 0.196 & 0.975 & 0.020 & 3.705 & 2.892 & 5.171 & 30    & {133787} & {267825} \\    
    \midrule
    \multicolumn{10}{c}{\textbf{Industrials Sector (20)}} \\   
    \multicolumn{10}{l}{\textit{Sparse Block $C_e$}} \\
    Gauss & 0.044 & 0.834 & 0.014 &       &       &       & 198   & 310906 & 623467 \\
    CT    & 0.048 & 0.860 & 0.013 & 4.211 & 3.474 & 5.010 & 209   & 282381 & 566509 \\
    HT   & 0.048 & 0.833 & 0.016 & 3.612 & 2.832 & 4.397 & 251   & \textbf{279157} & \textbf{560412} \\
    \multicolumn{10}{l}{\textit{Diagonal Block $C_e$}} \\
    Gauss & 0.184 & 0.839 & 0.029 &       &       &       & 33    & 313656 & 627588 \\
    CT    & 0.220 & 0.956 & 0.028 & 4.240 & 3.499 & 5.051 & 44    & 285484 & 571336 \\
    HT   & 0.210 & 0.889 & 0.035 & 3.614 & 2.829 & 4.495 & 121   & {282542} & {566096} \\   
    \midrule
    \multicolumn{10}{c}{\textbf{Consumer Discretionary Sector (25)}} \\  
    \multicolumn{10}{l}{\textit{Sparse Block $C_e$}} \\
    Gauss & 0.048 & 0.848 & 0.013 &       &       &       & 165   & 268484 & 538348 \\
    CT    & 0.054 & 0.881 & 0.011 & 3.939 & 3.145 & 4.939 & 175   & 238769 & 479001 \\
    HT   & 0.054 & 0.858 & 0.013 & 3.474 & 2.514 & 6.319 & 212   & \textbf{236154} & \textbf{474080} \\
    \multicolumn{10}{l}{\textit{Diagonal Block $C_e$}} \\
    Gauss & 0.177 & 0.906 & 0.021 &       &       &       & 30    & 270545 & 541341 \\
    CT    & 0.222 & 0.968 & 0.021 & 3.968 & 3.155 & 4.935 & 40    & 241168 & 482671 \\
     HT   & 0.209 & 0.962 & 0.029 & 3.500 & 2.507 & 6.311 & 90    & {238843} & {478439} \\       
    \midrule
    \multicolumn{10}{c}{\textbf{Consumer Staples Sector (30)}} \\  
    \multicolumn{10}{l}{\textit{Sparse Block $C_e$}} \\
    Gauss & 0.096 & 0.770 & 0.029 &       &       &       & 18    & 93581 & 187313 \\
    CT    & 0.102 & 0.902 & 0.026 & 4.054 & 3.889 & 4.220 & 21    & 84125 & 168426 \\
    HT   & 0.102 & 0.909 & 0.026 & 3.367 & 3.090 & 4.035 & 34    & \textbf{83085} & \textbf{166455} \\
    \multicolumn{10}{l}{\textit{Diagonal Block $C_e$}} \\
    Gauss & 0.183 & 0.875 & 0.030 &       &       &       & 9     & 93710 & 187495 \\
    CT    & 0.196 & 0.898 & 0.029 & 4.045 & 3.884 & 4.193 & 12    & 84225 & 168550 \\
    HT    & 0.185 & 0.918 & 0.025 & 3.355 & 3.092 & 3.922 & 21    & {83225} & {166626} \\           
 \midrule    
    \bottomrule
\end{tabularx}
\end{footnotesize}

\par\end{centering}
{\small Note: Table continued on next page.\label{tab:SeqEstLarge1}}{\small\par}
\end{table}

\addtocounter{table}{-1} 

\begin{table}[tbh]
\caption{(cont.)}

\begin{centering}
\begin{footnotesize}
\begin{tabularx}{\textwidth}{p{1.5cm}YYYYYYYYYYYYYYY}
\toprule
\midrule 
          & $\bar{\mu}$    & $\bar{\beta}$  & $\bar{\alpha}$ & $\bar{\nu}$    & ${\nu}_{\min}$  & ${\nu}_{\max}$  & $p$     & -$\ell(e)$   & BIC \\
    \midrule
    \multicolumn{10}{c}{\textbf{Health Care Sector (35)}} \\
    \multicolumn{10}{l}{\textit{Sparse Block $C_e$}} \\
    Gauss & 0.038 & 0.866 & 0.015 &       &       &       & 108   & 275826 & 552554 \\
    CT    & 0.036 & 0.914 & 0.013 & 3.964 & 3.400 & 4.420 & 116   & 246629 & 494228 \\
    HT    & 0.036 & 0.921 & 0.011 & 3.318 & 2.511 & 4.264 & 155   & \textbf{242199} & \textbf{485695} \\
    \multicolumn{10}{l}{\textit{Diagonal Block $C_e$}} \\
    Gauss & 0.139 & 0.910 & 0.024 &       &       &       & 24    & 277265 & 554730 \\
    CT    & 0.137 & 0.966 & 0.021 & 3.990 & 3.458 & 4.443 & 32    & 248323 & 496913 \\
    HT    & 0.129 & 0.957 & 0.021 & 3.340 & 2.518 & 4.212 & 88    & {244037} & {488811} \\
    \midrule
    \multicolumn{10}{c}{\textbf{Financial Sector (40)}} \\
    \multicolumn{10}{l}{\textit{Sparse Block $C_e$}} \\
    Gauss & 0.042 & 0.937 & 0.012 &       &       &       & 165   & 316545 & 634470 \\
    CT    & 0.045 & 0.922 & 0.016 & 4.536 & 3.671 & 5.272 & 175   & 289209 & 579880 \\
    HT    & 0.045 & 0.919 & 0.019 & 3.935 & 3.014 & 6.127 & 220   & \textbf{287231} & \textbf{576301} \\
    \multicolumn{10}{l}{\textit{Diagonal Block $C_e$}} \\
    Gauss & 0.166 & 0.939 & 0.033 &       &       &       & 30    & 320172 & 640594 \\
    CT    & 0.184 & 0.968 & 0.036 & 4.498 & 3.693 & 5.063 & 40    & 293115 & 586564 \\
    HT    & 0.171 & 0.957 & 0.041 & 3.842 & 3.005 & 5.589 & 90    & {291676} & {584105} \\
    \midrule
    \multicolumn{10}{c}{\textbf{Information Technology Sector (45)}} \\
    \multicolumn{10}{l}{\textit{Sparse Block $C_e$}} \\
    Gauss & 0.041 & 0.830 & 0.014 &       &       &       & 135   & 269981 & 541091 \\
    CT    & 0.043 & 0.868 & 0.011 & 3.810 & 3.122 & 4.942 & 144   & 238939 & 479083 \\
    HT    & 0.043 & 0.899 & 0.011 & 3.282 & 2.658 & 4.981 & 181   & \textbf{233593} & \textbf{468699} \\
    \multicolumn{10}{l}{\textit{Diagonal Block $C_e$}} \\
    Gauss & 0.119 & 0.949 & 0.012 &       &       &       & 27    & 272230 & 544687 \\
    CT    & 0.146 & 0.979 & 0.015 & 3.842 & 3.186 & 4.953 & 36    & 241795 & 483892 \\
    HT    & 0.137 & 0.968 & 0.017 & 3.288 & 2.655 & 4.702 & 72    & {236975} & {474552} \\
    \midrule
    \multicolumn{10}{c}{\textbf{Utilities Sector (55)}} \\
    \multicolumn{10}{l}{\textit{Sparse Block $C_e$}} \\
    Gauss & 0.084 & 0.921 & 0.036 &       &       &       & 9     & 100548 & 201171 \\
    CT    & 0.091 & 0.975 & 0.035 & 6.969 & 6.958 & 6.979 & 11    & 96396 & 192884 \\
    HT    & 0.091 & 0.945 & 0.053 & 5.370 & 4.293 & 6.761 & 26    & \textbf{96080} & \textbf{192377} \\
    \multicolumn{10}{l}{\textit{Diagonal Block $C_e$}} \\
    Gauss & 0.109 & 0.940 & 0.029 &       &       &       & 6     & 101197 & 202444 \\
    CT    & 0.118 & 0.971 & 0.042 & 6.878 & 6.877 & 6.879 & 8     & 97045 & 194157 \\
    HT    & 0.122 & 0.954 & 0.057 & 5.299 & 4.158 & 6.721 & 24    & {96871} & {193943} \\
        \midrule
    \bottomrule
\end{tabularx}
\end{footnotesize}

\par\end{centering}
{\small Note: The estimation results for the dynamic idiosyncratic
correlation matrix $C_{e}$ in the large universe, obtained by maximizing
$\ell(e)$. For each of the nine sectors. We report the average values
of $\mu$, $\alpha$, $\beta$, and $\nu$, as well as the range of
the estimates of degrees of freedom. Additionally, the number of model
parameters $p$, the negative log-likelihood function $\ell(e_{\{j\}})$,
and the Bayesian Information Criterion (BIC) are included. The largest
log-likelihood and smallest BIC values for each structure of $C_{e}$
are highlighted in bold font.}{\small\par}
\end{table}

Figure \ref{fig:DynamicBlockCorrLarge} presents the time series of
various correlations derived from the dynamic idiosyncratic correlation
matrix $C_{e}$ in large universe, estimated with a sparse block correlation
matrix, $C_{e}$, and the HT distribution. Most non-trivial idiosyncratic
correlations are for subindustries in the same industry, see Figure
\ref{fig:BlockCorrLargeUniverse}. So, for nine industries (one from
each sector) we selected a pair of subindustries. Their time series
of within-subindustry and between-subindustries correlations are displayed
in Figure \ref{fig:DynamicBlockCorrLarge}. Many of these time series
have strong and persistent variation e can find a very evident time
varying pattern in each subfigures, which means it is very important
to consider the dynamics in idiosyncratic correlation matrix $C_{e}$. 

Table \ref{tab:OOSLarge} conduct the out-of-sample analysis. As the
same in small universe, we estimate all models using data from 2007
to 2018 and evaluate the estimated models using out-of-sample data
from 2019 to 2023. The out-of-sample results are largely consistent
with the in-sample results. The HT distribution with a sparse block
structure for $C_{e}$ provides the best performance in all sectors,
which the exception of the Energy sector where CT (which permits nonlinear
dependencies with subindustries) outperform the HT distribution. The
out-of-sample results show that the in-sample results in Table \ref{tab:SeqEstLarge1}
are not driven by overfitting.

\begin{figure}[ph]
\centering
\centering{}\includegraphics[width=1\textwidth]{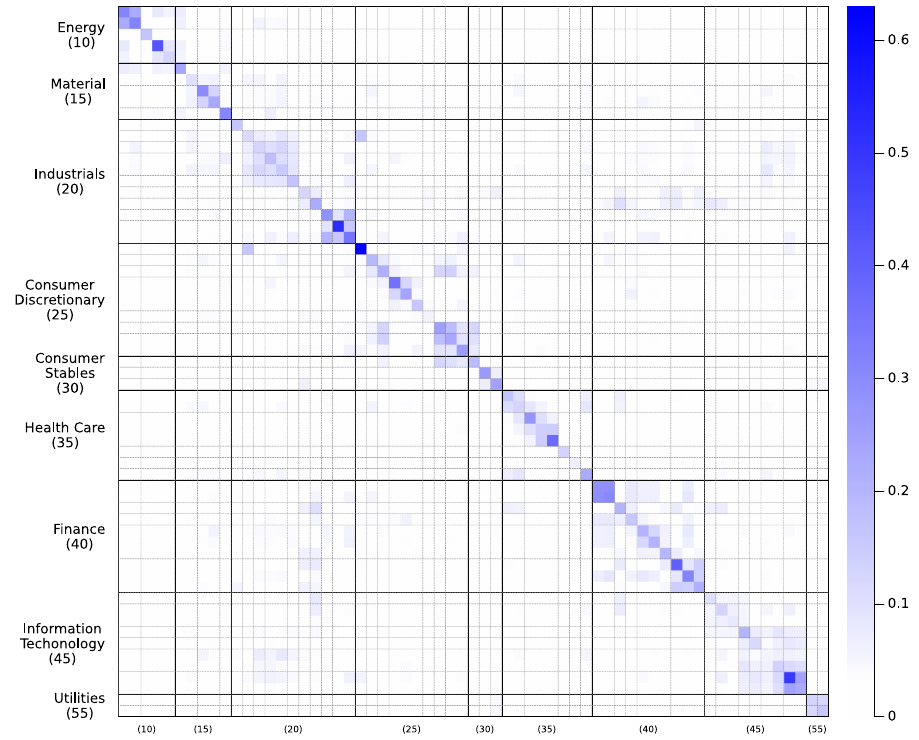}\caption{{\small The sample estimate of $C_{e}$ with a block structure defined
by subindustry. The nine sectors are indicated with the black lines
and industries are indicated with dashed lines. The block correlation
matrix is estimated from the residuals, $\hat{e}_{t}$, using the
method of \citet{ArchakovHansen:CanonicalBlockMatrix}. For better
visualization, correlations smaller than 0.05 in absolute value are
truncated to zero.\label{fig:BlockCorrLargeUniverse}}}
\end{figure}

\begin{figure}[ph]
\centering
\centering{}\includegraphics[width=1\textwidth]{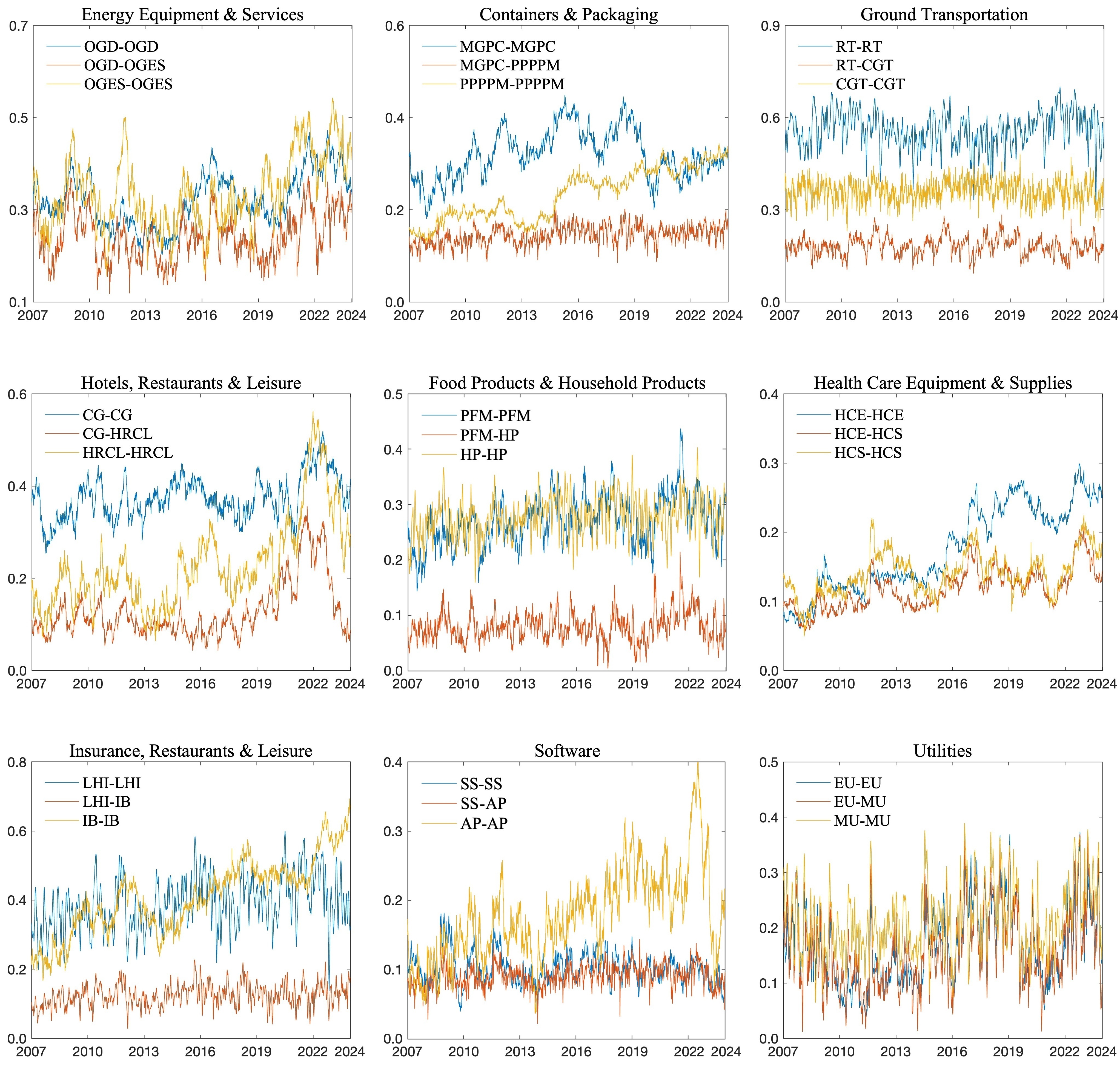}\caption{{\small Some time series for the idiosyncratic correlations in the
large universe, estimated with a sparse block structure and the HT
distribution. For nine industries (four digits) we selected a pair
of subindustries, and present their within- and between-subindustry
correlations. Subindustries are labeled with the first letters in
their names (e.g. ``OGD'' refers to the ``Oil \& Gas Drilling''
subindustry. \label{fig:DynamicBlockCorrLarge}}}
\end{figure}

\begin{table}
\caption{Out-of-sample Results in Large Universe}

\begin{centering}
\begin{footnotesize}
\begin{tabularx}{\textwidth}{p{3.8cm}YYYp{0cm}YYYYYY}
\toprule
\midrule 
          & \multicolumn{3}{c}{Sparse Block $C_e$} &       & \multicolumn{3}{c}{Diagonal Block $C_e$} \\
\cmidrule{2-4}\cmidrule{6-8}    Sector & \multicolumn{1}{c}{Gauss} & \multicolumn{1}{c}{CT} & \multicolumn{1}{c}{HT} &       & \multicolumn{1}{c}{Gauss} & \multicolumn{1}{c}{CT} & \multicolumn{1}{c}{HT} \\
    \midrule
    \\[-0.2cm]          
    Energy & -28378  & \textbf{-26851} & -26884 &       & -28584 & -27103 & -27132 \\
    \\[-0.4cm]      
    Materials & -43471  & -39734 & \textbf{-39498} &       & -43584 & -39887 & -39655 \\
    \\[-0.4cm]     
     Industrials & -90031  & -81666 & \textbf{-81301} &       & -90841 & -82591 & -82314 \\
    \\[-0.4cm]     
    Consumer Discretionary & -77730  & -69497 & \textbf{-69267} &       & -78353 & -70282 & -70070 \\
    \\[-0.4cm]     
    Consumer Staples & -28190  & -24983 & \textbf{-24798} &       & -28247 & -25032 & -24863 \\
        \\[-0.4cm] 
    Health Care & -81218  & -71819 & \textbf{-71037} &       & -81477 & -72216 & -71450 \\
            \\[-0.4cm] 
    Financial & -93339  & -83056 & \textbf{-82745} &       & -94066 & -84246 & -84125 \\
                \\[-0.4cm] 
     Information Technology & -76309  & -66810 & \textbf{-65709} &       & -77112 & -67784 & -66846 \\
                    \\[-0.4cm] 
    Utilities & -29539  & -28252 & \textbf{-28200} &       & -29771 & -28540 & -28498 \\
    \\[-0.2cm]  
\midrule
\bottomrule
\end{tabularx}
\end{footnotesize}

\par\end{centering}
{\small Note: Out-of-sample log-likelihoods, $\ell(e)$, for for six
model-specifications (three distributions and two structures for $C_{e}$)
for each of the nine sectors. In-sample estimation is based on data
from 2007 to 2018 and the out-of-sample period spans the five years,
from 2019 to 2023. The largest log-likelihood in each row is highlighted
in bold font.\label{tab:OOSLarge}}{\small\par}
\end{table}

\section{Conclusion\label{sec:Conclusion}}

In this paper, we introduced a novel dynamic factor correlation model
with a variation-free parametrization of factor loadings. To the best
of our knowledge, this is the first multivariate GARCH model that
simultaneously can accommodate dynamic factor structures, time-varying,
heavy-tailed distributions, and dependent idiosyncratic shocks which
time-varying correlations. The proposed framework incorporates a score-driven
approach to jointly estimate dynamic factor loadings and the idiosyncratic
correlation matrix. A decoupled estimation strategy make it possible
to scale the model to high dimensions, as illustrated in our empirical
application. Our novel parametrization of factor loadings facilitates
simple imposition of sparsity constraints and simplifies the dynamic
modeling of these.

The empirical applications demonstrate the flexibility and effectiveness
of the proposed model. The analysis of a small universe with 12 assets
allowed for comparisons across different specifications and estimation
methods, revealing the advantages of a convolution-$t$ distribution
with a block structure in the idiosyncratic correlation matrix. Our
large universe application involve 323 assets from the S\&P 500 and
it demonstrate the scalability of the model. The empirical results
underscored the importance of permitting some dependence in the idiosyncratic
correlation matrix. We found that the sparse block structure -- partitioned
by sectors and subindustries -- provides the best fit according to
likelihood-based criteria. Similarly, we found that Convolution-$t$
distributions outperform the standard specifications with Gaussian
and multivariate $t$-distributions. There was strong evidence for
time variation in the factor loadings, which can explain dynamic shifts
in asset return dependencies. The empirical evidence for nontrivial
correlation between idiosyncratic shocks is strong, but the nontrivial
correlations are primarily found between stocks in the same sector.
These observations a consistent with the structure of a sparse block
correlation matrix, which was also the structure that had the best
out-of-sample performance in all comparisons. 

\bibliographystyle{apalike}
\bibliography{prh}

\clearpage{}

\appendix

\section{Appendix of Proofs}

\subsection*{A Novel Parametrization of Factor Loadings\label{sec:NewParaCorr}}

\noindent \textbf{Proof of Theorem \ref{thm:CorrTau}.} Let $C^{\star}={\rm cor}\left((Z_{i},U^{\prime})^{\prime}\right)$,
then $C^{\star}=I+H$ where 
\[
H=\left[\begin{array}{cc}
0 & \rho_{i}^{\prime}\\
\rho_{i} & 0
\end{array}\right]\in\mathbb{R}^{(r+1)\times(r+1)}.
\]
Since the largest eigenvalue of $H^{\prime}H$ is $\rho_{i}^{\prime}\rho_{i}$,
we have the following expression for the matrix logarithm
\begin{align*}
\log(C^{\star}) & =\sum_{k=1}^{\infty}(-1)^{k+1}\frac{H^{k}}{k}=-\sum_{k=1}^{\infty}\frac{H^{2k}}{2k}+\sum_{k=1}^{\infty}\frac{H^{2k-1}}{2k-1},
\end{align*}
where 
\[
H^{2k}=\left[\begin{array}{cc}
(\rho_{i}^{\prime}\rho_{i})^{k} & 0\\
0 & (\rho_{i}\rho_{i}^{\prime})^{k}
\end{array}\right]\qquad\text{and}\qquad H^{2k-1}=\left[\begin{array}{cc}
0 & (\rho_{i}^{\prime}\rho_{i})^{k-1}\rho_{i}^{\prime}\\
\rho_{i}(\rho_{i}^{\prime}\rho_{i})^{k-1} & 0
\end{array}\right].
\]
So, the diagonal blocks of $\log(C^{\star})$ are given by $-\sum_{k=1}^{\infty}\frac{1}{2k}(\rho_{i}^{\prime}\rho_{i})^{k}=\frac{1}{2}\log\left(1-\rho_{i}^{\prime}\rho_{i}\right)\in\mathbb{R}$
and 
\[
-\sum_{k=1}^{\infty}\frac{1}{2k}\left(\rho_{i}\rho_{i}^{\prime}\right)^{k}=-\frac{1}{2\rho_{i}^{\prime}\rho_{i}}\left[\sum_{k=1}^{\infty}\frac{1}{k}\left(\rho_{i}\rho_{i}^{\prime}\right)^{k}\right]\left(\rho\rho^{\prime}\right)=\frac{\log\left(1-\rho_{i}^{\prime}\rho_{i}\right)}{2\rho_{i}^{\prime}\rho_{i}}\left(\rho_{i}\rho_{i}^{\prime}\right)\in\mathbb{R}^{r\times r}.
\]
The off-diagonal blocks are zero if $\rho_{i}^{\prime}\rho_{i}=0$.
Otherwise, if $\rho_{i}^{\prime}\rho_{i}\neq0$, we have 
\begin{align*}
\tau_{i}^{\prime}=\sum_{k=1}^{\infty}\frac{\left(\rho_{i}^{\prime}\rho_{i}\right)^{k-1}}{2k-1}\rho_{i}^{\prime} & =\tfrac{1}{\sqrt{\rho_{i}^{\prime}\rho_{i}}}\left[\sum_{k=1}^{\infty}\frac{\left(\sqrt{\rho_{i}^{\prime}\rho_{i}}\right)^{2k-1}}{2k-1}\right]\rho_{i}^{\prime}\\
 & =\tfrac{1}{\sqrt{\rho_{i}^{\prime}\rho_{i}}}\left[\tfrac{1}{2}\ensuremath{\log\left(\tfrac{1+\sqrt{\rho_{i}^{\prime}\rho_{i}}}{1-\sqrt{\rho_{i}^{\prime}\rho_{i}}}\right)}\right]\rho_{i}^{\prime}=\tfrac{1}{\sqrt{\rho_{i}^{\prime}\rho_{i}}}\left[\operatorname{artanh}(\sqrt{\rho_{i}^{\prime}\rho_{i}})\right]\rho_{i}^{\prime},
\end{align*}
which proves the expression for $\tau_{i}$ (\ref{eq:Tau_rho}). 

Next, we observed that $\tau_{i}^{\prime}\tau_{i}=\operatorname{artanh}^{2}(\sqrt{\rho_{i}^{\prime}\rho_{i}})$
such that $\sqrt{\tau_{i}^{\prime}\tau_{i}}=\operatorname{artanh}(\sqrt{\rho_{i}^{\prime}\rho_{i}})$
is the Fisher transformation of $\sqrt{\rho_{i}^{\prime}\rho_{i}}$,
which has inverse
\[
\sqrt{\rho_{i}^{\prime}\rho_{i}}=\operatorname{tanh}(\sqrt{\tau_{i}^{\prime}\tau_{i}})=\tfrac{e^{2\sqrt{\tau_{i}^{\prime}\tau_{i}}}-1}{e^{2\sqrt{\tau_{i}^{\prime}\tau_{i}}}+1}.
\]
So, we can rewrite the expression $\tau_{i}=\tfrac{1}{\sqrt{\rho_{i}^{\prime}\rho_{i}}}\sqrt{\tau_{i}^{\prime}\tau_{i}}\rho_{i}$.
which proves the expression for the inverse mapping, 
\[
\rho_{i}=\sqrt{\frac{\rho_{i}^{\prime}\rho_{i}}{\tau_{i}^{\prime}\tau_{i}}}\times\tau_{i}=\frac{\tanh(\sqrt{\tau_{i}^{\prime}\tau_{i}})}{\sqrt{\tau_{i}^{\prime}\tau_{i}}}\times\tau_{i}.
\]
The corresponding Jacobian is given by
\begin{eqnarray*}
\frac{\partial\rho_{i}}{\partial\tau_{i}^{\prime}} & = & \sqrt{\frac{\rho_{i}^{\prime}\rho_{i}}{\tau_{i}^{\prime}\tau_{i}}}I_{r}+\frac{(1-\rho_{i}^{\prime}\rho_{i})\sqrt{\tau_{i}^{\prime}\tau_{i}}-\sqrt{\rho_{i}^{\prime}\rho_{i}}}{\tau_{i}^{\prime}\tau_{i}}(\tau_{i}^{\prime}\tau_{i})^{-1/2}\tau_{i}\tau_{i}^{\prime}\\
 & = & \sqrt{\frac{\rho_{i}^{\prime}\rho_{i}}{\tau_{i}^{\prime}\tau_{i}}}(I_{r}-P_{\tau_{i}})+(1-\rho_{i}^{\prime}\rho_{i})P_{\tau_{i}},
\end{eqnarray*}
where we used that $\partial\sqrt{\rho_{i}^{\prime}\rho_{i}}/\partial\sqrt{\tau_{i}^{\prime}\tau_{i}}=\partial\tanh(\sqrt{\tau_{i}^{\prime}\tau_{i}})/\partial\sqrt{\tau_{i}^{\prime}\tau_{i}}=1-\tanh^{2}(\sqrt{\tau_{i}^{\prime}\tau_{i}})=1-\rho_{i}^{\prime}\rho_{i}$.\hfill{}$\square$

\subsection*{Proof of Theorem \ref{thm:JointScore}\label{sec:Proof-of-Theorem1}}

We know that $Z|U\sim\mathrm{CT}_{\boldsymbol{m},\boldsymbol{\nu}}^{{\rm std}}(\mu,\Xi)$
where $\mu=\boldsymbol{\rho}^{\prime}U$ and $\Xi=\Lambda_{\omega}C_{e}^{1/2}$.
We have the form of the score $\nabla_{\mu}$ and $\nabla_{\Xi}=\nabla_{{\rm vec}(\Xi)}$,
such that
\[
\nabla_{\xi}=\Theta^{\prime}\left(\begin{array}{c}
\begin{array}{c}
\nabla_{\mu}\\
\nabla_{\Xi}
\end{array}\end{array}\right),\quad\mathcal{I}_{\xi}=\Theta^{\prime}\left(\begin{array}{cc}
\mathcal{I}_{\mu} & 0\\
0 & \mathcal{I}_{\Xi}
\end{array}\right)\Theta,
\]
where $\Theta=\frac{\partial\theta}{\partial\xi^{\prime}}$ with $\theta=\left[\mu^{\prime},{\rm vec}\left(\Xi\right)^{\prime}\right]^{\prime}$.
And by defining
\[
\tilde{\xi}\equiv\left(\begin{array}{c}
\mu\\
\sigma\\
\eta
\end{array}\right)=\left[\begin{array}{cc}
K_{2n} & \bm{0}\\
\bm{0} & I_{n(n-1)/2}
\end{array}\right]\xi\quad{\rm where}\quad\xi=\left(\begin{array}{c}
{\rm vec}\left(\mu,\sigma\right)^{\prime}\\
\eta
\end{array}\right),
\]
where $K_{2n}$ is the communication matrix, and we now have
\begin{align*}
\Theta & =\frac{\partial\theta}{\partial\xi^{\prime}}=\frac{\partial\theta}{\partial\tilde{\xi}^{\prime}}\frac{\partial\tilde{\xi}}{\partial\xi^{\prime}}=\left[\begin{array}{ccc}
I_{n} & \bm{0} & \bm{0}\\
\bm{0} & \frac{\partial{\rm vec}\left(\Xi\right)}{\partial\sigma^{\prime}} & \frac{\partial{\rm vec}\left(\Xi\right)}{\partial\eta^{\prime}}
\end{array}\right]\left[\begin{array}{cc}
K_{2n} & \bm{0}\\
\bm{0} & I_{n(n-1)/2}
\end{array}\right],
\end{align*}
where
\begin{align*}
\frac{\partial{\rm vec}\left(\Xi\right)}{\partial\sigma^{\prime}} & =\frac{\partial{\rm vec}\left(\Lambda_{\omega}C_{e}^{1/2}\right)}{\partial{\rm vec}\left(\Lambda_{\omega}\right)^{\prime}}\frac{\partial{\rm vec}\left(\Lambda_{\omega}\right)}{\partial{\rm diag}\left(\Lambda_{\omega}\right)^{\prime}}=\left(C_{e}^{1/2}\otimes I_{n}\right)E_{d}^{\prime},
\end{align*}
and
\[
\frac{\partial{\rm vec}\left(\Xi\right)}{\partial\eta^{\prime}}=\frac{\partial{\rm vec}(\Lambda_{\rho}C_{e}^{1/2})}{\partial{\rm vec}(C_{e}^{1/2})^{\prime}}\frac{\partial{\rm vec}(C_{e}^{1/2})}{\partial{\rm vec}(C_{e})^{\prime}}\frac{\partial{\rm vec}(C_{e}^{1/2})}{\partial\gamma^{\prime}}B=\left(I_{n}\otimes\Lambda_{\omega}\right)\left(C_{e}^{1/2}\oplus I_{n}\right)^{-1}\frac{\partial{\rm vec}(C_{e}^{1/2})}{\partial\gamma^{\prime}}B,
\]
and from \citet[Proposition 3]{ArchakovHansen:Correlation}, we have
\[
\frac{\partial{\rm vec}(C_{e}^{1/2})}{\partial\gamma^{\prime}}=\left(E_{l}+E_{u}\right)^{\prime}E_{l}\left(I-\Gamma E_{d}^{\prime}\left(E_{d}\Gamma E_{d}^{\prime}\right)^{-1}E_{d}\right)\Gamma\left(E_{l}+E_{u}\right)^{\prime},
\]
which uses the fact that $\partial\operatorname{vec}(C)/\partial\operatorname{vecl}(C)=E_{l}+E_{u}$,
where $E_{l},E_{u},E_{d}$ are elimination matrices, and the expression
$\Gamma=\partial\mathrm{vec}(C)/\partial\mathrm{vec}(\log C)^{\prime}$
is given in \citet{LintonMcCrorie:1995}.

As for the matrix $M$, we have
\[
M=\frac{\partial{\rm vec}\left(\mu,\sigma\right)^{\prime}}{\partial\left[{\rm vec}\left(\mu,\sigma\right)\right]^{\prime}}\frac{\partial{\rm vec}\left(\mu,\sigma\right)}{\partial\tau^{\prime}}=K_{n2}\left[\begin{array}{c}
\frac{\partial\mu}{\partial\tau^{\prime}}\\
\frac{\partial\sigma}{\partial\tau^{\prime}}
\end{array}\right]
\]
with $\bm{\tau}=\left(\tau_{1},\tau_{2},\ldots,\tau_{n}\right)\in\mathbb{R}^{r\times n}$
and $\tau={\rm vec}\left(\bm{\tau}\right)$, where
\begin{align*}
\frac{\partial\mu}{\partial\tau^{\prime}} & =\frac{\partial\mu}{\partial{\rm vec}(\bm{\rho}^{\prime})^{\prime}}\frac{\partial{\rm vec}(\bm{\rho}^{\prime})}{\partial{\rm vec}(\bm{\tau}^{\prime})^{\prime}}\frac{\partial{\rm vec}(\bm{\tau}^{\prime})}{\partial\tau^{\prime}},\quad\\
\frac{\partial\mu}{\partial{\rm vec}(\bm{\rho}^{\prime})^{\prime}} & =\frac{\partial{\rm vec}(\bm{\rho}^{\prime}U)}{\partial{\rm vec}(\bm{\rho}^{\prime})^{\prime}}=\left(U^{\prime}\otimes I_{n}\right)\\
\frac{\partial{\rm vec}(\bm{\rho}^{\prime})}{\partial{\rm vec}(\bm{\tau}^{\prime})^{\prime}} & =\sum_{i=1}^{n}\frac{\partial\left(I_{d}\otimes P_{i}\right)\rho_{i}}{\partial\left[\left(I_{d}\otimes P_{i}\right)\tau_{i}\right]^{\prime}}=\sum_{i=1}^{n}\left(I_{d}\otimes P_{i}\right)J_{i}\left(I_{d}\otimes P_{i}^{\prime}\right),\quad J_{i}=\frac{\partial\rho_{i}}{\partial\tau_{i}^{\prime}}\\
\frac{\partial{\rm vec}(\bm{\rho}^{\prime})}{\partial\tau^{\prime}} & =\sum_{i=1}^{n}\left(I_{d}\otimes P_{i}\right)J_{i}\left(I_{d}\otimes P_{i}^{\prime}\right)K_{dn}=\sum_{i=1}^{n}\left(I_{d}\otimes P_{i}\right)J_{i}\left(P_{i}^{\prime}\otimes I_{d}\right)\\
\frac{\partial\mu}{\partial\tau^{\prime}} & =\left(U^{\prime}\otimes I_{n}\right)\sum_{i=1}^{n}\left(I_{d}\otimes P_{i}\right)J_{i}\left(P_{i}^{\prime}\otimes I_{d}\right)
\end{align*}
where $P_{i}$ is the $i$-th column of identity matrix $I_{n}$.
Now using
\begin{align*}
\frac{\partial{\rm {\rm diag}}(\Lambda_{\omega})}{\partial{\rm vec}(\bm{\rho}^{\prime})^{\prime}} & =\frac{\partial{\rm {\rm diag}}(\Lambda_{\omega})}{\partial{\rm {\rm diag}}(\Lambda_{\omega}^{2})^{\prime}}\frac{\partial{\rm {\rm diag}}(\Lambda_{\omega}^{2})}{\partial{\rm {\rm diag}}(\bm{\rho}^{\prime}\bm{\rho}^ {})^{\prime}}\frac{\partial{\rm {\rm diag}}(\bm{\rho}^{\prime}\bm{\rho})}{\partial{\rm vec}(\bm{\rho}^{\prime}\bm{\rho}^ {})^{\prime}}\frac{\partial{\rm vec}(\bm{\rho}^{\prime}\bm{\rho})}{\partial{\rm vec}(\bm{\rho}^{\prime})^{\prime}}=-\tfrac{1}{2}\text{\ensuremath{\Lambda_{\omega}^{-1}}}E_{d}\ensuremath{\left(I_{n^{2}}+K_{n}\right)\left(\bm{\rho}^{\prime}\otimes I_{n}\right),}
\end{align*}
where ${\rm {\rm diag}}(\Lambda_{\omega}^{2})=\iota_{n}-{\rm diag}(\bm{\rho}^{\prime}\bm{\rho})$,
we find
\begin{align*}
\frac{\partial\sigma}{\partial\tau^{\prime}} & =\frac{\partial{\rm diag}(\Lambda_{\omega})}{\partial\tau^{\prime}}=-\tfrac{1}{2}\text{\ensuremath{\Lambda_{\rho}^{-1}}}E_{d}\ensuremath{\left(I_{n^{2}}+K_{n}\right)\left(\bm{\rho}^{\prime}\otimes I_{n}\right)}\frac{\partial{\rm vec}(\bm{\rho}^{\prime})}{\partial\tau^{\prime}}\\
 & =-\tfrac{1}{2}\text{\ensuremath{\Lambda_{\omega}^{-1}}}E_{d}\ensuremath{\left(I_{n^{2}}+K_{n}\right)\left(\bm{\rho}^{\prime}\otimes I_{n}\right)}\sum_{i=1}^{n}\left(I_{d}\otimes P_{i}\right)J_{i}\left(P_{i}^{\prime}\otimes I_{d}\right).
\end{align*}
This completes the proof of Theorem \ref{thm:JointScore}.

\subsection*{Proof of Theorem \ref{thm:SeqDynamicLoadings}\label{sec:Proof-of-Theorem2}}

Define $\mu_{i}=\rho_{i}^{\prime}U$, $\sigma_{i}=\sqrt{1-\rho_{i}^{\prime}\rho_{i}}$,
and the log-likelihood function is given by
\begin{align*}
\ell(Z_{i}|U) & =c_{v}-\log\left(\omega_{i}\right)-\frac{\nu_{i}^{\star}+1}{2}\log\left(1+\frac{e_{i}^{2}}{\nu_{i}^{\star}-2}\right)
\end{align*}
where the interested parameter is $\tau_{i}$. We have
\[
\nabla_{\mu_{i}}=\frac{\partial(Z_{i}|U)}{\partial\mu_{i}},\quad\nabla_{\omega_{i}}=\frac{\partial(Z_{i}|U)}{\partial\omega_{i}},\quad\mathbb{E}\left[\nabla_{\mu_{i}\omega_{i}}\right]=0,
\]
and by defining $W_{i}=\left(\nu_{i}^{\star}+1\right)/(\nu_{i}^{\star}-2+e_{i}^{2})$,
we have 
\begin{align*}
\nabla_{\mu_{i}} & =W_{i}\frac{e_{i}}{\sigma_{i}}\\
\nabla_{\omega_{i}} & =W_{i}\frac{e_{i}^{2}}{\omega_{i}}-\frac{1}{\omega_{i}}\\
\mathcal{I}_{\mu_{i}} & =\mathbb{E}\left(W_{i}^{2}e_{i}^{2}/\omega_{i}^{2}\right)=\tfrac{\left(\nu_{i}^{\star}+1\right)\nu_{i}^{\star}}{\left(\nu_{i}^{\star}+3\right)\left(\nu_{i}^{\star}-2\right)}\frac{1}{\omega_{i}^{2}}\\
\mathcal{I}_{\omega_{i}} & =\frac{1}{\omega_{i}^{2}}\mathbb{E}\left[\left(W_{i}e_{i}^{2}-1\right)^{2}\right]=\tfrac{2\nu_{i}^{\star}}{\nu_{i}^{\star}+3}\frac{1}{\omega_{i}^{2}}.
\end{align*}
We now have
\[
M_{i}\equiv\left[\begin{array}{c}
\frac{\partial\mu_{i}}{\partial\tau{}_{i}^{\prime}}\\
\frac{\partial\omega_{i}}{\partial\tau{}_{i}^{\prime}}
\end{array}\right]=\left[\begin{array}{c}
U^{\prime}J_{i}\\
-\frac{1}{\omega_{i}}\rho_{i}^{\prime}J_{i}
\end{array}\right]=\left[\begin{array}{c}
U^{\prime}\\
-\frac{1}{\omega_{i}}\rho_{i}^{\prime}
\end{array}\right]J_{i},
\]
where we used that
\begin{align*}
\frac{\partial\mu}{\partial\tau{}_{i}^{\prime}} & =\frac{\partial\mu}{\partial\rho_{i}^{\prime}}\frac{\partial\rho_{i}}{\partial\tau{}_{i}^{\prime}}=U^{\prime}J_{i}\\
\frac{\partial\sigma_{i}}{\partial\tau{}_{i}^{\prime}} & =\frac{\partial\omega_{i}^ {}}{\partial\omega_{i}^{2}}\frac{\partial\omega_{i}^{2}}{\partial\left(\rho_{i}^{\prime}\rho_{i}\right)}\frac{\partial\left(\rho_{i}^{\prime}\rho_{i}\right)}{\partial\rho_{i}^{\prime}}\frac{\partial\rho_{i}^{\prime}}{\partial\tau{}_{i}^{\prime}}=-\frac{1}{2\omega_{i}}\left(2\rho_{i}^{\prime}J_{i}\right)=-\frac{1}{\omega_{i}}\rho_{i}^{\prime}J_{i}=-\frac{\rho_{i}^{\prime}J_{i}}{\sqrt{1-\rho_{i}^{\prime}\rho_{i}}},
\end{align*}
and we arrive at the expressions
\[
\nabla_{\tau_{i}}=M_{i}^{\prime}\left(\begin{array}{c}
\nabla_{\mu_{i}}\\
\nabla_{\omega_{i}}
\end{array}\right)=J_{i}\left[\frac{W_{i}e_{i}}{\omega_{i}}U-\frac{W_{i}e_{i}^{2}-1}{\omega_{i}^{2}}\rho_{i}\right],
\]
and
\[
\mathcal{I}_{\tau_{i}}=M_{i}^{\prime}\left(\begin{array}{cc}
\mathcal{I}_{\mu_{i}} & 0\\
0 & \mathcal{I}_{\omega_{i}}
\end{array}\right)M_{i}=J_{i}\left[\tfrac{\left(\nu_{i}^{\star}+1\right)\nu_{i}^{\star}}{\left(\nu_{i}^{\star}+3\right)\left(\nu_{i}^{\star}-2\right)}\frac{UU^{\prime}}{\omega_{i}^{2}}+\tfrac{2\nu_{i}^{\star}}{\nu_{i}^{\star}+3}\frac{1-\omega_{i}^{2}}{\omega_{i}^{4}}\right]J_{i}.
\]
We also have
\begin{align*}
M_{i}^{+} & =\frac{1}{U^{\prime}U\rho_{i}^{\prime}\rho_{i}-U^{\prime}\rho_{i}\rho_{i}^{\prime}U}J_{i}^{-1}\left[\begin{array}{c}
U^{\prime}\\
-\frac{1}{\omega_{i}}\rho_{i}^{\prime}
\end{array}\right]^{\prime}\left[\begin{array}{cc}
\rho_{i}^{\prime}\rho_{i} & \omega_{i}U^{\prime}\rho_{i}\\
\omega_{i}U^{\prime}\rho_{i} & \omega_{i}^{2}U^{\prime}U
\end{array}\right]\\
 & =\frac{1}{U^{\prime}U\rho_{i}^{\prime}\rho_{i}-U^{\prime}\rho_{i}\rho_{i}^{\prime}U}J_{i}^{-1}\left[\begin{array}{cc}
U & -\frac{1}{\omega_{i}}\rho_{i}\end{array}\right]\left[\begin{array}{cc}
\rho_{i}^{\prime}\rho_{i} & \omega_{i}U^{\prime}\rho_{i}\\
\omega_{i}U^{\prime}\rho_{i} & \omega_{i}^{2}U^{\prime}U
\end{array}\right]\\
 & =\frac{1}{U^{\prime}U\rho_{i}^{\prime}\rho_{i}-U^{\prime}\rho_{i}\rho_{i}^{\prime}U}J_{i}^{-1}\left[\begin{array}{cc}
U\rho_{i}^{\prime}\rho_{i}-\rho_{i}U^{\prime}\rho_{i} & \omega_{i}\left(UU^{\prime}\rho_{i}-\rho_{i}U^{\prime}U\right)\end{array}\right]\\
 & =\frac{1}{U^{\prime}U\rho_{i}^{\prime}\rho_{i}-U^{\prime}\rho_{i}\rho_{i}^{\prime}U}J_{i}^{-1}\left[\begin{array}{c}
\rho_{i}^{\prime}\rho_{i}U^{\prime}-\rho_{i}^{\prime}U\rho_{i}^{\prime}\\
\omega_{i}\left(\rho^{\prime}UU^{\prime}-U^{\prime}U\rho_{i}^{\prime}\right)
\end{array}\right]^{\prime}.
\end{align*}
This completes the proof of Theorem \ref{thm:SeqDynamicLoadings}.

\subsection*{Proof of Theorem \ref{thm:TquiCorr}}

For $C$ a equicorrelation matrix with coefficient $\varrho$, we
have $C=(1-\varrho)I_{n}+\varrho\iota_{n}\iota_{n}^{\prime}$, 
\[
C^{-1}=\frac{1}{1-\varrho}\left(I_{n}-\frac{\varrho}{1+(n-1)\varrho}\iota_{n}\iota_{n}^{\prime}\right),
\]
and $|C|=\left[1+(n-1)\varrho\right]\left(1-\varrho\right)^{n-1}$.
It follows that
\[
X^{\prime}C^{-1}X=\frac{X^{\prime}X}{1-\varrho}-\frac{\varrho X^{\prime}\iota_{n}\iota_{n}^{\prime}X}{\left(1-\varrho\right)\left[1+(n-1)\varrho\right]},
\]
such that 
\begin{align*}
\ell & \left(X\right)=c_{\nu}-\frac{1}{2}\left[\log\left(1+\left(n-1\right)\varrho\right)+\left(n-1\right)\log\left(1-\varrho\right)\right]\\
 & \quad-\frac{\nu+n}{2}\log\left(1+\frac{1}{\nu-2}\left(\frac{X^{\prime}X}{1-\varrho}-\frac{\varrho X^{\prime}\iota_{n}\iota_{n}^{\prime}X}{\left(1-\varrho\right)\left[1+(n-1)\varrho\right]}\right)\right).
\end{align*}
Furthermore, 
\[
\frac{\partial\ell\left(X\right)}{\partial\varrho}=-\frac{1}{2}\left[\frac{\left(n-1\right)}{1+\left(n-1\right)\varrho}-\frac{\left(n-1\right)}{1-\varrho}\right]-\frac{1}{2}W\left(\frac{X^{\prime}X}{\left(1-\varrho\right)^{2}}-\frac{\left[1+\left(n-1\right)\varrho^{2}\right]X^{\prime}\iota_{n}\iota_{n}^{\prime}X}{\left(1-\varrho\right)^{2}\left[1+(n-1)\varrho\right]^{2}}\right),
\]
and from \citet[Theorem 3]{TongHansenArchakov:2024}, we have
\begin{align*}
\mathcal{I}_{A} & =\tfrac{1}{4}\frac{\left(3\phi-1\right)}{\left(1+\left(n-1\right)\varrho\right)^{2}}+\tfrac{\phi}{2}\frac{1}{\left(1-\varrho\right)^{2}\left(n-1\right)}+\tfrac{1-\phi}{4}\left[\frac{2}{\left(1+\left(n-1\right)\varrho\right)\left(1-\varrho\right)}-\frac{1}{\left(1-\varrho\right)^{2}}\right]\\
 & =\frac{1}{4}\frac{\left(3\phi-1\right)}{\left(1+\left(n-1\right)\varrho\right)^{2}}+\frac{\phi}{2}\frac{1}{\left(1-\varrho\right)^{2}\left(n-1\right)}+\frac{1-\phi}{4}\left[\frac{1-\left(n+1\right)\varrho}{\left(1+\left(n-1\right)\varrho\right)\left(1-\varrho\right)^{2}}\right],
\end{align*}
such that
\[
\mathcal{I}_{\rho}=\mathcal{I}_{A}\left(n-1\right)^{2}=\frac{1}{4}\frac{\left(3\phi-1\right)\left(n-1\right)^{2}}{\left(1+\left(n-1\right)\varrho\right)^{2}}+\frac{\phi}{2}\frac{\left(n-1\right)}{\left(1-\varrho\right)^{2}}+\frac{1-\phi}{4}\left[\frac{\left(1-\left(n+1\right)\varrho\right)\left(n-1\right)^{2}}{\left(1+\left(n-1\right)\varrho\right)\left(1-\varrho\right)^{2}}\right],
\]
$\nabla_{\eta}=J\nabla_{\varrho}$ and $\mathcal{I}_{\gamma}=J^{2}\mathcal{I}_{\varrho}$,
where $J=\frac{\partial\varrho}{\partial\gamma}=\frac{1}{(1-\varrho)(1+(n-1)\varrho)}$.
This completes the proof of Theorem \ref{thm:TquiCorr}.

\subsection*{Proof of Theorem \ref{thm:HTquiCorr}}

For $e\sim\mathrm{CT}_{n,\nu}^{\mathrm{std}}\left(0,C^{1/2}\right)$,
where $C$ is an equicorrelation matrix, we have
\begin{align*}
\ell(X) & =-\frac{1}{2}\log|C|+\sum_{i=1}^{n}c_{i}-\frac{\nu_{i}+1}{2}\log\left(1+\frac{1}{\nu_{i}-2}V_{i}^{2}\right)\\
 & =-\frac{1}{2}\left[\log\left(1+\left(n-1\right)\varrho\right)+\left(n-1\right)\log\left(1-\varrho\right)\right]\\
 & \quad-\sum_{i=1}^{n}\frac{\nu_{i}+1}{2}\log\left(1+\frac{1}{\nu_{i}-2}V_{i}^{2}\right),
\end{align*}
where $V=C^{-1/2}e$. We also have
\[
C^{-1/2}=\frac{1}{\sqrt{1-\varrho}}I_{n}+\left(\frac{1}{\sqrt{1+(n-1)\rho}}-\frac{1}{\sqrt{1-\rho}}\right)\frac{1}{n}\iota_{n}\iota_{n}^{\prime},
\]
such that
\begin{align*}
V_{i} & =\frac{1}{\sqrt{1-\varrho}}e_{i}+\left(\frac{1}{\sqrt{1+(n-1)\varrho}}-\frac{1}{\sqrt{1-\varrho}}\right)\bar{e}\\
 & =\frac{1}{\sqrt{1-\varrho}}\left(e_{i}-\bar{e}\right)+\frac{1}{\sqrt{1+(n-1)\varrho}}\bar{e},\\
\frac{\partial V_{i}}{\partial\varrho} & =\frac{1}{2}\frac{1}{\left(1-\varrho\right)^{3/2}}\left(e_{i}-\bar{e}\right)-\frac{1}{2}\frac{\left(n-1\right)}{\left(1+(n-1)\varrho\right)^{3/2}}\bar{e}.
\end{align*}
The score is therefore given by 
\[
-2\frac{\partial\ell(X)}{\partial\varrho}=\ensuremath{\left[\frac{(n-1)}{1+(n-1)\varrho}-\frac{(n-1)}{1-\varrho}\right]}+\sum_{i=1}^{n}W_{i}V_{i}\left[\frac{\left(e_{i}-\bar{e}\right)}{\left(1-\rho\right)^{3/2}}-\frac{\left(n-1\right)\bar{e}}{\left(1+(n-1)\varrho\right)^{3/2}}\right].
\]
We can use \citet[Theorem 5]{TongHansenArchakov:2024} to obtain the
following expression for the information matrix,
\begin{align*}
\mathcal{I}_{A}= & \Omega_{A}\left(K_{K}+\Upsilon_{K}^{e}\right)\Omega_{A}+\ensuremath{\tfrac{1}{4}E_{d}^{\prime}\Xi E_{d}}+\ensuremath{\tfrac{1}{2}E_{d}^{\prime}\Theta\Omega_{A}}+\ensuremath{\tfrac{1}{2}\Omega_{A}\Theta^{\prime}E_{d}}\\
= & \tfrac{1}{4}\ensuremath{A^{-2}}\left(1+C_{1}\right)+\tfrac{1}{4}C_{2}+\tfrac{1}{2}C_{3}\ensuremath{A^{-1}},
\end{align*}
where $A=1+\left(n-1\right)\rho$, and
\begin{align*}
C_{1} & =n_{}^{-1}\left(3\bar{\phi}-2-\bar{\psi}_{}\right)+\bar{\psi},\\
C_{2} & =\delta^{-2}n_{}^{-1}\left[3\bar{\phi}_{}-1+\left(\bar{\psi}+1\right)\left(n_{}-1\right)^{-1}\right],\\
C_{3} & =\delta^{-1}n^{-1}\left(\bar{\psi}_{}+2-3\bar{\phi}_{}\right).
\end{align*}
The results now follows from $\mathcal{I}_{\varrho}=\left(n-1\right)^{2}\mathcal{I}_{A}$,
which completes the proof.

\setcounter{table}{0}
\global\long\def\thetable{B.\arabic{table}}%
\newpage{}

\section{Supplementary Material}

\begin{table}[H]
\caption{Estimation of Marginal Volatility Models for Factors and Individual
Stocks}

\begin{centering}
\begin{footnotesize}
\begin{tabularx}{\textwidth}{p{1cm}YYYYYYYYYYY}
\toprule 
\midrule
    \multicolumn{7}{c}{Panel A: Factors} \\
          & $a_0 \ (\times 10^{-4})$    & $a_1$    & $b_0$    & $b_1$    & $b_2$    & $b_3$ \\
        \midrule
    \\[-0.2cm]         
    MKT   & 4.554 & -0.053 & -0.219 & 0.968 & -0.069 & 0.092 \\
    SMB   & -0.556 & -0.016 & -0.151 & 0.983 & -0.026 & 0.082 \\
    HML   & -2.740 & 0.021 & -0.120 & 0.992 & -0.012 & 0.101 \\
    RMW   & 0.968 & 0.011 & -0.084 & 0.993 & 0.014 & 0.056 \\
    CMA   & -0.243 & 0.038 & -0.087 & 0.993 & 0.006 & 0.060 \\
    UMD   & 2.764 & 0.083 & -0.131 & 0.991 & 0.011 & 0.112 \\
    XLB   & 2.255 & -0.015 & -0.112 & 0.986 & -0.049 & 0.063 \\
    XLE   & 2.482 & -0.020 & -0.116 & 0.987 & -0.037 & 0.080 \\
    XLF   & 5.168 & -0.038 & -0.176 & 0.978 & -0.055 & 0.106 \\
    XLI   & 3.381 & -0.016 & -0.156 & 0.978 & -0.056 & 0.073 \\
    XLK   & 5.958 & -0.052 & -0.210 & 0.969 & -0.056 & 0.093 \\
    XLP   & 3.478 & -0.062 & -0.236 & 0.966 & -0.059 & 0.091 \\
    XLU   & 2.902 & -0.026 & -0.179 & 0.976 & -0.027 & 0.091 \\
    XLV   & 4.105 & -0.031 & -0.244 & 0.961 & -0.063 & 0.081 \\
    XLY   & 5.186 & -0.022 & -0.168 & 0.978 & -0.046 & 0.090 \\
    \\[-0.2cm]     
    \midrule
    \multicolumn{7}{c}{Panel B: Individual Stocks} \\
          & $a_0 \ (\times 10^{-4})$    & $a_1$    & $b_0$    & $b_1$    & $b_2$    & $b_3$ \\
    \midrule
    \\[-0.2cm]     
    Mean  & 4.227 & -0.022 & -0.124 & 0.980 & -0.031 & 0.062 \\
    $Q_{\text{5}}$    & 0.084 & -0.063 & -0.287 & 0.950 & -0.050 & 0.022 \\
    $Q_{\text{25}}$   & 2.712 & -0.039 & -0.150 & 0.976 & -0.038 & 0.045 \\
    $Q_{\text{50}}$ & 3.996 & -0.022 & -0.110 & 0.984 & -0.030 & 0.061 \\
    $Q_{\text{75}}$   & 5.625 & -0.004 & -0.075 & 0.989 & -0.023 & 0.078 \\
    $Q_{\text{95}}$   & 8.896 & 0.022 & -0.040 & 0.995 & -0.011 & 0.102 \\
    \\[-0.2cm]  
\midrule
\bottomrule
\end{tabularx}
\end{footnotesize}

\par\end{centering}
{\small Note: Panel A presents the estimation results of the AR(1)-EGARCH
model for factors. Panel B shows the means and selected quantiles
of the cross-sectional estimations for individual stocks in a large
universe.\label{tab:EGARCHest}}{\small\par}
\end{table}

\begin{table}
\caption{Stocks in Empirical Analysis}

\begin{centering}
\begin{footnotesize}
\begin{tabularx}{\textwidth}{p{8cm}p{8cm}}
\toprule 
\midrule
    \multicolumn{1}{c}{Subindustry} & \multicolumn{1}{c}{Symbols} \\
    \midrule
    \multicolumn{2}{c}{\textbf{Energy Sector (10)}} \\
\\[-0.2cm]     
    Oil \& Gas Drilling & HP,NBR,RIG \\
    Oil \& Gas Equipment \& Services & HAL,NOV,SLB \\
    Integrated Oil \& Gas & CVX,STO,XOM \\
    Oil \& Gas Exploration \& Production & CNX,COG,EQT,RRC,SWN \\
    Oil \& Gas Storage \& Transportation & FRO,OKE,WMB \\
\\[-0.2cm] 
    \midrule
    \multicolumn{2}{c}{\textbf{Materials Sector (15)}} \\
\\[-0.2cm] 
    Fertilizers \& Agricultural Chemicals & CF,FMC,MOS \\
    Specialty Chemicals & ALB,ASH,CE,ECL,EMN,IFF,PPG,SHW \\
    Metal, Glass \& Plastic Containers & BLL,CCK,OI \\
    Paper \& Plastic Packaging Products \& Materials & AVY,IP,MWV,PKG,SEE \\
    Steel & ATI,CLF,NUE,STLD,TX,X \\
\\[-0.2cm] 
    \midrule
    \multicolumn{2}{c}{\textbf{Industrials Sector (20)}} \\
\\[-0.2cm] 
    Aerospace \& Defense & BA,GD,HRS,LMT,NOC,TDG,TXT,UTX \\
    Building Products & AOS,BLDR,IR,MAS \\
    Electrical Components \& Equipment & AME,AYI,EMR,ENS,ETN,ROK \\
    Construction Machinery \& Heavy Trans. Equip. & CAT,CMI,MTW,PCAR,TEX,WAB \\
    Industrial Machinery \& Supplies \& Components & DOV,FLS,IEX,ITT,ITW,PH,SNA,SWK,TKR \\
    Trading Companies \& Distributors & AIT,FAST,GWW,URI \\
    Environmental \& Facilities Services & ROL,RSG,SRCL \\
    Human Resource \& Employment Services & ADP,PAYX,RHI \\
    Air Freight \& Logistics & CHRW,EXPD,FDX,UPS \\
    Rail Transportation & CSX,NSC,UNP \\
    Cargo Ground Transportation & JBHT,ODFL,R \\
\\[-0.2cm]
    \midrule
    \multicolumn{2}{c}{\textbf{Consumer Discretionary Sector (25)}} \\
\\[-0.2cm] 
    Homebuilding & DHI,KBH,LEN,LEN,NVR,PHM \\
    Leisure Products & BC,HAS,MAT \\
    Apparel, Accessories \& Luxury Goods & FOSL,HBI,PVH,RL,UA,VFC \\
    Casinos \& Gaming & LVS,MGM,PENN,WYNN \\
    Hotels, Resorts \& Cruise Lines & CCL,EXPE,MAR,RCL \\
    Restaurants & CMG,DPZ,DRI,MCD,SBUX,YUM \\
    Specialized Consumer Services & HRB,SCI,WTW \\
    Broadline Retail & BIG,DDS,JWN,KSS \\
    Apparel Retail & ANF,FL,GPS,ROST,TJX,URBN \\
    Automotive Retail & AAP,AN,AZO,KMX,ORLY \\
\\[-0.2cm]  
    \midrule
    \multicolumn{2}{c}{\textbf{Consumer Staples Sector (30)}} \\
\\[-0.2cm]  
    Consumer Staples Merchandise Retail & COST,DLTR,TGT,WMT \\
    Packaged Foods \& Meats & CAG,CPB,GIS,K,MKC,MKC,SJM \\
    Household Products & CHD,CL,CLX,KMB,PG, \\
\\[-0.2cm]  
\midrule
\end{tabularx}
\end{footnotesize}

\par\end{centering}
{\small Note: Table continues on next page.\label{tab:CompanyName1}}{\small\par}
\end{table}

\addtocounter{table}{-1} 
\begin{table}
\caption{(cont.)}

\begin{centering}
\begin{footnotesize} 
\begin{tabularx}{\textwidth}{p{7cm}p{8cm}}
\toprule
\midrule
    \multicolumn{1}{c}{Subindustry} & \multicolumn{1}{c}{Symbols} \\
    \midrule
    \multicolumn{2}{c}{\textbf{Health Care Sector (35)}} \\
\\[-0.2cm]   
    Health Care Equipment & ABT,BAX,BDX,BSX,EW,MDT,SYK,TFX \\
    Health Care Supplies & ALGN,COO,XRAY \\
    Health Care Distributors & ABC,CAH,HSIC,MCK,PDCO \\
    Health Care Services & CI,CVS,DGX,DVA,LH \\
    Managed Health Care & CNC,HUM,MOH,UNH,WLP \\
    Biotechnology & AMGN,BIIB,GILD,INCY,REGN,VRTX \\
    Pharmaceuticals & BMY,LLY,MRK,NKTR,PFE,PRGO \\
    Life Sciences Tools \& Services & A,BIO,BIO,DHR,MTD,PKI,TECH,TMO,WAT \\
\\[-0.2cm] 
    \midrule    
    \multicolumn{2}{c}{\textbf{Financial Sector (40)}} \\
\\[-0.2cm]     
    Diversified Banks & CMA,FITB,KEY,PNC,USB,WFC \\
    Regional Banks & BBT,FHN,HBAN,MTB,RF,SNV,ZION \\
    Transaction \& Payment Processing Services & FIS,FISV,GPN,JKHY,MA \\
    Consumer Finance & ADS,AXP,COF,SLM \\
    Asset Management \& Custody Banks & AMG,BEN,BK,BLK,FII,NTRS,STT,TROW \\
    Investment Banking \& Brokerage & GS,MS,RJF,SCHW \\
    Financial Exchanges \& Data & CME,FDS,ICE,MKTX,NDAQ \\
    Insurance Brokers & AJG,BRO,MMC \\
    Life \& Health Insurance & AFL,LNC,MET,PFG,PRU,TMK,UNM \\
    Property \& Casualty Insurance & ACGL,AIZ,ALL,CINF,HIG,PGR \\
\\[-0.2cm] 
    \midrule
    \multicolumn{2}{c}{\textbf{Information Technology Sector (45)}} \\
\\[-0.2cm] 
    IT Consulting \& Other Services & ACN,CTSH,IBM,IT,UIS \\
    Application Software & ADBE,ADSK,ANSS,CDNS,CRM,INTU,SNPS,TYL \\
    Systems Software & MSFT,ORCL,SYMC \\
    Communications Equipment & CIEN,CSCO,FFIV,JNPR \\
    Technology Hardware, Storage \& Peripherals & AAPL,HPQ,NTAP,STX,WDC,XRX \\
    Electronic Equipment \& Instruments & CR,TDY,TRMB,ZBRA \\
    Electronic Manufacturing Services & IPGP,JBL,SANM \\
    Semiconductor Materials \& Equipment & AMAT,KLAC,LRCX,TER \\
    Semiconductors & ADI,AMD,INTC,MCHP,MPWR,MU,NVDA,SWKS,TXN \\
\\[-0.2cm] 
    \midrule
    \multicolumn{2}{c}{\textbf{Utilities Sector (55)}} \\
\\[-0.2cm] 
    Electric Utilities & AEP,DUK,ETR,FE,LNT,PNW,PPL,SO,XEL \\
    Multi-Utilities & AEE,CMS,CNP,DTE,ED,NI,SRE,WEC \\
\\[-0.2cm] 
\midrule
\bottomrule
\end{tabularx}
\end{footnotesize}

\par\end{centering}
{\small Note: This table presents the symbols of the companies used
in our empirical analysis, along with the names of their respective
sectors and subindustries.\label{tab:CompanyName2}}{\small\par}
\end{table}

\end{document}